\documentclass[11pt]{article}
\usepackage[margin=1in]{geometry}

\usepackage[numbers,sort]{natbib}

\usepackage{preamble}
\usepackage[dvipsnames]{xcolor}

\newcommand{\brown}[1]{\textcolor{black}{#1}}

\usepackage[normalem]{ulem}

\newcommand{\E}{\mathbb{E}}

\newcommand{\Rplus}{\mathbb{R}_+}

\newcommand{\Prob}{\mathbb{P}}
\newcommand{\iid}{\overset{\text{i.i.d.}}{\sim}}
\newcommand{\indicator}[1]{\mathbbm{1}\!\!\left\{#1\right\}}
\newcommand{\equalSpace}{\quad\,\,}
\newcommand{\dd}{\mathrm{d}}

\newcommand{\Expo}{\mathrm{Expo}}
\newcommand{\Rev}{\mathrm{Rev}}

\usepackage[most]{tcolorbox}
\usepackage{xcolor}

\definecolor{darkgrey}{RGB}{64, 64, 64}
\definecolor{lightgrey}{RGB}{220, 220, 220}
\newcounter{theorempart}[theorem]

\Crefname{theorempart}{Theorem}{Theorems}

\title{Personalised versus Posted Pricing from Samples%
    \thanks{Accepted at the 22nd Conference on Web and Internet Economics (WINE 2026).}}
\author{Pieter Kleer, Johan van Leeuwaarden and Daan Noordenbos\\\
Department of Econometrics and Operations Research, Tilburg University\\
\texttt{\{p.s.kleer,j.s.h.vanleeuwaarden,d.noordenbos\}@tilburguniversity.edu}}
\date{\today}

\begin{document}
\maketitle
\thispagestyle{empty}
\begin{abstract}
Personalised pricing maximises expected revenue from a market but requires detailed information about individual customers. How much of this revenue can be recovered using a simple posted price based on a finite number of samples from the underlying value distribution? We answer this question by maximising the worst-case ratio between the expected revenues of posted and personalised pricing over the fundamental class of $\lambda$-regular value distributions.

Our results reveal a structural transition as a function of $\lambda$. For the class of monotone hazard rate (MHR) distributions, corresponding to $\lambda = 0$, the sample mean is an optimal statistic: the entire sample can be compressed into its average without any loss of revenue. Beyond the MHR class, corresponding to $\lambda > 0$, this property disappears. We show that the sample mean is no longer optimal, revealing that optimal sample-based pricing rules become substantially more intricate. Nevertheless, we show that a remarkably simple order-statistic based pricing rule is asymptotically optimal as the number of samples $n$ grows, achieving the optimal approximation ratio up to a tight error of order $1/n$.

Our analysis combines techniques from probability, approximation theory and optimization, including doubly infinite linear programming, hypergeometric functions, and combinatorial identities involving incomplete Beta functions.
\end{abstract}

\newpage
\pagenumbering{arabic}
\setcounter{page}{1}

\section{Introduction}
Digital platforms increasingly collect detailed information about individual consumers. In principle, such information enables \emph{personalised pricing}, where each customer is charged a price tailored to their willingness to pay. Under perfect information, personalised pricing extracts the entire surplus from a market and therefore achieves the maximum possible revenue. Examples include online retail, travel platforms, digital advertising, and subscription services, where firms routinely use customer-specific information to guide pricing decisions. 

Despite its revenue advantages, personalised pricing remains controversial \cite{seele2021mapping}. Consumers often perceive differential pricing as unfair and regulators increasingly scrutinise its use. Moreover, the information required to implement it may be costly or unavailable. Consequently, many firms continue to rely on a much simpler and more transparent alternative: a single price offered to all customers. This naturally raises the following question: How much of the revenue attainable under perfect personalised pricing can be recovered using a simple posted price?

To answer this question, one must first specify what information is available to the seller.
Classical revenue maximization problems assume complete knowledge of the underlying value distribution of the customers, but this is often deemed unrealistic in practice. 
In this work we take a sample-based perspective, by assuming to have been given $n$ samples from the underlying value distribution. 
The goal is to design a (possibly randomised) pricing rule that uses a given number of samples as input and whose expected revenue is within a multiplicative factor of the revenue obtained by a personalised pricing that charges each customer their maximum willingness to pay.
In the latter, the expected revenue equals the mean of the underlying value distribution.
This sample-based information framework is standard in mechanism design and revenue maximization as a way to model limited information about the distribution. We refer to the related work section for further discussion and other limited information frameworks such as the use of distributionally robust optimization \cite{elmachtoub2021value} for personalised versus posted pricing.

% \jvl{And we are still lacking some exciting motivation for why pushing things beyond basic MHR to regular is a big deal...} \pk{Put a sentence at the bottom trying to address this; see above in green for quotes form other papers.}

% \jvl{And why the reponse to more or increasingle many samples is such a big deal, and that things changes dramatically when we go beyond one sample! Other policy, math challenges, yet relevant!}
% \pk{Will try to emphasise this more in "Contributions"}

Without further assumptions on the valuation distribution, no non-trivial guarantee is possible, i.e., the performance of a posted pricing rule can be arbitrarily bad compared to a personalised pricing rule. 
To obtain meaningful guarantees, we therefore impose structure on the underlying distribution by studying the {well-known class of $\lambda$-regular valuation distributions \cite{mares2011near,ewerhart2013regular,cole2014sample,schweizer2019performance}}. Following \cite{schweizer2019performance}, we say that a distribution with probability density function $f(x)$ and cumulative density function $F(x)$ is $\lambda$-regular if the generalised hazard rate $r_{\lambda}(x) = f(x)/(1-F(x))^{1+\lambda}$ is non-decreasing.  This parametrised class of distributions interpolates nicely between monotone hazard rate (MHR) distributions ($\lambda = 0$), and regular distributions ($\lambda = 1$), two famous classes of distributions in the field of mechanism design. \brown{The class of $\lambda$-regular distributions makes it possible to go beyond the restrictive aspects of monotone hazard rate distributions, such as their inability to accommodate heavier-than-exponential tails; moreover, non-monotone hazard rate value distributions have been observed in practice, see \cite{ewerhart2013regular} and references therein.}
In Section \ref{sec:related-work}, we review notions from the literature that are equivalent to $\lambda$-regularity and discuss further applications.
% In Section \ref{sec:related-work}, we discuss notions appearing in the literature that are equivalent to $\lambda$-regularity and further use cases.
% In Section \ref{sec:related-work} we discuss notions that are equivalent to $\lambda$-regularity that have appeared in the literature.

%\dn{ik laat dit weg} \pk{It might be nice to have still to make the class more tangible at this point and the fact that we talk at the end of Section 1.2 about equivalent definitions, which remains a bit vague if $\lambda$-regularity has not yet been defined. }

\subsection{Contributions}
% \jvl{Also here we really need to stage the extension beyond MHR, and why this is challenging and relevant. Profoundly changing...}
% \pk{I rewrote things so that first $\lambda$-regular, many sample result is pitched. }

Our first main result concerns the class of MHR distributions.
In this setting we obtain a complete characterization of the problem.
We show in Theorem \ref{thm:mhr-pricing} that posting the sample mean of the observed valuations achieves a fraction $(n/(n+1))^{n+1}$ of the personalised-pricing revenue, and that no deterministic or randomised pricing policy can do better.
Thus, for MHR distributions the sample mean emerges as the optimal way to convert sample information into a posted price. As the number of samples grows, the guarantee converges to $1/e$, which is the best possible guarantee with full information for MHR distributions \cite[Lemma 9]{cole2017applications}.
\brown{The exact result of Theorem \ref{thm:mhr-pricing} captures simultaneously the single- and many-sample regimes, that often require different proof techniques, for example, as shown in \cite{dhangwatnotai2010revenue,babaioff2018two,daskalakis2020more,allouah2022pricing} where instead the optimal monopoly price revenue is used as benchmark. This means we simultaneously get revenue guarantees for only a couple of samples, and can compute how many samples are needed to obtain a given precision for approximating the optimal full information guarantee.} As one of our main technical tools to prove our upper bound result that no pricing rule can do better, we use a doubly infinite linear program representation of the problem and construct a judiciously chosen dual solution yielding the results based on weak duality. 

With a full understanding of the MHR case, we turn to the \brown{less restrictive} and more challenging class of $\lambda$-regular distributions. {In this case the sample mean policy is no longer optimal in the many-sample regime, and the structure of the optimal mechanism appears to be very intricate. This naturally raises the question of whether simple pricing rules can achieve strong performance guarantees.}
Inspired by \cite{allouah2022pricing}, we focus on a natural and practically appealing class of policies based on order statistics.

We first identify for every fixed sample size $n$ the order statistic that maximises the performance ratio among all order statistic policies.
We show that the performance of this optimal order statistic policy is within $O(1/n)$ of the full-information benchmark $(1-\lambda)^{1/\lambda}$ (see e.g., \cite[Lemma 9]{cole2017applications}). Moreover, we show that this convergence rate is optimal among all, possibly randomised, policies.
% We first identify for every fixed sample size $n$ the optimal order statistic, among all order statistics, 
% and show that this policy approaches at an optimal rate of $1/n$, as $n \rightarrow \infty$, the best achievable guarantee of $(1-\lambda)^{1/\lambda}$ (see e.g., \cite[Lemma 9]{cole2017applications}) when the underlying valuation distribution is known.  
These results are summarised in Theorems \ref{thm:order-statistic-policy} and \ref{thm:general-pricing-impossibility}, and show that the best order statistic pricing rule is asymptotically optimal.
To prove these theorems we rely, among other tools, on results from theoretical statistics, and a collection of combinatorial identities involving {Beta functions \cite[Section 6.2]{abramowitz1948handbook} and incomplete Beta functions \cite[Section 6.6]{abramowitz1948handbook}}

{To gain further insight into how the performance guarantee depends on $\lambda$, and in particular how heavier tails affect performance, we also study the single-sample regime in greater detail for $\lambda\in(0,1)$.}
% \brown{To further understand how the performance guarantee changes as $\lambda$ increases, allowing for distributions with heavier tails, we also study the single sample regime for general $\lambda \in (0,1)$.} 
We show in Theorem \ref{thm:single-sample-det} that the natural policy of using the sample value as the posted price
% , in line with the sample mean policy for $n > 1$ samples, 
is optimal among all \textit{deterministic} pricing policies and yields a guarantee of $(1-\lambda)/(4-2\lambda)$. 
The proof relies on significantly different techniques than for the MHR case, {in particular, we rely on hypergeometric functions and their properties.} 
Determining the optimal randomised policy, a highly non-linear optimization problem with (seemingly) multiple local optima, remains open, but we do provide an upper bound on the performance of randomised policies in Appendix \ref{appendix:numerical-bounds} \brown{that is nearly tight for small values of $\lambda$}. 

Beyond the individual technical results and the diversity of proof techniques, we believe the paper contributes a new perspective on pricing with limited information. Whereas some pricing-from-samples literature, including the influential work of Allouah et al.~\cite{allouah2022pricing}, evaluates sample-based policies against the optimal monopoly price, we benchmark against personalised pricing itself. This benchmark is economically natural, directly measures the value of information, and leads to a number of sharp guarantees that appear difficult to obtain under the traditional monopoly-revenue benchmark. {For example, when comparing to the optimal monopoly price, it is to the best of our knowledge, \brown{despite various efforts \cite{dhangwatnotai2010revenue,babaioff2018two,daskalakis2020more,allouah2022pricing},} not known what the best possible (deterministic or randomised) guarantee is for MHR distributions, nor what the best possible deterministic guarantee  is for one available sample and arbitrary $\lambda \in (0,1)$. In contrast, our benchmark allows us to derive tight optimality guarantees in these settings and asymptotically tight guarantees throughout the broader class of $\lambda$-regular distributions with many samples.}
\subsection{Related work}
\label{sec:related-work}

Our work contributes to the literature on robust and data-driven pricing. 
Robust pricing problems, where the seller has partial knowledge of the value distribution and seeks guarantees against worst-case distributions, have been studied extensively. 
Early contributions include \cite{bergemann2008pricing,eren2010monopoly}, which characterise optimal pricing policies under minimal distributional information. Subsequent work has considered richer ambiguity structures, including moment-based information and non-parametric families, see for example \cite{azarmicali2013, azar2013optimal, kos2015selling, carrasco2018optimal, elmachtoub2021value, chen2022distribution, giannakopoulos2023robust, allouah2023optimal, van2026distributionally}.
These works typically assume that uncertainty is described beforehand through a known ambiguity set and design pricing rules that are robust with respect to all distributions in this set. 
Most relevant for us is the work of Elmachtoub et al. \cite{elmachtoub2021value} who study the revenue guarantee of posted pricing rules compared to that of personalised pricing for various ambiguity sets based on simple summary statistics. In contrast, we study a data-driven setting in which the seller observes i.i.d.\ samples from an unknown $\lambda$-regular valuation distribution and evaluates performance relative to the mean valuation of that distribution.

While sample and moment-based information are not directly comparable, sample-based information generally provides a less conservative and more average-case perspective.
For instance, in the classical prophet inequality problem, \citet{rubinstein2020optimal} show that a single sample per distribution suffices to match the $1/2$ guarantee of the full-information benchmark, whereas \citet{correa2026informativeness} demonstrate that moment-based information cannot improve beyond a $1/\log n$ guarantee in general. This highlights the fundamentally different informational nature of sample-based approaches.

Our paper is most closely related to the literature on sample-based pricing. The seminal work of \citet{dhangwatnotai2010revenue} showed that posting a single observed sample as the price achieves a constant approximation to optimal revenue for regular distributions. 
This work was subsequently extended along several lines: to MHR distributions and many samples setting by \citet{huang2018making}, and to the challenging two sample setting by Babaioff et al.~\cite{babaioff2018two} and Daskalakis and Zampetakis \cite{daskalakis2020more}.
Of particular relevance is \citet{allouah2022pricing}, who study sample-based pricing under $\lambda$-regularity and derive approximation guarantees relative to the optimal (monopoly) revenue for any fixed number of samples.
In contrast, we evaluate pricing rules relative to the mean valuation. This benchmark enables an explicit characterization of the optimal guarantee in the MHR case and asymptotically tight bounds for general $\lambda$-regular distributions. Unlike \citet{allouah2022pricing}, whose guarantees are primarily characterised numerically, our approach yields explicit analytical characterizations.

Since our benchmark is the mean valuation rather than monopoly revenue, our work is also related to the broader literature on revenue-to-welfare guarantees in mechanism design. \citet{kleinberg2013ratio} derive such guarantees under $c$-boundedness and hyper-regularity assumptions in Bayesian single-parameter settings. As later observed by \citet{schweizer2019performance}, these assumptions can be naturally unified through the notion of $\lambda$-regularity ($\lambda<1$). 

Finally, beyond pricing-from-samples, $\lambda$-regularity has become a standard structural assumption in mechanism design. Introduced in \citet{schweizer2019performance}, it is equivalent to $\rho$-concavity of \citet{mares2011near} and \citet{ewerhart2013regular} with $\rho=-\lambda$, and to $\alpha$-strong regularity of \citet{cole2014sample} with $\alpha=1-\lambda$. It has been used in a variety of settings beyond pricing, including auction design and mechanism design with limited information \cite{cole2017applications,mares2014analysis,giannakopoulos2021optimal,ge2025optimal,allouah2023optimal}.

\section{Model}
\label{sec:model}

The following notation is fixed throughout: $\mathbb{R}_+$ denotes the non-negative real numbers, $\mathcal{P}(A)$ denotes the set of all probability measures over a set $A$, and a bold variable, like $\boldsymbol{x}$, denotes a vector $(x_1, \dots, x_n)$.

We consider a seller who wishes to sell a single indivisible good to a single buyer. The buyer’s private value $V$ is drawn from an unknown distribution represented by a differentiable cumulative distribution function $F$  supported on $\Rplus$. 
The seller does not know $F$, but has access to $n$ i.i.d. samples $(V_1,\dots,V_n)$ from $F$. The problem we study is how the seller can use the given samples (which are realizations of $V_1,\dots,V_n$) to design a pricing rule whose expected revenue constitutes a large fraction of the buyer's mean valuation $\E[V]$. %We adopt a max-min formulation. 

A pricing rule is a function\footnote{Formally, the seller chooses a Markov kernel $\kappa:\Rplus^n \to \mathcal{P}(\Rplus)$ from the set of all such Markov kernels $\mathcal{K}(\Rplus^n,\Rplus)$. The use of Markov kernels is necessary to avoid measurability issues. A more detailed description of Markov kernels is in Appendix \ref{appendix:eks-proof}.} $\kappa:\Rplus^n \to \mathcal{P}(\Rplus)$ that maps every vector $\boldsymbol{v} = (v_1,\dots,v_n) \in \Rplus^n$ of realised samples to a (possibly) randomised pricing rule, which is a distribution over the non-negative reals. We write $Y_{v_1,\dots,v_n}  = Y_{\boldsymbol{v}} \sim\kappa(\boldsymbol{v})$ for a random variable denoting this price. Note that $Y_{\boldsymbol{v}}$ is deterministic if the pricing rule is deterministic.

Given a value distribution $F$ and posted price $p \in \Rplus$, the expected revenue from selling the item whenever $V \geq p$, with $V \sim F$, is
$
\Rev(F,p):=p(1 - F(p)).
$
The expected revenue of a randomised pricing rule $Y_{\boldsymbol{v}} \sim \kappa(\boldsymbol{v})$ is then $\E[\Rev(F,Y_{V_1,\dots,V_n})]$ and we define its performance relative to the mean of the distribution as
$$
T(\kappa,F):=\frac{\E[\Rev(F,Y_{V_1,\dots,V_n})]}{\E[V]},
$$
where $V,V_1\dots,V_n\iid F$. Note that the expectation in the nominator is taken with respect to both the samples $V_1,\dots,V_n$ as well as the possible randomisation of the pricing rule $Y_{\boldsymbol{v}}$. 

We want to solve the maximin problem 
$
\sup_{\kappa\in \mathcal{K}(\Rplus^n,\Rplus)}\inf_{F\in\mathcal{P}} T(\kappa,F)$,
where first the seller chooses a sample-based pricing rule $\kappa$, after which an adversarial nature  chooses a distribution $F \in \mathcal{P}$ that minimises the ratio $T(\kappa,F)$ with $\mathcal{P}$ a set of admissible probability distributions. Without any restrictions on the set $\mathcal{P}$, no non-zero guarantee is possible for the maximin ratio, even for a seller with complete knowledge of the distribution.\footnote{Consider the distribution with $F(x)=1-(1+x)^{-\alpha}$ for $\alpha > 1$. The mean is $1/(\alpha-1)$ and the optimal posted price is $1/(\alpha-1)$ too. Taking $\alpha\downarrow 1$ shows that no non-zero performance guarantee is possible in general.}

Following \cite{allouah2022pricing}, we focus on the set $\mathcal{P} = \mathcal{P}_{\lambda}$ of $\lambda$-regular distributions for a given $\lambda \in [0,1]$. This set interpolates between distributions with a monotone hazard rate for $\lambda=0$, and regular distributions for $\lambda=1$.

%Following \cite{allouah2022pricing}, we focus on the class of $\lambda$-regular distributions, which interpolates between several standard distributional assumptions studied in the revenue maximization literature.
\begin{definition}[$\lambda$-regularity \cite{schweizer2019performance}]
	Let $F$ be a distribution with density $f$. For $\lambda\in[0,1]$, define the generalised hazard rate $r_\lambda(x)=f(x)/(1-F(x))^{1+\lambda}$. We say $F$ is \emph{$\lambda$-regular} if $r_\lambda$ is non-decreasing on the support of $F$. The set of all $\lambda$-regular distributions is denoted by $\mathcal{P}_\lambda$.
\end{definition}

Assuming $r_{\lambda}$ is a given function, we can solve the generalised hazard rate equation for $F(x)$ resulting in the following lemma.

\begin{lemma}[\cite{schweizer2019performance}, Lemma 1]
    A distribution $F$ is $\lambda$-regular if and only if it can be written as
    $$
    F(x) = 1-\left(1+\lambda R_\lambda(x)\right)^{-1/\lambda}
    $$
     where $R_\lambda(x)=\int_0^xr_\lambda(t)\dd t$ is an increasing, convex function with $R_{\lambda}(0) = 0$.
\end{lemma}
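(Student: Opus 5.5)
The plan is to treat $r_\lambda = f/(1-F)^{1+\lambda}$ as a first-order ODE for the survival function $S(x) = 1-F(x)$, solve it explicitly, and read off both directions of the equivalence. Since $S' = -f$, the definition of the generalised hazard rate reads
\[
r_\lambda(x) = -\frac{S'(x)}{S(x)^{1+\lambda}} = \frac{1}{\lambda}\,\frac{\dd}{\dd x}\Bigl(S(x)^{-\lambda}\Bigr), \qquad \lambda\in(0,1],
\]
which holds wherever $S(x)>0$.

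\textbf{Forward direction.} Integrate this identity from $0$ to $x$. Since the support lies in $\Rplus$ and $F$ is differentiable, hence continuous, we have $S(0)=1$. This gives $S(x)^{-\lambda} - 1 = \lambda R_\lambda(x)$, that is,
\[
F(x) = 1 - \bigl(1+\lambda R_\lambda(x)\bigr)^{-1/\lambda}.
\]
The remaining properties of $R_\lambda$ follow directly:
\begin{itemize}
\item $R_\lambda(0) = 0$ by definition.
\item $R_\lambda$ is non-decreasing because $r_\lambda \ge 0$.
\item $R_\lambda$ is convex because its derivative $r_\lambda$ is non-decreasing, which is exactly $\lambda$-regularity.
\end{itemize}
If ``increasing'' is meant strictly, it holds on the interior of the support, where $f>0$. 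Outside the support the formula holds trivially, by setting $R_\lambda = \infty$ beyond the right endpoint.

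\textbf{Reverse direction.} Suppose $F$ has the stated form with $R_\lambda$ convex, increasing and $R_\lambda(0)=0$. A convex function is differentiable almost everywhere with non-decreasing right derivative. Differentiating gives
\[
f = R_\lambda'\,\bigl(1+\lambda R_\lambda\bigr)^{-1/\lambda-1} = R_\lambda'\,(1-F)^{1+\lambda}.
\]
Hence $r_\lambda = R_\lambda'$, which is non-decreasing, and $F$ is $\lambda$-regular. We must also check that $F$ is a distribution on $\Rplus$: $F(0)=0$, $F$ is non-decreasing, and $F\to 1$ because a convex increasing function that is not constant tends to infinity.

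\textbf{The case $\lambda = 0$.} Here the formula is read as the limit $(1+\lambda R)^{-1/\lambda} \to e^{-R}$. The same argument then goes through with $\log S$ in place of $S^{-\lambda}/\lambda$, recovering the classical MHR representation $F = 1-e^{-R_0}$ with $R_0$ the cumulative hazard.

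\textbf{Main obstacle.} The only delicate point is regularity bookkeeping: differentiability of $R_\lambda$, the behaviour at the endpoint of a bounded support, and strict versus weak monotonicity. I would handle this by working with right derivatives and absolute continuity, since $F$ is assumed differentiable. The rest is the routine ODE separation above.
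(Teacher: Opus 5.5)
Your proposal is correct and follows essentially the paper's route: the paper gives no separate proof, citing \cite{schweizer2019performance} and remarking only that the lemma follows by solving the generalised hazard rate equation for $F$, which is your identity $r_\lambda = \frac{1}{\lambda}\frac{\mathrm{d}}{\mathrm{d}x}\bigl(S^{-\lambda}\bigr)$ integrated from $0$ with $S(0)=1$. Your use of right derivatives in the reverse direction is also the appropriate reading, since the paper itself uses kinked choices such as $R_\lambda(x)=\max\{0,\alpha(x-1)\}$, for which $F$ is not everywhere differentiable.
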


The goal of this work will be to  quantify
\begin{align}
Q_{n,\lambda} :=\sup_{\kappa\in \mathcal{K}(\Rplus^n,\Rplus)}\inf_{F\in\mathcal{P}_{\lambda}} \; \frac{\E[\Rev(F,Y_{V_1,\dots,V_n})]}{\E[V]}.
    \label{ref:main_ratio_lambda}
\end{align}
and to identify the optimal sample-based pricing rules.

\section{Monotone hazard rate}
\label{sec:mhr}
We first study the case $\lambda=0$ corresponding to monotone hazard rate (MHR) distributions.
For this case, we obtain an exact characterisation of the performance ratio $Q_{n,0}$, which is stated in Theorem \ref{thm:mhr-pricing}.

\begin{theorem}
\label{thm:mhr-pricing}
	For $n\ge 1$, 
$$
Q_{n,0}:=\sup_{\kappa\in \mathcal{K}(\Rplus^n,\Rplus)}\inf_{F\in\mathcal{P}_0}\frac{\E[\Rev(F,Y_{V_1,\dots,V_n})]}{\E[V]}=\left(\frac{n}{n+1}\right)^{n+1},
$$
where $V,V_1\dots,V_n\iid F$, $Y_{v_1,\dots,v_n}\sim\kappa((v_1,\dots,v_n))$. The supremum is attained by the pricing rule that is the sample mean, i.e., $Y_{V_1,\dots,V_n} = \frac{1}{n}\sum_{i=1}^n V_i$.
	\refstepcounter{theorempart}\label{thm:mhr-pricing-rule} 
    	\refstepcounter{theorempart}\label{thm:mhr-pricing-impossibility} 
\end{theorem}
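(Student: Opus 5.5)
The plan has two parts: a lower bound showing that the sample mean achieves $(n/(n+1))^{n+1}$ on every MHR distribution, and a matching upper bound, attained already by exponential distributions, showing that no rule does better.

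\textbf{Lower bound (sample mean).} By the lemma of \cite{schweizer2019performance} with $\lambda=0$, every $F\in\mathcal{P}_0$ satisfies $1-F(x)=e^{-R(x)}$ with $R$ convex, increasing and $R(0)=0$. Let $S=\frac1n\sum_i V_i$ and $U_i=1-F(V_i)$, so that $U_i$ is uniform on $[0,1]$ since $F$ is continuous.

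Convexity of $R$ gives $R(S)\le \frac1n\sum_j R(V_j)$, hence
$$
1-F(S)=e^{-R(S)}\ \ge\ \prod_j U_j^{1/n}.
$$
Multiplying by $S$, taking expectations and using independence and symmetry, we get
$$
\E[\Rev(F,S)] \ \ge\ \E\bigl[V\,U^{1/n}\bigr]\,\bigl(\E[U^{1/n}]\bigr)^{n-1}=\Bigl(\tfrac{n}{n+1}\Bigr)^{n-1}\E\bigl[V(1-F(V))^{1/n}\bigr].
$$
It therefore suffices to show
$$
\E\bigl[V(1-F(V))^{a}\bigr]\ \ge\ \frac{\E[V]}{(1+a)^2},\qquad a=1/n.
$$

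Integration by parts gives $\E[V(1-F(V))^a]=\frac{1}{1+a}\int_0^\infty (1-F(x))^{1+a}\,\dd x$. Since $\E[V]=\int_0^\infty(1-F)$, the claim reduces to $c\int_0^\infty e^{-cR(x)}\dd x$ being nondecreasing in $c\ge 1$.

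To prove this monotonicity, substitute $t=R(x)$ and let $G=R^{-1}$, which is concave, so $G'$ is nonincreasing. Then
$$
c\int_0^\infty e^{-cR(x)}\dd x=c\int_0^\infty e^{-ct}G'(t)\,\dd t=\int_0^\infty e^{-s}G'(s/c)\,\dd s,
$$
and the right-hand side is nondecreasing in $c$ because $G'$ is nonincreasing. Support not starting at $0$, or $R$ jumping to $\infty$, are handled by the same computation with $G'=0$ beyond the support.

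Equality holds throughout for exponential $F$. As a check, for $V\sim\Expo$ with mean $\mu$ we have $S\sim\mathrm{Gamma}(n,\mu/n)$ and $\E[S e^{-S/\mu}]=\mu(1+1/n)^{-(n+1)}$.

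\textbf{Upper bound (no rule does better).} Nature is restricted to the exponential family with scale $\mu>0$. For a rule that posts $cS$, the revenue is $c\mu(1+c/n)^{-(n+1)}$. This is maximised at $c=1$, giving exactly $(n/(n+1))^{n+1}$.

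The difficulty is to exclude arbitrary randomised, non-equivariant kernels $\kappa$. The intended route is a Hunt--Stein / Bayes argument using the scale invariance of the family:
\begin{itemize}
\item Put the prior $\dd\mu/\mu$, or a sequence of proper priors $\pi_k$ approximating it, on the scale.
\item Observe that the sum is sufficient, so we may reduce to rules depending only on $S$.
\item Compute the posterior-optimal price pointwise in $S$; against the scale-invariant prior it equals $S$ itself.
\end{itemize}

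This step is best phrased as weak duality for the doubly infinite LP. The primal variable is $\kappa$, with constraints indexed by $\mu$. The dual solution is a measure $w(\mu)\,\dd\mu$ over scales, chosen so that for every sample vector $\boldsymbol v$ and every price $p$,
$$
\int p\,e^{-p/\mu}\,\prod_i\frac{e^{-v_i/\mu}}{\mu}\,w(\mu)\,\dd\mu\ \le\ Q\int \mu\prod_i\frac{e^{-v_i/\mu}}{\mu}\,w(\mu)\,\dd\mu
$$
with $Q=(n/(n+1))^{n+1}$. With $w(\mu)\propto \mu^{-1}$, or suitably truncated versions, both sides are explicit Gamma integrals in $\sum_i v_i$. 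The pointwise maximum over $p$ is then attained at $p=\sum_i v_i/n$ and yields exactly $Q$.

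The main obstacle is making this rigorous. The natural dual measure is improper, so it must be truncated to $[\epsilon,1/\epsilon]$, with the boundary error shown to vanish as $\epsilon\to 0$. One must also justify the interchange of integrals for arbitrary Markov kernels.
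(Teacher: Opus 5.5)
Your lower bound is correct and is essentially the paper's argument: convexity of the cumulative hazard gives $1-F(S)\ge\prod_j U_j^{1/n}$, then symmetry and integration by parts. The only difference is the final inequality $(1+a)\int_0^\infty S^{1+a}\ge\int_0^\infty S$. The paper gets it from the pointwise bound $S(v)^{1+a}\ge S((1+a)v)$ (star-shapedness of a convex hazard with $H(0)=0$) plus a rescaling, which avoids your inverse-function substitution and its edge cases. Your upper-bound strategy is also the paper's: weak duality for the LP over the exponential family, with a dual measure approximating the scale-invariant prior. In rate coordinates the paper uses $\mathrm{d}\mu_m(\theta)=\frac{1}{\log m}\int_1^m e^{-t\theta}\,\mathrm{d}t\,\mathrm{d}\theta$, which is approximately $\frac{1}{\log m}\frac{\mathrm{d}\theta}{\theta}$ on $[1/m,1]$.

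However, the upper bound as written contains a computational error and leaves the essential estimate undone.

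\textbf{The weight is wrong.} Your displayed inequality is in revenue form, with $\E_\mu[\Rev]$ on the left and $\mu$ on the right. So the prior $\mathrm{d}\mu/\mu$ on the ratios corresponds to $w(\mu)\propto\mu^{-2}$, not $\mu^{-1}$. With $w=\mu^{-2}$ and $y=\sum_i v_i$, the two sides are $p\,n!\,(p+y)^{-(n+1)}$ and $Q\,(n-1)!\,y^{-n}$. The left side is maximised at $p=y/n$, where equality holds; this is exactly the paper's middle-range bound $\xi_m(s,y)\le\frac{(n-1)!}{\log m}C_n y^{-n}$. With $w=\mu^{-1}$ instead, the maximiser is $p=y/(n-1)$, and the pointwise inequality would certify the smaller constant $((n-1)/n)^n<Q$ for $n\ge 2$. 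If that improper measure were admissible, this would contradict your own lower bound. Indeed, $\int\mu\, w(\mu)\,\mathrm{d}\mu$ is then dominated by the truncation edge, so the boundary error does not vanish.

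\textbf{The truncation step is the real content.} What you call the main obstacle is precisely the content of the paper's proof. Because $\int y^{n-1}\cdot y^{-n}\,\mathrm{d}y$ diverges at both ends, you need separate bounds on $\sup_p$ of the regularised integrand for small and for large $y$. These must contribute only $O(1)$ against a normaliser of order $\log m$. The paper does this with a dual function $\beta_m(\sum_i z_i)$ bounded separately on $[0,1)$, $[1,m)$ and $[m,\infty)$, giving dual value $Q+(n+1/n)/\log m\to Q$. The interchange of integrals for arbitrary Markov kernels is harmless: Tonelli plus $\int\Phi(\boldsymbol v,p)\,\kappa(\boldsymbol v,\mathrm{d}p)\le\sup_p\Phi(\boldsymbol v,p)$ suffices.
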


The proof proceeds in two parts.
For the lower bound, we analyse the performance of the sample mean pricing rule using convexity arguments and integration by parts.
To obtain the matching upper bound, we start by restricting to 
constant hazard rate distributions, which are exponential distributions of the form $F_\theta(x)=1-e^{-\theta x}$ for $\theta > 0$. Restricting to this class yields an upper bound on $Q_{n,0}$ described by the (doubly) infinite-dimensional linear program
    \begin{align*}
        &\sup && C\\
        &\text{s.t.} &&\E[\Rev(F,Y_{Z_1\dots,Z_n})]\ge C \cdot \E[Z] \; \text{ for all } \theta>0,
        \\
        &&& \kappa((z_1,\dots,z_n))\in\mathcal{P}(\mathbb{R}_+)\text{ for all }(z_1,\dots,z_n)\in\mathbb{R}_+^n\text{, } C\in\mathbb{R}.
    \end{align*}
\noindent The dual program is given by
    \begin{align*}
        &\inf && \int_{\mathbb{R}_+^n}\lambda(\boldsymbol{z})\dd \boldsymbol{z}\\
        &\text{s.t.} &&
        \lambda(\boldsymbol{z})\ge\int_0^\infty \theta^{n+1}s\exp\left(-\theta\left(s+\sum_{i=1}^nz_i\right)\right)\dd \mu(\theta)\text{ for all }\boldsymbol{z}\in\mathbb{R}_+^n\text{ and }s\ge 0,
        \\
        &&& \mu\in\mathcal{P}((0,\infty))\text{, } \lambda:\mathbb{R}_+^n\to\mathbb{R}.
    \end{align*} 
Weak duality implies that any dual feasible solution is an upper bound on the primal objective function value $C$. With a judiciously chosen sequence of feasible dual solutions we can establish an upper bound of $\left(n/(n+1)\right)^{n+1}$ on $Q_{n,0}$ in this way.
The details of this are in Appendix \ref{appendix:eks-proof}. 

We conclude this section with the analysis of the sample mean policy.

\begin{proof}[Proof of Theorem \ref{thm:mhr-pricing} (Analysis of sample mean pricing rule)]

Recall that $F$ is $0$-regular, i.e., has a monotone hazard rate, if $r_0(x)=f(x)/(1-F(x))$ is non-decreasing.
    Consider the pricing rule which sets the price equal to the sample mean. The expected revenue for this pricing rule is
    \begin{align*}
        \mathrm{Rev}_{\text{SM}}(F)&:=\E\left[\mathrm{Rev}\left(F,\frac{1}{n}\sum_{i=1}^n V_i\right)\right]\\
        &=\int_{\Rplus^n}\left(\frac{1}{n}\sum_{i=1}^nv_i\right)S\left(\frac{1}{n}\sum_{i=1}^nv_i\right)\left(\prod_{i=1}^nf(v_i)\right)\dd v_1\dots\dd v_n,
    \end{align*}
    where $S(x)=1-F(x)$ is the survival function. Let $H(x)=\int_0^xr_0(t)\dd t$. Because $-S'(t)/S(t)=r_0(t)$ we have that $S(x)=\exp(-H(x))$. Moreover, observe that $H$ is convex since $r_0$ is non-decreasing. By this convexity,
    $$
        H\left(\frac{1}{n}\sum_{i=1}^nv_i\right)\le \frac{1}{n}\sum_{i=1}^nH\left(v_i\right)\implies S\left(\frac{1}{n}\sum_{i=1}^nv_i\right)\ge \prod_{i=1}^nS(v_i)^{1/n},
    $$
    because $S$ is a decreasing function in $x$.
    Consequently,
    \begin{align*}
        \mathrm{Rev}_{\text{SM}}(F)&\ge \int_{\Rplus^n}\left(\frac{1}{n}\sum_{i=1}^nv_i\right)\prod_{i=1}^nS(v_i)^{1/n}\left(\prod_{i=1}^nh(v_i)S(v_i)\right)\dd v_1\dots\dd v_n\\
    &=\int_{\Rplus^n}\left(\frac{1}{n}\sum_{i=1}^nv_i\right)\left(\prod_{i=1}^nh(v_i)S(v_i)^{1+1/n}\right)\dd v_1\dots\dd v_n\\
    &=\frac{1}{n}\sum_{i=1}^n\int_{\Rplus^n}v_i\left(\prod_{i=1}^nh(v_i)S(v_i)^{1+1/n}\right)\dd v_1\dots\dd v_n\\
    &=\int_{\Rplus^n} v_1\left(\prod_{i=1}^nh(v_i)S(v_i)^{1+1/n}\right)\dd v_1\dots\dd v_n\\
    &=\int_0^\infty v h(v)S(v)^{1+1/n}\dd v\left(\int_0^\infty h(v)S(v)^{1+1/n}\dd v\right)^{n-1}.
    \end{align*}
    Using $f(v_i)=h(v_i)S(v_i)$ in the first step. The penultimate step follows from symmetry. Recall that $S'(v)=-h(v)S(v)$, so with the substitution $t=S(v)$ we obtain
    \begin{align*}
        \int_0^\infty h(v)S(v)^{1+1/n}\dd v&=-\int_0^\infty S(v)^{\frac{1}{n}}S'(v)\dd v=-\int_1^0t^{\frac{1}{n}}\dd t=\frac{n}{n+1}.
    \end{align*}
    The antiderivative of $S'(v)S(v)^{1/n}$ is $\frac{n}{n+1}S(v)^{1+1/n}$, so by integration by parts,
    \begin{align*}
        \int_0^\infty vh(v)S(v)^{1+1/n}\dd v&=-\int_0^\infty v S'(v)S^{\frac{1}{n}}\dd v\\
        &= \int_0^\infty \frac{n}{n+1}S(v)^{1+1/n}\dd v-\left[v\frac{n}{n+1}S(v)^{1+1/n}\right]_0^\infty\\
        &=  \frac{n}{n+1} \int_0^\infty S(v)^{1+1/n}\dd v\\
        &\ge \frac{n}{n+1}\int_0^\infty S\left(\frac{n+1}{n}v\right)\dd v&\text{(MHR )} \\
        &=\left(\frac{n}{n+1}\right)^2\int_0^\infty S(w)\dd w=\left(\frac{n}{n+1}\right)^2\E[V].
    \end{align*}
    Combining the expressions yields
    $$
        \int_0^\infty v h(v)S(v)^{1+1/n}\dd v\left(\int_0^\infty h(v)S(v)^{1+1/n}\dd v\right)^{n-1}\ge \left(\frac{n}{n+1}\right)^{n+1}\E[V]
    $$
    The above holds for all $F$, so the sup-inf problem is at least $\left(n/(n+1)\right)^{n+1}$.
    
\end{proof}

\section{Single sample and \texorpdfstring{$\lambda$}{lambda}-regular}
\label{sec:lambda-single}
For $\lambda>0$, analysing $Q_{n,\lambda}$ becomes substantially more difficult, and we are no longer able to give a closed-form expression for arbitrary combinations of $n$ and $\lambda$. In particular, the family of distributions with constant hazard rate, the exponential family that we exploit in Theorem \ref{thm:mhr-pricing}, turns into the more challenging family of Pareto distributions for $\lambda > 0$ that does not have properties as nice as the exponential family for our purposes.
%One reason for this difficulty is that the family of distributions with a constant generalised hazard rate loses much of the tractable structure present in the MHR case. 
%When $\lambda=0$, this family coincides with the exponential distribution, whereas for $\lambda>0$ it becomes the Pareto distribution to which several arguments available in the monotone hazard rate setting no longer apply.
In light of these difficulties, we instead analyse two regimes for $\lambda > 0$: the single sample setting and the many samples setting, where the latter is treated in Section \ref{sec:lambda-multiple}. 

In the single-sample setting considered here, we show that the pricing rule that simply posts the sample $V_1$ as the price guarantees an approximation ratio of $(1-\lambda)/(4-2\lambda)$, and that this is the best deterministic pricing rule.
\begin{theorem} 
    \label{thm:single-sample-det}
	Let $\lambda\in(0,1)$, then
	$$
	   Q^{\mathrm{det}}_{1,\lambda}:=\sup_{\Psi:\mathbb{R}\to\mathbb{R}}\inf_{F\in\mathcal{P}_\lambda}\frac{\E[\Rev(F,\Psi(V_1))]}{\E[V]}=\frac{1-\lambda}{4-2\lambda},
    $$
    where $\Psi$ is a deterministic pricing rule mapping the observed sample to a posted price.
    The supremum is attained by the pricing rule $\Psi(x)=x$.
\end{theorem}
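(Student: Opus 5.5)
The plan has two halves: a lower bound showing that posting the sample itself guarantees $(1-\lambda)/(4-2\lambda)$, and an upper bound showing that no deterministic rule does better. The lower bound reduces to a one-dimensional inequality and is short; the upper bound is the real work.

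\emph{Lower bound for $\Psi(x)=x$.} Write $S=1-F$. Since $S'=-f$, integration by parts gives
$$\E[V_1S(V_1)]=\int_0^\infty x\,(-S'(x))S(x)\,\dd x=\tfrac12\int_0^\infty S(x)^2\,\dd x ,$$
and $\E[V]=\int_0^\infty S$. So it suffices to show
$$\int_0^\infty S^2\Big/\int_0^\infty S\ \ge\ \frac{1-\lambda}{2-\lambda}\qquad\text{for every }F\in\mathcal P_\lambda .$$
For the Pareto-type survival function $S(x)=(1+\lambda x)^{-1/\lambda}$ this holds with equality, since the two integrals are $1/(2-\lambda)$ and $1/(1-\lambda)$.

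For general $F$, use the representation lemma of \cite{schweizer2019performance}: $S=\varphi\circ R_\lambda$ with $\varphi(u)=(1+\lambda u)^{-1/\lambda}$. Substitute $u=R_\lambda(x)$ and let $g=R_\lambda^{-1}$, which is concave, increasing, and satisfies $g(0)=0$. This gives
$$\int_0^\infty S^k=\int_0^{\infty}\varphi(u)^k g'(u)\,\dd u,\qquad k=1,2,$$
where $g'\ge 0$ is non-increasing. A bounded support corresponds to $g'$ vanishing beyond some point, which is allowed.

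Any non-increasing $g'$ is a mixture of indicators $\indicator{u\le t}$. The ratio of two mixtures is at least the minimum over $t$ of
$$\rho(t)=\int_0^t\varphi^2\Big/\int_0^t\varphi .$$
Now $\rho(t)$ is the $\varphi$-weighted average of the decreasing function $\varphi$ over $[0,t]$, so it is non-increasing in $t$. Its infimum is therefore the limit $t\to\infty$, which is exactly $(1-\lambda)/(2-\lambda)$. Multiplying by $\tfrac12$ gives the lower bound $(1-\lambda)/(4-2\lambda)$.

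\emph{Upper bound for arbitrary deterministic $\Psi$.} For a hard family I would use the scaled truncated Paretos. These have $R_\lambda(x)=\theta x$ on $[0,t]$ and $R_\lambda=\infty$ beyond $t$, i.e.\ a Pareto with its remaining mass placed as an atom at $t$. This corresponds to the extreme points $g'=\theta^{-1}\indicator{u\le \theta t}$ above, so these are exactly the distributions that are tight in the lower-bound argument.

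\begin{itemize}
\item \emph{Step 1: reduce to linear rules.} The ratio $T$ is scale invariant within this family. The plan is to put a prior $\dd\theta/\theta$ on the scale, truncated to $[\epsilon,1/\epsilon]$ and then sent to the limit, in the spirit of the weak-duality argument of Theorem~\ref{thm:mhr-pricing}. The intended conclusion is that against such a scale-invariant prior the best deterministic response to a sample $y$ can be taken of the form $\Psi(y)=cy$.
\item \emph{Step 2: solve the one-parameter problem.} For $\Psi(x)=cx$, compute $T$ over the two-parameter family (scale $\theta$, truncation point $s=\theta t$ after rescaling). The relevant integrals are incomplete-Beta / Gauss hypergeometric expressions in $s$ and $c$. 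The claim to verify is that
$$\min_{c>0}\ \inf_{s}\ T=\tfrac{1-\lambda}{4-2\lambda},$$
attained at $c=1$. Two regimes should pin down the minimum. For $c>1$, a near-point-mass (small $s$) makes the rule overprice and drives revenue down. For $c<1$, the untruncated Pareto ($s\to\infty$) makes the rule underprice. So the min–max should sit at $c=1$.
\end{itemize}

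The main obstacle is Step~1. The adversary's family has two parameters, but only the scale is a symmetry, so averaging over $\theta$ alone does not obviously force $\Psi$ to be linear. A $\Psi$ could try to infer the truncation point from $y$. What I would try first is to fix a finite mixture over truncation points $s$ chosen as the worst-case responses to $c=1$, average over scale within that mixture, and show by a pointwise-in-$y$ first-order condition that the optimal response is linear. The expected difficulty is the hypergeometric monotonicity needed to certify that $c=1$ is the best response against this mixture.
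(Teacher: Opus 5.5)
Your lower bound is correct. It is a self-contained proof of the inequality $\int_0^\infty S^2\ge\frac{1-\lambda}{2-\lambda}\int_0^\infty S$, which the paper simply cites from \cite[Proposition 10]{schweizer2019performance}. The mixture-of-indicators argument with $\rho$ decreasing is a clean way to get it.

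The gap is in the upper bound, and it is structural. Linear rules are scale invariant, so against a scale prior $\dd\theta/\theta$ combined with a fixed mixture $\pi$ over truncation points, the rule $y\mapsto cy$ scores exactly $\int T(c\,\cdot,F_s)\,\dd\pi(s)$. Any bound you obtain this way is therefore at least $\max_{c>0}\int T(c\,\cdot,F_s)\,\dd\pi(s)$, and this exceeds $\frac{1-\lambda}{4-2\lambda}$ for every $\pi$.

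To see why, note that your own argument gives $\rho(s)>\rho(\infty)$ for every finite $s$. So only the untruncated Pareto is tight for $c=1$; the truncated ones are not, contrary to what you wrote. If $\pi$ charges any finite $s$, then $c=1$ already beats the target. If $\pi$ is the point mass at $s=\infty$, then $c=1$ is not a best response. With $H(\gamma)=\gamma F(\frac1\lambda,2,1+\frac2\lambda,1-\gamma)$ as in Appendix~\ref{app:hyper-geometric}, one gets $H'(1)=1-\frac{2}{2+\lambda}=\frac{\lambda}{2+\lambda}>0$, so some $c>1$ earns strictly more than $I(1)$ against the Pareto.

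Hence the step ``certify that $c=1$ is the best response against this mixture'' is false for every mixture. More generally, any bound obtained by averaging against a prior also bounds randomised rules, so such a route could only work if randomisation did not help, and the paper's numerics suggest it does. The obstacle you flagged, $\Psi$ inferring the truncation point from $y$, is not the real one: with one sample, the scale-averaged Bayes response is (asymptotically) linear by equivariance anyway.

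What you need is a max--min argument in which the adversary picks its distribution after seeing the rule. The paper reduces to linear rules by a zoom argument (Lemma~\ref{lemma:scaling-reduction}). Since $F(n\,\cdot)\in\mathcal P_\lambda$ and $T(\Psi,F(n\,\cdot))=T(n\Psi(\cdot/n),F)$, under the mild assumption that $\gamma=\lim_{t\to0}\Psi(t)/t$ exists, dominated convergence gives $\inf_{F'}T(\Psi,F')\le T(\gamma\,\cdot,F)$ for every $F$.

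Each $\gamma$ is then defeated by its own distribution. For $\gamma>1$, use near-point masses $R_\lambda(x)=\max\{0,\alpha(x-1)\}$ with $\alpha\to\infty$. For $\gamma<1$, use the Pareto, relying only on $I$ being increasing on $(0,1)$, not on $\gamma=1$ maximising $I$, which is false. Your two regimes in Step~2 match these cases exactly, so it is Step~1 that must be replaced. A minor point: $\min_{c>0}$ in Step~2 should be $\sup_{c>0}$.
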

To prove Theorem \ref{thm:single-sample-det} we first argue, essentially using similar reasoning as in \cite[Proposition 1]{allouah2022pricing}, that we can restrict to scaling policies of the form $\Psi(x)=\gamma x$ for $\gamma \geq 0$. 
We then separate the analysis in three parts: $\gamma < 1, \gamma = 1$ and $\gamma > 1$. The case $\gamma = 1$ gives the bound $(1-\lambda)/(4-2\lambda)$ and for $\gamma > 1$, we argue that the resulting inf-problem vanishes. 
Finally, the case $\gamma < 1$ we tackle by showing that no such $\gamma$ can give a ratio higher than $(1-\lambda)/(4-2\lambda)$ using results and properties of hypergeometric functions. This is the technically most challenging part.

Numerical experiments seem to suggest that randomised pricing rules are able to beat the best deterministic pricing rule, but proving this with similar techniques appears intractable at the moment. We leave this as an interesting open question.
As a corollary of Theorem \ref{thm:general-pricing-impossibility}, we do have that randomization can never reach the performance achievable with complete distributional knowledge. 
As Theorem \ref{thm:general-pricing-impossibility} holds for general $n \geq 1$, we remark that the bound for $n=1$ is quite poor.
In Appendix \ref{appendix:numerical-bounds} we establish better bounds which show that the possible improvement from randomisation is quite modest.

In the remainder of this section, we prove Theorem \ref{thm:single-sample-det}.
\begin{proof}[Proof of Theorem \ref{thm:single-sample-det}]
We first establish in the following lemma that we can restrict to pricing rules that scale the sample. 

\begin{lemma}
    \label{lemma:scaling-reduction}
    Under the (mild) regularity assumption that $\lim_{t\to 0}\Psi(t)/t$ exists,\footnote{This is the same regularity assumption as in \cite[Proposition 1]{allouah2022pricing}.} it holds for $\lambda \in (0,1)$ that
    $$
    	Q^{\mathrm{det}}_{1,\lambda}:=\sup_{\Psi:\mathbb{R}\to\mathbb{R}}\inf_{F\in\mathcal{P}_\lambda}\frac{\E[\Rev(F,\Psi(V_1))]}{\E[V]}=\sup_{\gamma > 0}\inf_{F\in\mathcal{P}_\lambda}\frac{\E[\Rev(F,\gamma V_1)]}{\E[V]}.
    $$
\end{lemma}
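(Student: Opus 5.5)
The inequality ``$\ge$'' is immediate, because every scaling rule $\Psi(x)=\gamma x$ is itself a deterministic pricing rule. The work is in ``$\le$''. For this, I will show that for any $\Psi$ with $\gamma_0:=\lim_{t\to 0}\Psi(t)/t$ existing, we have
$$\inf_{F\in\mathcal{P}_\lambda}\frac{\E[\Rev(F,\Psi(V_1))]}{\E[V]}\le \inf_{F\in\mathcal{P}_\lambda}\frac{\E[\Rev(F,\gamma_0 V_1)]}{\E[V]}.$$
The tool is scale invariance of the class $\mathcal{P}_\lambda$. If $V\sim F$ is $\lambda$-regular, then $\sigma V$ is $\lambda$-regular for every $\sigma>0$: its generalised hazard rate is $\sigma^{-1}r_\lambda(x/\sigma)$, which is still non-decreasing. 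Also, the ratio $T$ is invariant when the distribution and the price are scaled together. I will write $F_\sigma$ for the law of $\sigma V$.

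Concretely, fix any $F\in\mathcal{P}_\lambda$ with finite mean and consider the family $F_\sigma$ with $\sigma\downarrow 0$. Then
$$\frac{\E[\Rev(F_\sigma,\Psi(\sigma V_1))]}{\E[\sigma V]}=\frac{\E\bigl[\tfrac{\Psi(\sigma V_1)}{\sigma}\,S\bigl(\tfrac{\Psi(\sigma V_1)}{\sigma}\bigr)\bigr]}{\E[V]},$$
where $S=1-F$. Pointwise in $V_1=v>0$, we have $\Psi(\sigma v)/\sigma=v\cdot\Psi(\sigma v)/(\sigma v)\to\gamma_0 v$. Because $S$ is continuous, the integrand converges to $\gamma_0 v\,S(\gamma_0 v)$. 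Taking limits inside the expectation then shows that the infimum over $\mathcal{P}_\lambda$ for $\Psi$ is at most $\E[\gamma_0V_1 S(\gamma_0V_1)]/\E[V]$ for every $F$. Taking the infimum over $F$ gives the claim.

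The main obstacle is justifying the exchange of limit and expectation. I would use dominated convergence: the revenue $pS(p)$ is bounded uniformly in $p$ by $\sup_p pS(p)\le \E[V]<\infty$. Since $\lambda<1$, every $\lambda$-regular distribution has finite mean, by the representation $F(x)=1-(1+\lambda R_\lambda(x))^{-1/\lambda}$ with $R_\lambda$ convex and hence at least linear growth, so the tail is $O(x^{-1/\lambda})$ with $1/\lambda>1$. The integrand is therefore dominated by a constant, and the convergence is fine.

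The remaining edge cases need some care. If $\gamma_0=0$, the limiting revenue is $0$, so the rule $\Psi$ has guarantee $0$ and the inequality is trivial, even though $\gamma_0=0$ lies outside the range $\gamma>0$. The case $\gamma_0=+\infty$ is handled the same way, because the revenue tends to $0$. If $\Psi$ may output negative prices, I would replace it by $\max(\Psi,0)$ without loss of generality.
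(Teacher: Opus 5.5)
Your proposal is correct and follows essentially the same route as the paper: scale a fixed $\lambda$-regular $F$ toward zero (the paper uses $F(ns)$ with $n\to\infty$, you use $\sigma\downarrow 0$), use $\Psi(t)/t\to\gamma$ and continuity of $S$ for pointwise convergence, and pass to the limit by dominated convergence with the bound $pS(p)\le\E[V]$. Instead of the paper's $\epsilon$-near-optimal rule you compare each $\Psi$ directly with its own limiting slope $\gamma_0$, which is equivalent, and your handling of negative prices and of $\gamma_0\in\{0,+\infty\}$ covers cases the paper passes over.
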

\begin{proof}
See Appendix \ref{appendix:scaling-proof}.
\end{proof}

\noindent Based on Lemma \ref{lemma:scaling-reduction}, we next split into three cases: $\gamma=1$, $\gamma>1$ and $\gamma <1$. 

    \noindent\textit{Case 1: $\gamma = 1$.} Fix $F\in\mathcal{P}_\lambda$, then
    $$
        \frac{\E[\mathrm{Rev}(F, \gamma V_1)]}{\E[V]}=\frac{\E[V_1(1-F(V_1))]}{\E[V]}=\frac{1}{2}\frac{\E[\min\{V,V_1\}]}{\E[V]}\ge\frac{1-\lambda}{4-2\lambda},
    $$
    where the last inequality follows from \cite[Proposition 10]{schweizer2019performance}.
    Next, take the distribution $G$ with $R_\lambda(x)=x/\lambda$, then
    \begin{align*}
        &\E[V]=\int_0^\infty(1-G(v))\dd v=\int_0^\infty(1+v)^{-1/\lambda}\dd v=\left[\frac{1}{1-1/\lambda}(1+v)^{1-1/\lambda}\right]_0^\infty=\frac{\lambda}{1-\lambda}\\
        &\E[\min\{V,V_1\}]=\int_0^\infty(1-G(v))^2\dd v=\frac{\lambda}{2-\lambda}, 
    \end{align*}
    so $\E[\min\{V,V_1\}]/(2\E[V])=(1-\lambda)/(4-2\lambda)$. Therefore
    $$
    \inf_{F\in\mathcal{P}_{\lambda}}\frac{\E[\mathrm{Rev}(F, V_1)]}{\E[V]}=\frac{1-\lambda}{4-2\lambda}.
    $$

    \noindent\textit{Case 2: $\gamma > 1$.} For $\alpha > 0$, let $G_\alpha$ be the distribution with $R_\lambda(x)=\max\{0,\alpha (x-1)\}$.\footnote{The everywhere differentiable choice $R_\lambda(x)=x^\alpha$ also works, but is harder to calculate with.} 
    Then $1-G_\alpha(x)=(1+\lambda\max\{0,\alpha(x-1)\})^{-1/\lambda}$ for $x\ge 0$ and $g(x)=\alpha (1+\lambda \alpha(x-1) )^{-1/\lambda-1}$ for $x\ge 1$ and zero otherwise. 
    The mean is
    $$
    \E[V]=\int_0^\infty(1-G_\alpha(v))\dd v=1+\int_1^\infty(1+\lambda\alpha(v-1))^{-1/\lambda}\dd v=1+\frac{1}{\alpha(1-\lambda)}
    $$
    and the revenue is bounded by 
    \begin{align*}
        \E[\mathrm{Rev}(G_\alpha, \gamma V_1)]&=\int_0^\infty\gamma v(1-G_\alpha(\gamma v))g(v)\dd v=\gamma\int_1^\infty v(1-G_\alpha(\gamma v))g(v)\dd v\\
        &\le \gamma(1-G_\alpha(\gamma ))\int_1^\infty vg(v)\dd v= \gamma(1-G_\alpha(\gamma ))\E[V].
    \end{align*}
    Therefore 
    $$
    \lim_{\alpha\to \infty}\frac{\E[\mathrm{Rev}(G_\alpha, \gamma V_1)]}{\E[V]}\le \lim_{\alpha\to \infty}\gamma(1-G_\alpha(\gamma ))=  \lim_{\alpha\to \infty}\gamma (1+\lambda\alpha(\gamma-1))^{-1/\lambda}=0.
    $$
    Consequently, for $\gamma > 1$ there is no performance guarantee, that is,
    $$
    \sup_{\gamma > 1}\inf_{F\in\mathcal{P}_{\lambda}}\frac{\E[\mathrm{Rev}(F, \gamma V_1)]}{\E[V]}=0.
    $$

    \textit{Case 3: $\gamma < 1$.} 
    Consider the distribution $G$ again with $R_\lambda(x)=x/\lambda$ and let
        \begin{align*}
        I(\gamma)= \frac{\E[\mathrm{Rev}(G, \gamma V_1)]}{\E[V]} &=\frac{1}{\E[V]}\int_0^\infty \gamma v(1+\gamma v)^{-1/\lambda}\frac{1}{\lambda}(1+v)^{-1/\lambda-1}\dd v\\
        &=\frac{1-\lambda}{\lambda^2}\gamma\int_0^\infty v(1+\gamma v)^{-1/\lambda}(1+v)^{-1/\lambda-1}\dd v
    \end{align*}
    If $I(\gamma)$ is an increasing function, then 
    \begin{equation*}
        \sup_{\gamma \in(0,1)}\inf_{F\in\mathcal{P}_{\lambda}}\frac{\E[\mathrm{Rev}(F, \gamma V_1)]}{\E[V]}\le \sup_{\gamma \in(0,1)}\frac{\E[\mathrm{Rev}(G, \gamma V_1)]}{\E[V]}=I(\gamma)<I(1)=\frac{1-\lambda}{4-2\lambda},
    \end{equation*}
    from which it follows, in combination with Case 1 that no $\gamma < 1$ can attain the supremum in $Q_{1,\lambda}^\mathrm{det}$. It therefore remains to prove  that $I(\gamma)$ is increasing on $(0,1)$, which is done in the following lemma.  
    \begin{lemma}
    \label{lemma:special-function-monotonicity}
    For fixed $\lambda\in(0,1)$, the function 
    $$
    I(\gamma) = \frac{1-\lambda}{\lambda^2}\gamma\int_0^\infty v(1+\gamma v)^{-1/\lambda}(1+v)^{-1/\lambda-1}\dd v
    $$ is increasing on $(0,1)$.
\end{lemma}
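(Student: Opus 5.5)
The plan is to turn $I(\gamma)$ into a Gauss hypergeometric function whose argument moves monotonically with $\gamma$, and then prove monotonicity by a coefficientwise comparison of power series. Write $a=1/\lambda>1$ and drop the positive constant $(1-\lambda)/\lambda^2$. The integral is of Euler type, and we will use the classical representation
$$
\int_0^\infty v^{s-1}(1+v)^{-\rho}(1+\gamma v)^{-\sigma}\,\dd v=B(s,\rho+\sigma-s)\,{}_2F_1(\sigma,s;\rho+\sigma;1-\gamma),
$$
valid for $\gamma\in(0,1]$. It follows from the substitution $v=t/(1-t)$ and Euler's integral for ${}_2F_1$. With $s=2$, $\rho=a+1$, $\sigma=a$, setting $z=1-\gamma\in(0,1)$ gives
$$
I(\gamma)\propto (1-z)\,{}_2F_1(a,2;2a+1;z).
$$
So it suffices to show that $z\mapsto(1-z)\,{}_2F_1(a,2;2a+1;z)$ is strictly decreasing on $(0,1)$.

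Next we apply Euler's transformation ${}_2F_1(A,B;C;z)=(1-z)^{C-A-B}{}_2F_1(C-A,C-B;C;z)$. Here $C-A-B=a-1$, so
$$
(1-z)\,{}_2F_1(a,2;2a+1;z)=(1-z)^{a}\,G(z),\qquad G(z):={}_2F_1(a+1,2a-1;2a+1;z)=\sum_{k\ge0}g_kz^k .
$$
All $g_k>0$, since every parameter is positive for $a>1$. Differentiating gives
$$
\frac{\dd}{\dd z}\bigl[(1-z)^aG(z)\bigr]=(1-z)^{a-1}\bigl[(1-z)G'(z)-aG(z)\bigr].
$$
It therefore suffices to show that every power-series coefficient of $(1-z)G'-aG$ is negative. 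The $k$-th coefficient is $(k+1)g_{k+1}-(k+a)g_k$. Using the ratio $g_{k+1}/g_k=(a+1+k)(2a-1+k)/\bigl((k+1)(2a+1+k)\bigr)$, it equals
$$
\frac{g_k}{2a+1+k}\Bigl[(a+1+k)(2a-1+k)-(a+k)(2a+1+k)\Bigr]=-\frac{(k+1)\,g_k}{2a+1+k}<0.
$$
Hence the derivative is strictly negative on $(0,1)$. Thus $I$ is strictly decreasing in $z$, that is, strictly increasing in $\gamma$ on $(0,1)$.

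The main obstacle is finding a representation in which the sign of the derivative becomes transparent. Differentiating directly under the integral gives the integrand factor $1-(a-1)\gamma v$, which changes sign, so no pointwise argument works. The Pfaff-type transformations also lead to arguments of the form $-(1-\gamma)/\gamma$ that are awkward to handle. The Euler transformation is what makes the argument go through: it turns the problem into a positive series times $(1-z)^a$, and the two competing terms cancel to the clean factor $-(k+1)$. The remaining details are routine: justify the integral representation, including convergence at $\gamma\to1^-$, where the series converges because $C-A-B=1>0$ for $G$; and note that termwise differentiation is legitimate for $|z|<1$.
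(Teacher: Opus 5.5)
Your proof is correct, and after the common first step it uses a different mechanism from the paper's. You and the paper share the reduction $I(\gamma)\propto \gamma\,{}_2F_1(1/\lambda,2;1+2/\lambda;1-\gamma)$, obtained from the substitution $v=t/(1-t)$ and Euler's integral.

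The paper then differentiates directly. It uses $\frac{\dd}{\dd z}F(a,b,c,z)=\frac{ab}{c}F(a+1,b+1,c+1,z)$ together with a contiguous relation, obtained by eliminating $F(a,b+1,c+1,z)$ between two of Gauss's relations. This gives $H'(\gamma)=\frac{1}{\lambda}F(a,b,c,z)-K\,F(a,b,c+1,z)$ with $0<K<1/\lambda$. It then concludes from the termwise inequality $F(a,b,c,z)>F(a,b,c+1,z)$.

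You instead apply Euler's transformation to pull out $(1-z)^{a}$, which reduces the problem to the single power series $(1-z)G'-aG$. I checked the pieces: the transformation parameters ($C-A-B=a-1$, $G={}_2F_1(a+1,2a-1;2a+1;z)$), the coefficient $(k+1)g_{k+1}-(k+a)g_k$, and the identity $(a+1+k)(2a-1+k)-(a+k)(2a+1+k)=-(k+1)$ are all correct.

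What your route buys is an exact series for the derivative, $-(1-z)^{a-1}\sum_k \frac{(k+1)g_k}{2a+1+k}z^k$, rather than a lower bound. It needs only one classical transformation and a one-line ratio computation. There is no auxiliary contiguous relation to derive, and no separate inequality in $\lambda$ to verify beyond positivity of the parameters. The paper's route stays with the original function and the standard derivative and contiguity formulas from Abramowitz--Stegun, at the cost of more bookkeeping.

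One small slip in your closing remarks: for $G$ one has $C-A-B=1-a<0$, not $1$. Also, $\gamma\to1^-$ corresponds to $z\to0$, where nothing needs checking. The only delicate endpoint is $z\to1$ (i.e.\ $\gamma\to0$), which lies outside the open interval, so your argument on $(0,1)$ is unaffected.
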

The proof of Lemma \ref{lemma:special-function-monotonicity} is deferred to Appendix \ref{app:hyper-geometric}. In a nutshell, we can rewrite $I(\gamma)$ as the product of $\gamma$ and a hypergeometric function by a variable substitution and applying an identity due to Euler. We then use a derivative formula for hypergeometric functions together with a so-called contiguous relation, in the spirit of Gauss' contiguous relations for hypergeometric functions, to argue about the derivative of $I(\gamma)$.
This completes the proof of the theorem.
\end{proof}

\section{Multiple samples and \texorpdfstring{$\lambda$}{lambda}-regular}
\label{sec:lambda-multiple}
When multiple samples are available, the seller can employ pricing rules that depend on the samples in more intricate ways.
Unlike the MHR case, however, where the sample mean is optimal, in the regime where $\lambda \in (0,1)$, different or more complicated strategies seem necessary to obtain the optimal performance. To substantiate this claim, we show in Appendix \ref{appendix:non-optimality-sample-mean} that for $\lambda=1/2$ and $n=2$ samples the sample mean policy is strictly suboptimal. We suspect the optimal mechanism for a given $\lambda$ to be very delicate and potentially lacking interpretability.

Because an interpretable and simple strategy is preferable from a managerial perspective, and motivated by \cite{allouah2022pricing} that also consider a class of order statistics-based pricing rules, we will consider the simple family of rules that post a price equal to an order statistic of the observed samples.
Our motivation for considering order statistics is that they provide, like the mean, an estimate of the scale of the distribution. However, unlike the mean, order statistics do this in a more granular way. Moreover, order statistics remain analytically tractable and flexible enough to adapt to different values of $\lambda$, offering a degree of control that the sample mean does not provide.

Besides the motivation sketched in the previous paragraph, we can in fact show that an order statistic-based pricing rule is asymptotically optimal as the number of samples grows large, i.e., as $n \rightarrow \infty$.

Our first result characterises the performance of the optimal order statistic pricing rule. 
\begin{theorem}
    \label{thm:order-statistic-policy}
    Let $V_{r:n}$ denote the $r$-th order statistic, then
    \begin{align*}
        \max_{r\in\{1,\dots,n\}}\inf_{F \in \mathcal{P}_\lambda} \frac{\mathbb{E}\big[\Rev(F,V_{r:n} ))\big]}{\mathbb{E}[V]} = C(n,r^*,\lambda),
    \end{align*}
    where $r^*=\max\{r:C(n,r,\lambda)>C(n,r-1,\lambda)\}$ and
    $$
    C(n,r,\lambda)=\frac{1-\lambda}{\lambda}\left(\prod_{k=n-r+1}^n\frac{k}{k+1-\lambda}-\frac{n-r+1}{n+1}\right).
    $$
\end{theorem}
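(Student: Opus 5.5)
The plan is to compute, for each fixed $r$, the inner infimum exactly, and show it equals $C(n,r,\lambda)$, attained by the Pareto-type distribution $G$ with $R_\lambda(x)=x/\lambda$ from Section~\ref{sec:lambda-single}. Afterwards we optimise over $r$.

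\emph{Reparametrisation.} By the lemma of \cite{schweizer2019performance}, every $F\in\mathcal{P}_\lambda$ has survival function $S(x)=(1+\lambda R_\lambda(x))^{-1/\lambda}$ with $R_\lambda$ convex, increasing and $R_\lambda(0)=0$. Put $U=S(V)\sim\mathrm{Unif}(0,1)$ and $W=(U^{-\lambda}-1)/\lambda$. Then $V=h(W)$ with $h=R_\lambda^{-1}$ concave, increasing and $h(0)=0$.

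Since $S$ is decreasing, $V_{r:n}$ corresponds to $U_{(n-r+1)}$, the $(n-r+1)$-th smallest of $n$ uniforms, which has a $\mathrm{Beta}(n-r+1,r)$ distribution. Writing $W_r=(U_{(n-r+1)}^{-\lambda}-1)/\lambda$, the ratio becomes
\[
\frac{\E[h(W_r)\,U_{(n-r+1)}]}{\E[h(W)]}.
\]
This is a ratio of two linear functionals of $h$, both positive. Every concave increasing $h$ with $h(0)=0$ is a nonnegative mixture of the identity $w\mapsto w$ and the ramps $h_t(w)=\min\{w,t\}$ for $t>0$. Hence the infimum over $h$ is attained on (or approached by) these extreme rays.

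\emph{The identity ray.} Here $h(w)=w$, which is exactly $G$. With $a=n-r+1$, the revenue is $\frac{1}{\lambda}\big(\E[U^{1-\lambda}]-\E[U]\big)$ for $U\sim\mathrm{Beta}(a,r)$. We have $\E[U]=a/(n+1)$ and
\[
\E[U^{1-\lambda}]=\frac{\Gamma(a+1-\lambda)\Gamma(n+1)}{\Gamma(a)\Gamma(n+2-\lambda)}=\prod_{k=a}^{n}\frac{k}{k+1-\lambda}.
\]
Dividing by $\E[V]=\lambda/(1-\lambda)$ gives precisely $C(n,r,\lambda)$, which yields the upper bound.

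\emph{The ramp rays (main obstacle).} For the lower bound it remains to show that
\[
\rho(t):=\frac{\E[\min\{W_r,t\}\,U_{(n-r+1)}]}{\E[\min\{W,t\}]}\ \ge\ C(n,r,\lambda)\qquad\text{for all } t>0.
\]
Since $\rho(t)\to C(n,r,\lambda)$ as $t\to\infty$, I would try to prove that $\rho$ is non-increasing in $t$. Both the numerator and the denominator are integrals of survival-type quantities up to $t$. Substituting $u_0=(1+\lambda t)^{-1/\lambda}$, the numerator derivative is $\E[U_{(n-r+1)}\mathbf 1\{U_{(n-r+1)}\ge u_0\}]$, a regularised incomplete Beta function $I_{1-u_0}(\cdot,\cdot)$ up to constants. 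The denominator derivative is simply $u_0$.

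Monotonicity then reduces to a single-variable inequality between an incomplete Beta expression and its integral against $u^{-\lambda-1}$. I would attack this with the binomial-sum representation of incomplete Beta functions and term-by-term combinatorial identities. Failing that, I would use a ratio-of-derivatives (monotone L'H\^opital) argument, showing that the ratio of derivatives is itself monotone in $u_0$. This is where I expect the real work to lie.

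\emph{Choice of $r$.} Write $P_r=\prod_{k=n-r+1}^n \frac{k}{k+1-\lambda}$. Then
\[
C(n,r,\lambda)-C(n,r-1,\lambda)=\frac{1-\lambda}{\lambda}\Big(\frac{1}{n+1}-\frac{(1-\lambda)P_{r-1}}{n-r+2-\lambda}\Big).
\]
The quantity $P_{r-1}/(n-r+2-\lambda)$ is increasing in $r$, since consecutive ratios equal $\frac{n-r+1}{n-r+1-\lambda}>1$. Hence the increments change sign at most once, so $r\mapsto C(n,r,\lambda)$ is unimodal. The maximum is therefore attained at the last $r$ for which the increment is positive, namely $r^*$.
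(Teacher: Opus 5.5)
Your reparametrisation $V=h(W)$, the reduction to identity and ramp rays, the upper bound via $h(w)=w$, and your unimodality argument for $r\mapsto C(n,r,\lambda)$ are sound. There are two small slips. The $t$-derivative of the numerator is $\E[U_{(n-r+1)}\mathbf{1}\{U_{(n-r+1)}<u_0\}]$, not the event $\ge u_0$. And under $h(w)=w$ one has $\E[V]=1/(1-\lambda)$, not $\lambda/(1-\lambda)$.

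The gap is in the lower bound, on two counts.

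First, the inner infimum does \emph{not} equal $C(n,r,\lambda)$ for every $r$. The $t\to 0$ end of your ramp family (near-point-mass distributions) gives $\rho(0^+)=\E[U_{(n-r+1)}]=\frac{n-r+1}{n+1}$. By your own increment formula, $C(n,r,\lambda)>C(n,r-1,\lambda)$ iff $(1-\lambda)P_r<\frac{n-r+1}{n+1}$ iff $C(n,r,\lambda)<\frac{n-r+1}{n+1}$. Hence for $r>r^*$ the infimum is at most $\frac{n-r+1}{n+1}\le C(n,r,\lambda)$, typically strictly: for $r=n$ and large $n$ it is $O(1/n)$, while $C(n,n,\lambda)$ is of order $n^{\lambda-1}$. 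So the lower bound can only hold for $r\le r^*$, and a correct proof must use that hypothesis. Your plan never does.

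Second, the planned monotonicity of $\rho$ is false for every $r\ge 2$, in particular at $r=r^*$ whenever $n\ge 2$. The ratio of derivatives is $r\binom{n}{r}B(u_0;n-r+2,r)/u_0$, and $s\mapsto B(s;n-r+2,r)/s$ has an interior maximum on $(0,1)$. So as $t$ grows from $0$ (i.e.\ $u_0$ falls from $1$), this ratio first \emph{increases}, and so does $\rho$. For $n=r=2$ the ratio is $u_0-\frac{2}{3}u_0^2$, which rises from $1/3$ to $3/8$ as $u_0$ falls to $3/4$, while $\rho(\infty)=C(2,2,\lambda)<1/3$. Neither monotone L'H\^opital nor term-wise binomial identities can establish a monotonicity that fails.

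What you need is only $\rho(t)\ge C(n,r,\lambda)$ for all $t$. Set $\phi(w)=r\binom{n}{r}B(u_0(w);n-r+2,r)-C(n,r,\lambda)\,u_0(w)$; the requirement is $\int_0^t\phi(w)\,\dd w\ge 0$ for all $t$. Since $\int_0^\infty\phi=0$ (this is your identity-ray computation), it suffices that $\phi$ is first nonnegative and then nonpositive. Write $\phi=u_0\bigl(r\binom{n}{r}B(u_0;n-r+2,r)/u_0-C(n,r,\lambda)\bigr)$. Unimodality of $s\mapsto B(s;n-r+2,r)/s$ then shows the bracket changes sign at most once as $u_0$ decreases from $1$ to $0$. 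This holds \emph{provided} the bracket is positive at $u_0=1$, i.e.\ $\frac{n-r+1}{n+1}>C(n,r,\lambda)$, which is exactly the condition $r\le r^*$.

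This is the paper's argument. It applies Barlow's lemma with the non-increasing weight $1/r_\lambda$ (your $h'$) and $g=A(1-F)(1-F)^{-\lambda}f$. There $A(1)>0$ is the $r\le r^*$ condition, and the single sign change comes from Lemma~\ref{lemma:unimodal-beta-average}. Your ramp decomposition is the same statement in dual form, once the condition $r\le r^*$ is added.
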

We also give the rate at which this performance guarantee converges to the full-information performance guarantee $(1- \lambda)^{1/\lambda}$.
\begin{corollary}
    \label{corol:C-n-r-lambda-asymp}
For the regime where $n$ is large, it holds that $$C(n,r^*,\lambda)=(1-\lambda)^{1/\lambda}-\Theta\left(\frac{1}{n}\right),$$
and $\lim_{n \rightarrow \infty} r^*/n = 1-(1-\lambda)^{1/\lambda}$.
\end{corollary}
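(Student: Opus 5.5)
The plan is to treat $C(n,r,\lambda)$ as a discretised version of a smooth function of $x = m/(n+1)$, where $m := n-r+1$. First I will rewrite the product with Gamma functions:
$$
\prod_{k=m}^{n}\frac{k}{k+1-\lambda}=\frac{\Gamma(n+1)\,\Gamma(m+1-\lambda)}{\Gamma(m)\,\Gamma(n+2-\lambda)} .
$$
I will then apply the standard expansion $\log\Gamma(z+a)-\log\Gamma(z+b)=(a-b)\log z+\frac{(a-b)(a+b-1)}{2z}+O(z^{-2})$ twice. For the $n$-factor take $z=N:=n+1$, $a=0$, $b=1-\lambda$; for the $m$-factor take $z=m$, $a=1-\lambda$, $b=0$. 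This gives
$$
\prod_{k=m}^{n}\frac{k}{k+1-\lambda}=x^{1-\lambda}\Bigl(1-\tfrac{\lambda(1-\lambda)}{2}\bigl(\tfrac1m-\tfrac1N\bigr)+O(m^{-2})\Bigr).
$$
To keep the error terms uniform in $m$, I would use explicit two-sided bounds for Gamma ratios (Gautschi/Wendel-type inequalities) rather than pure asymptotics. Substituting into the definition of $C$ yields
$$
C(n,r,\lambda)=\varphi(x)-\frac{(1-\lambda)^2x^{-\lambda}(1-x)}{2N}+O\bigl(m^{-2}\bigr),\qquad \varphi(x):=\frac{1-\lambda}{\lambda}\bigl(x^{1-\lambda}-x\bigr).
$$

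Next, I will analyse the limit function $\varphi$ on $(0,1]$. It is strictly concave, and $\varphi'(x)=0$ exactly when $(1-\lambda)x^{-\lambda}=1$, so its maximiser is $x^*=(1-\lambda)^{1/\lambda}$. At this point $x^{*\,-\lambda}=1/(1-\lambda)$, so
$$
\varphi(x^*)=\frac{1-\lambda}{\lambda}\,x^*\Bigl(\frac{1}{1-\lambda}-1\Bigr)=x^*=(1-\lambda)^{1/\lambda},
$$
which is the full-information benchmark. Since $\varphi''(x^*)<0$, restricting $x$ to the grid $\{m/N\}$ costs only $O(1/N^2)$ near $x^*$. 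The correction term evaluated at $x^*$ equals
$$
\frac{(1-\lambda)^2(1-x^*)}{2(1-\lambda)N}=\frac{(1-\lambda)(1-x^*)}{2N},
$$
which is strictly positive and of exact order $1/n$.

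Putting these together, for $m$ within a constant factor of $N$ we get $C=(1-\lambda)^{1/\lambda}-\frac{(1-\lambda)(1-x^*)}{2n}+o(1/n)$, which gives both the upper and the lower bound of order $\Theta(1/n)$. The lower bound comes from choosing $m=\lceil x^*N\rceil$. The upper bound needs the following uniform control:

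1. For $m\le \delta N$ or $m\ge(1-\delta)N$, the value of $\varphi$ is bounded away from $\varphi(x^*)$. I will show $C$ is then at most $\varphi(x)+O(1/N)$ using the one-sided Gamma inequality; in fact the correction always has the favourable negative sign. So these $m$ fall below the optimum by a constant.
2. In the middle range, the expansion above applies with $O(1/n^2)$ errors.

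Finally, $r^*$ as defined in Theorem \ref{thm:order-statistic-policy} is the maximiser of $C(n,\cdot,\lambda)$. By the estimate above, this maximiser has $m/N\to x^*$, since any $x$ away from $x^*$ loses a constant. Hence $r^*/n\to 1-(1-\lambda)^{1/\lambda}$.

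The main obstacle is making the $O(m^{-2})$ error uniform, and handling the edge ranges, especially small $m$, where asymptotics of $\Gamma(m+1-\lambda)/\Gamma(m)$ are not accurate. There I would rely on Gautschi's inequality $m^{1-\lambda}\ge\Gamma(m+1-\lambda)/\Gamma(m)$, up to the appropriate shift, to obtain a clean upper bound $C\le\varphi(x)+O(1/N)$.
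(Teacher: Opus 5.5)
Your proposal is correct and follows the paper's route. Both arguments write the product as a Gamma ratio, apply the expansion of $\Gamma(z+a)/\Gamma(z+b)$ twice, maximise the limit $\frac{1-\lambda}{\lambda}(x^{1-\lambda}-x)$ at $x^*=(1-\lambda)^{1/\lambda}$, and evaluate the first-order correction there, obtaining the same constant $\frac12(1-\lambda)(1-(1-\lambda)^{1/\lambda})$. Your extra ingredients (Wendel/Gautschi bounds for small $m$, the $O(1/N^2)$ grid-discretisation cost near $x^*$, and the explicit choice $m=\lceil x^*N\rceil$) make rigorous two steps the paper only asserts: that the maximiser of $g_n$ converges to $y^*$, and that $g_n(y^*)$ is a valid surrogate for $C(n,r^*,\lambda)$. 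Incidentally, the $\Omega(1/n)$ half of the gap also follows directly from Theorem~\ref{thm:general-pricing-impossibility}, since $C(n,r^*,\lambda)\le Q_{n,\lambda}$.
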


Note that, in particular, the quantity $C(n,r^*,\lambda)$ provides for every $n$ a lower bound on $Q_{n,\lambda}$. We complement the above result on the asymptotic behaviour of $C(n,r^*,\lambda)$ by providing a matching upper bound of the same asymptotic order that applies to any (potentially randomised) pricing rule.
\begin{theorem}
    \label{thm:general-pricing-impossibility}
    For every $n\ge 1$ and $\lambda\in(0,1)$,
    $$
    Q_{n,\lambda}\le(1-\lambda)^{1/\lambda}-\frac{1}{384\left(1-\lambda\right)}\left(1+\frac{\lambda}{1-\lambda}\frac{5}{4}\right)^{-\frac{1}{\lambda}-2}\frac{1}{n}.
    $$
\end{theorem}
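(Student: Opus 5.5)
Since $Q_{n,\lambda}$ is a sup over pricing rules of an inf over $F\in\mathcal{P}_\lambda$, it suffices to exhibit, for every kernel $\kappa$, one distribution in $\mathcal{P}_\lambda$ on which the ratio is at most the stated bound. We will use a randomisation (Bayesian) argument over a one-parameter family. We take the scale family of generalised Pareto distributions with constant generalised hazard rate, $G_\theta(x)=1-(1+\theta x)^{-1/\lambda}$ with $R_\lambda(x)=\theta x/\lambda$, which lies in $\mathcal{P}_\lambda$ and has mean $\E_\theta[V]=\lambda/((1-\lambda)\theta)$. For this family the revenue curve $p(1+\theta p)^{-1/\lambda}$ is maximised at $p^*_\theta=\lambda/((1-\lambda)\theta)$, i.e. at the mean. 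It attains exactly the full-information ratio $(1-\lambda)^{1/\lambda}$ there. A seller who does not know $\theta$ therefore loses revenue quadratically in the relative error of the price. Concretely, we write
\[
\frac{\Rev(G_\theta,p)}{\E_\theta[V]}=\phi(\theta p),\qquad \phi(u)=\frac{1-\lambda}{\lambda}u(1+u)^{-1/\lambda},
\]
with maximum $(1-\lambda)^{1/\lambda}$ at $u^*=\lambda/(1-\lambda)$. The first step is a quantitative strong-concavity estimate: $\phi(u)\le(1-\lambda)^{1/\lambda}-c\,(u-u^*)^2$ on a window such as $|u-u^*|\le u^*/4$, with an explicit $c$. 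Bounding $\phi''$ on that window is where the factor $(1+\tfrac{\lambda}{1-\lambda}\tfrac54)^{-1/\lambda-2}$ should come from, since $1+u\le 1+\tfrac54u^*$ there. Outside the window we use a constant-gap bound.

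Next, we place a prior $\mu$ on $\theta$, for instance a smooth density on a bounded interval or a conjugate Gamma-type prior, and bound
\[
\inf_\theta T(\kappa,G_\theta)\le \int T(\kappa,G_\theta)\,\dd\mu(\theta)=\E\big[\phi(\theta Y)\big].
\]
Here the expectation runs jointly over $\theta\sim\mu$, the samples, and the randomisation. Because $Y$ is a function of the samples only, the quadratic loss gives
\[
\E[\phi(\theta Y)]\le (1-\lambda)^{1/\lambda}-c\,\E\big[(\theta Y-u^*)^2\big]
\]
up to the window truncation. In turn, $\E[(\theta Y-u^*)^2]$ is bounded below by the Bayes risk of estimating $u^*/\theta$, rescaled. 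The final step is to show this Bayes risk is $\Omega(1/n)$, via a van Trees (Bayesian Cramér--Rao) inequality. The Fisher information of $\log\theta$ in $n$ samples from $G_\theta$ is $n\,\mathcal{I}$ with $\mathcal{I}$ a constant depending on $\lambda$ only (scale families have scale-invariant Fisher information), so the risk is at least of order $1/(n\mathcal{I}+J(\mu))$. For $n\ge1$ this is $\ge \mathrm{const}/n$, which yields the $\frac{1}{384(1-\lambda)}\frac1n$ factor after collecting constants.

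The main obstacle is making the truncation rigorous. The quadratic bound on $\phi$ holds only locally, so we must control the event that $\theta Y$ leaves the window. The cleanest route is to replace $\phi$ by a global quadratic-type upper envelope $\phi(u)\le(1-\lambda)^{1/\lambda}-c\min\{(u-u^*)^2,(u^*/4)^2\}$, using unimodality of $\phi$. We then need a lower bound on the Bayes risk of the truncated loss. We would handle this with a two-point (Le Cam) argument rather than van Trees: take $\theta_0$ and $\theta_1=\theta_0(1+\delta/\sqrt n)$, bound the KL divergence between the $n$-fold products by $O(\delta^2)$, and note that no single $Y$ can be within $u^*\delta/(4\sqrt n)$ of both targets. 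With $\delta$ chosen as a fixed small constant, this gives a loss of order $c\,u^{*2}\delta^2/n$ with probability bounded away from zero, which matches the shape of the stated explicit constant.
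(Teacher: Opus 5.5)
Your final argument is essentially the paper's proof. It applies Le Cam's two-point method to the generalised Pareto scale family $1-(1+\theta x)^{-1/\lambda}$, which is the paper's $F_\alpha$ reparametrised by its mean; it takes the quadratic loss bound from the curvature of the normalised revenue curve at $\frac{5}{4}$ times the optimal price, uses monotonicity of the loss on each side of the optimum to handle distant prices, and finishes with Pinsker plus additivity of KL over the $n$ samples, so the Bayesian/van Trees detour is unnecessary. What you leave implicit is the bookkeeping that actually produces the stated constant rather than just its shape: the paper takes $\delta=1/2$, uses $D_{\mathrm{KL}}\le \epsilon^2/(1-\epsilon)$ so that the total variation is at most $1/2$, and gets a loss of at least $\frac{1}{18}\ell''(5/4)\,\epsilon^2$ whenever the midpoint test errs, which together give exactly $\ell''(5/4)/(288n)=\frac{1}{384(1-\lambda)}\left(1+\frac{\lambda}{1-\lambda}\frac{5}{4}\right)^{-\frac{1}{\lambda}-2}\frac{1}{n}$.
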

Together, these results show that the simple and interpretable order-statistic-based pricing is an asymptotically optimal pricing rule. In light of \citet[Proposition 4]{schweizer2019performance}, we strongly suspect that the sample mean is also asymptotically optimal. The advantage of our approach is that, in addition to establishing asymptotic optimality, we obtain exact finite-sample guarantees for the order-statistic rule, which appear not readily available from the corresponding analysis of the sample mean.

Numerical results regarding $C(n,r^*,\lambda)$ and the proofs for the asymptotic results are in Appendix \ref{appendix:lambda-many-samples-asymptotic}.
The proof of Theorem \ref{thm:general-pricing-impossibility} is in Appendix \ref{appendix:proof-general-pricing-impossibility}. It follows similar ideas as in \cite{huang2018making} based on Le Cam's two-point method \cite{tsybakov2008lower}.
The rest of this section is devoted to proving Theorem~\ref{thm:order-statistic-policy}.

\subsection{Proof of Theorem \ref{thm:order-statistic-policy}}
The missing proofs in this section are deferred to Appendix \ref{appendix:order-statistic-policy}.
We denote by $V_{r:n}$ the $r$-th order statistic, i.e., $V_{r:n}$ is the $r$-th smallest value from the $V_1,\dots,V_n$. We will first state various lemmas that are used in the proof, and often rely on the function
\begin{align}
        C(n,r,\lambda)&=\frac{1-\lambda}{\lambda}\left(\prod_{k=n-r+1}^n\frac{k}{k+1-\lambda}-\frac{n-r+1}{n+1}\right)\nonumber\\
        &=
        \frac{1-\lambda}{\lambda}r\binom{n}{r}\left(B(n+2-r-\lambda, r)-B(n+2-r, r)\right),
        \label{eq:C-n-r-lambda}
\end{align}
where $B(a,b) =\int_0^\infty t^{1-a}(1-t)^{1-b}\mathrm{d}t$ is the Beta function.
The equivalence of the above two expressions is established in Appendix \ref{appendix:lambda-many-samples-asymptotic}.
The former representation is better for numerical calculations, while the latter is more convenient for proofs.

\begin{lemma}
    \label{lemma:constant-beta-equivalence}
    For any $1 \leq r \leq n$ and $\lambda \in (0,1)$, it holds that
    $$
    B(n+2-r,r) > (1-\lambda)B(n+2-r-\lambda,r) \; \iff \;  C(n,r,\lambda)>C(n,r-1,\lambda).
    $$
\end{lemma}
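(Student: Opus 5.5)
The plan is to work with the product representation in \eqref{eq:C-n-r-lambda}, compute the increment $C(n,r,\lambda)-C(n,r-1,\lambda)$ explicitly, and then translate the resulting condition into Beta functions using the equality of the two representations of $C(n,r,\lambda)$ (taken as given, since it is established in Appendix~\ref{appendix:lambda-many-samples-asymptotic}).

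\textbf{Step 1: the increment in product form.} Write $P_r:=\prod_{k=n-r+1}^n \frac{k}{k+1-\lambda}$ for $0\le r\le n$, with $P_0=1$ as the empty product. The product formula then gives $C(n,0,\lambda)=\frac{1-\lambda}{\lambda}(1-1)=0$, so the case $r=1$ is covered by the same computation. Because $P_r=P_{r-1}\cdot\frac{n-r+1}{n-r+2-\lambda}$, we get
\[
P_r-P_{r-1}=-P_{r-1}\frac{1-\lambda}{n-r+2-\lambda}=-P_r\frac{1-\lambda}{n-r+1}.
\]
Also $\frac{n-r+1}{n+1}-\frac{n-r+2}{n+1}=-\frac{1}{n+1}$. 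Hence
\[
C(n,r,\lambda)-C(n,r-1,\lambda)=\frac{1-\lambda}{\lambda}\left(\frac{1}{n+1}-\frac{(1-\lambda)P_r}{n-r+1}\right).
\]
Since $\frac{1-\lambda}{\lambda}>0$ for $\lambda\in(0,1)$, we have $C(n,r,\lambda)>C(n,r-1,\lambda)$ if and only if $\frac{(1-\lambda)P_r}{n-r+1}<\frac{1}{n+1}$.

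\textbf{Step 2: translate to Beta functions.} A direct factorial computation gives
\[
r\binom{n}{r}B(n+2-r,r)=r\binom{n}{r}\frac{(n+1-r)!\,(r-1)!}{(n+1)!}=\frac{n-r+1}{n+1}.
\]
Comparing the two expressions in \eqref{eq:C-n-r-lambda} then yields $P_r=r\binom{n}{r}B(n+2-r-\lambda,r)$. Substituting both identities, the condition from Step~1 reads
\[
\frac{(1-\lambda)\,r\binom{n}{r}B(n+2-r-\lambda,r)}{n-r+1}<\frac{r\binom{n}{r}B(n+2-r,r)}{n-r+1}.
\]
Multiplying through by the positive quantity $(n-r+1)/(r\binom nr)$ gives exactly $(1-\lambda)B(n+2-r-\lambda,r)<B(n+2-r,r)$. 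This establishes the equivalence.

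\textbf{Main obstacle.} There is no real obstacle; the argument is an algebraic identity. The only care needed is bookkeeping: the boundary case $r=1$ (handled by the convention $C(n,0,\lambda)=0$), and checking that the factor used in each step to rewrite the inequality, namely $\frac{1-\lambda}{\lambda}$ and then $(n-r+1)/(r\binom nr)$, is positive, which holds since $\lambda\in(0,1)$ and $1\le r\le n$.
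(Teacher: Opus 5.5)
Your proof is correct and reaches exactly the identity the paper derives,
\[
C(n,r,\lambda)-C(n,r-1,\lambda)=\frac{1-\lambda}{\lambda}\,\frac{r}{n-r+1}\binom{n}{r}\bigl(B(n+2-r,r)-(1-\lambda)B(n+2-r-\lambda,r)\bigr).
\]
The underlying idea is the same: compute the increment explicitly and factor out a positive constant. Only the bookkeeping differs.

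The paper stays in the Beta representation throughout. It uses $\binom{n}{r-1}=\frac{r}{n-r+1}\binom{n}{r}$ and $B(x+1,y-1)=\frac{x}{y-1}B(x,y)$ to rewrite $C(n,r-1,\lambda)$ in terms of the same Beta values that appear in $C(n,r,\lambda)$. You instead telescope the ratio $P_r/P_{r-1}$ in the product form. You then convert at the end via $r\binom{n}{r}B(n+2-r,r)=\frac{n-r+1}{n+1}$ and $P_r=r\binom{n}{r}B(n+2-r-\lambda,r)$.

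Your version handles $r=1$ cleanly through the empty-product convention $C(n,0,\lambda)=0$. The paper's Beta recurrence at $r=1$ formally involves the divergent $B(\cdot,0)$. The price is that you rely on the equivalence of the two forms of $C$. That is harmless, since it follows directly from $B(a,b)=\Gamma(a)\Gamma(b)/\Gamma(a+b)$.
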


\begin{lemma}
    \label{lemma:C-unimodality}
    Let $n\ge 2$. The function $q:\{1,\dots,n\}\to \mathbb{R}$ given by $q(r)=C(n,r,\lambda)$ is unimodal, meaning that the function $q(r) - q(r-1)$ changes sign precisely once.
\end{lemma}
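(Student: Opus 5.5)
By Lemma~\ref{lemma:constant-beta-equivalence}, the sign of $q(r)-q(r-1)$ is the sign of
\[
D(r) := B(n+2-r,r)-(1-\lambda)B(n+2-r-\lambda,r).
\]
So it suffices to show that, as $r$ runs over $\{1,\dots,n\}$, $D(r)$ changes sign exactly once, from positive to negative. The plan is to look at the ratio
\[
\rho(r) := \frac{B(n+2-r-\lambda,r)}{B(n+2-r,r)}.
\]
Then $D(r)>0$ if and only if $\rho(r)<1/(1-\lambda)$. It is therefore enough to prove two things: $\rho$ is strictly increasing in $r$, and the threshold $1/(1-\lambda)$ is crossed inside the range.

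\textbf{Step 1: closed form for $\rho$.} Using $B(a,b)=\Gamma(a)\Gamma(b)/\Gamma(a+b)$,
\[
\rho(r)=\frac{\Gamma(n+2-r-\lambda)\,\Gamma(n+2)}{\Gamma(n+2-r)\,\Gamma(n+2-\lambda)}.
\]
This is the reciprocal of the product $\prod_{k=n-r+1}^{n} k/(k+1-\lambda)$ appearing in $C$, up to a factor depending only on $n$.

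\textbf{Step 2: monotonicity.} Compare consecutive values:
\[
\frac{\rho(r+1)}{\rho(r)}=\frac{\Gamma(n+1-r-\lambda)\,\Gamma(n+2-r)}{\Gamma(n+2-r-\lambda)\,\Gamma(n+1-r)}=\frac{n+1-r}{n+1-r-\lambda}>1,
\]
since $\lambda\in(0,1)$ and $n+1-r\ge 1$. Hence $\rho$ is strictly increasing, so $D$ has at most one sign change, and any change is from positive to negative.

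\textbf{Step 3: the endpoints.} At $r=1$,
\[
\rho(1)=\frac{\Gamma(n+1-\lambda)\,\Gamma(n+2)}{\Gamma(n+1)\,\Gamma(n+2-\lambda)}=\frac{n+1}{n+1-\lambda}<\frac{1}{1-\lambda}.
\]
The last inequality is equivalent to $(n+1)(1-\lambda)<n+1-\lambda$, i.e.\ $n\lambda>0$. So $D(1)>0$, and in fact $D(r)>0$ on an initial segment.

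For the other end, if the sign change is meant to occur inside $\{2,\dots,n\}$, one needs $\rho(n)\ge 1/(1-\lambda)$ at least for $n\ge 2$. We have
\[
\rho(n)=\frac{\Gamma(2-\lambda)\,\Gamma(n+2)}{\Gamma(n+2-\lambda)}\sim \Gamma(2-\lambda)\,n^{\lambda}.
\]
This exceeds $1/(1-\lambda)$ only for large $n$, not uniformly in $n$ and $\lambda$. This is the main obstacle: with the strict reading of ``changes sign precisely once'', small $n$ (e.g.\ $n=2$ with $\lambda$ small) could produce no sign change at all.

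My plan is to interpret unimodality in the sense needed for Theorem~\ref{thm:order-statistic-policy}. The sign sequence of $q(r)-q(r-1)$ (with the convention $q(0)=C(n,0,\lambda)=0$) is a block of positives followed possibly by negatives, which is exactly what Steps 2--3 give. Then $r^*$ is the last index with a positive increment, and $q(r^*)$ is the maximum. If the statement really requires a change within the range, I would first check $\rho(n)>1/(1-\lambda)$ directly via the Gamma ratio, using Gautschi's inequality $\Gamma(x+1)/\Gamma(x+s)>x^{1-s}$. If that fails for small $n$, I would restate the lemma as ``changes sign at most once''. Either form suffices for identifying $r^*$ as the maximiser.
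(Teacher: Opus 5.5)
This is essentially the paper's argument: both reduce via Lemma~\ref{lemma:constant-beta-equivalence} to the sign of $\frac{\Gamma(n+2-\lambda)}{\Gamma(n+2)}-(1-\lambda)\frac{\Gamma(n+2-r-\lambda)}{\Gamma(n+2-r)}$ and use that the subtracted Gamma ratio increases in $r$ (your ratio computation in Step~2 justifies what the paper only asserts), with the paper checking $q(2)-q(1)>0$ by an explicit formula where you check the $r=1$ endpoint; your aside in Step~1 is wrong, since in fact $\rho(r)=\frac{n+1}{n+1-r}\prod_{k=n-r+1}^{n}\frac{k}{k+1-\lambda}$ and not a reciprocal, but nothing depends on it. Your doubt about ``precisely once'' is justified and easy to settle: for $n=2$ one has $\rho(2)=\frac{6}{(3-\lambda)(2-\lambda)}<\frac{1}{1-\lambda}$ for every $\lambda\in(0,1)$ (the case $n=2$ of the paper's $s(2)>0$), so there is no sign change at all, and the paper's proof, like yours, really establishes ``at most one sign change, from positive to negative'', which is all that Theorem~\ref{thm:order-statistic-policy} uses.
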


\begin{lemma}
    \label{lemma:order-statistic-beta}
    Let $V_1,\dots,V_n$ be non-negative random variables with distribution $F$ and finite mean, then
    $$
    \E[\Rev(F,V_{r:n})]=r\binom{n}{r}\int_0^\infty B(1-F(x); n-r+2,r)\dd x,
    $$
    where $B(x;a,b) = \int_0^x t^{1-a}(1-t)^{1-b}\mathrm{d}t$ is the incomplete Beta function. 
\end{lemma}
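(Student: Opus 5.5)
The plan is to write the expected revenue as a single integral against the density of $V_{r:n}$, expand the factor $y$ as $\int_0^y \dd x$, and swap the order of integration. Throughout I read the (incomplete) Beta function with the standard integrand $t^{a-1}(1-t)^{b-1}$; this is what makes the exponents in the statement match.

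\textbf{Step 1: density of the order statistic.} Since $F$ is differentiable with density $f$, the $r$-th order statistic of $n$ i.i.d.\ draws has density
\[
g_{r}(y)=\frac{n!}{(r-1)!(n-r)!}F(y)^{r-1}(1-F(y))^{n-r}f(y)=r\binom{n}{r}F(y)^{r-1}(1-F(y))^{n-r}f(y).
\]
Writing $S=1-F$, we get $\E[\Rev(F,V_{r:n})]=\int_0^\infty y\,S(y)\,g_r(y)\,\dd y$.

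\textbf{Step 2: layer-cake and Fubini.} Write $y=\int_0^\infty \indicator{x<y}\dd x$ and exchange the order of integration. This is justified by Tonelli, since the integrand is non-negative. It gives
\[
\E[\Rev(F,V_{r:n})]=\int_0^\infty\left(\int_x^\infty S(y)g_r(y)\,\dd y\right)\dd x .
\]
Finiteness of the mean guarantees that both sides are finite. This follows because $y S(y)\le y$, so the revenue is at most $\E[V_{r:n}]\le n\E[V]<\infty$.

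\textbf{Step 3: substitution in the inner integral.} In the inner integral, substitute $t=S(y)$, so that $\dd t=-f(y)\,\dd y$. The limits $y=x$ and $y=\infty$ become $t=S(x)$ and $t=0$. The integrand $S(y)\,F(y)^{r-1}S(y)^{n-r}f(y)$ becomes $t^{n-r+1}(1-t)^{r-1}$. Hence
\[
\int_x^\infty S(y)g_r(y)\,\dd y=r\binom{n}{r}\int_0^{S(x)}t^{(n-r+2)-1}(1-t)^{r-1}\,\dd t=r\binom{n}{r}B(1-F(x);n-r+2,r).
\]
Plugging this back into Step 2 gives the claimed identity.

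There is no real obstacle here. The only points needing care are the interchange of integrals, which is covered by non-negativity, and matching the exponents to the Beta-function convention.
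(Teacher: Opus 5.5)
Your proof is correct. It reaches the identity by a slightly different route from the paper's. The paper sets $A(x)=B(1-F(x);n-r+2,r)$ and computes $A'(x)=-F(x)^{r-1}(1-F(x))^{n-r+1}f(x)$ by the chain rule. It then integrates $-r\binom{n}{r}\int_0^\infty xA'(x)\,\dd x$ by parts. That leaves a boundary term $xA(x)$ at infinity, which the paper removes using $A(x)\le 1-F(x)$ and a separate tail lemma ($x(1-F(x))\to 0$ when $F$ has finite mean). Your layer-cake step $y=\int_0^\infty \indicator{x<y}\dd x$ followed by Tonelli performs that integration by parts implicitly. No boundary term ever appears, so the auxiliary tail lemma is not needed. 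Your substitution $t=1-F(y)$ in the inner integral is exactly the paper's derivative computation read in reverse, since $\int_x^\infty -A'(y)\,\dd y=A(x)$. A small bonus of your version is that Tonelli gives the identity in $[0,\infty]$ for every $F$ with a density. The finite-mean hypothesis is then only needed to say the common value is finite, which your bound $\E[V_{r:n}]\le n\E[V]$ handles; in the paper's argument it is what controls the boundary term. Your reading of the Beta convention, with integrand $t^{a-1}(1-t)^{b-1}$, is also the one the paper's own proof uses; the exponents $1-a$ and $1-b$ in the statement are a typo.
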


\begin{lemma}
    \label{lemma:const-gen-hazard-order-stat}
    Fix $\lambda\in(0,1)$ and let $F(x)=1-(1+x)^{-1/\lambda}$, then for $V_1,\dots V_n\sim F$
    $$
    \frac{\E[\Rev(F,V_{r:n})]}{\E[V]}=C(n,r,\lambda),
    $$
    with $C(n,r,\lambda)$ as in \eqref{eq:C-n-r-lambda}.
    % expressed in terms of the beta function $B(a,b) = B(\infty,a,b) = \int_0^\infty t^{1-a}(1-t)^{1-b}\mathrm{d}t$:
    % $$
    % C(n,r,\lambda)=\frac{1-\lambda}{\lambda}r\binom{n}{r}\left(B(n+2-r-\lambda, r)-B(n+2-r, r)\right).
    % $$
\end{lemma}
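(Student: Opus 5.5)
The plan is to combine Lemma~\ref{lemma:order-statistic-beta} with an exchange of the order of integration. The specific form of $F$ makes the inner measure of a level set explicit.

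\textbf{Denominator.} For $F(x)=1-(1+x)^{-1/\lambda}$ this is exactly the distribution $G$ from Case~1 in the proof of Theorem~\ref{thm:single-sample-det}, so $\E[V]=\lambda/(1-\lambda)$. This is finite because $\lambda<1$, so Lemma~\ref{lemma:order-statistic-beta} applies.

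\textbf{Numerator.} Write $S(x)=1-F(x)=(1+x)^{-1/\lambda}$. I read the incomplete Beta function with its standard integrand $t^{a-1}(1-t)^{b-1}$, which is clearly what is intended. Set $a=n-r+2$ and $b=r$. Lemma~\ref{lemma:order-statistic-beta} gives
\[
\E[\Rev(F,V_{r:n})]=r\binom{n}{r}\int_0^\infty\int_0^{S(x)} t^{a-1}(1-t)^{b-1}\,\dd t\,\dd x .
\]
The integrand is non-negative, so Tonelli's theorem lets me swap the integrals. For fixed $t\in(0,1)$, the set $\{x\ge 0: S(x)>t\}$ equals $\{x\ge0 : x<t^{-\lambda}-1\}$, which has Lebesgue measure $t^{-\lambda}-1$. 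Hence the double integral equals
\[
\int_0^1 t^{a-1}(1-t)^{b-1}\left(t^{-\lambda}-1\right)\dd t = B(a-\lambda,b)-B(a,b).
\]
Both Beta functions are finite, since $a-\lambda=n-r+2-\lambda>1>0$ and $b=r\ge1$.

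\textbf{Combining.} Dividing by $\E[V]=\lambda/(1-\lambda)$ gives
\[
\frac{1-\lambda}{\lambda}\, r\binom{n}{r}\bigl(B(n+2-r-\lambda,r)-B(n+2-r,r)\bigr),
\]
which is the second expression for $C(n,r,\lambda)$ in \eqref{eq:C-n-r-lambda}. The paper establishes its equivalence to the product form separately, so nothing more is needed there.

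\textbf{Main obstacle.} No step is genuinely hard. The only points needing care are the justification of the interchange, which Tonelli covers because the integrand is non-negative, and the correct identification of the level-set measure $t^{-\lambda}-1$.
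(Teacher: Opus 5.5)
Your proof is correct and follows essentially the paper's route: apply Lemma~\ref{lemma:order-statistic-beta}, interchange the order of integration, and reduce to $\int_0^1 t^{n-r+1}(1-t)^{r-1}(t^{-\lambda}-1)\,\dd t$. The paper first substitutes $s=1-F(x)$ and then swaps the $(s,t)$ integrals, whereas you swap directly in $(x,t)$ using the level-set measure $t^{-\lambda}-1$; this is the same computation.
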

We now continue with the proof based on the above lemmas in combination with a technique used in \cite[Proposition 10]{schweizer2019performance} that relies on a lemma from \citet[Chapter 4, Lemma 7.1]{barlow1981statistical}.

%With the above two identities and the technique used by \cite{schweizer2019performance} to prove Proposition 10, we prove the theorem.
\begin{cleverproof}{thm:order-statistic-policy}
    First of all, by Lemma \ref{lemma:C-unimodality} the map $r\mapsto C(n,r,\lambda)$ is unimodal, so it is maximised for $r^*=\max\{r:C(n,r,\lambda)>C(n,r-1,\lambda)\}$. Lemma \ref{lemma:const-gen-hazard-order-stat} gives an upper bound on the inner infimum, so
    $$
    \max_{r\in\{1,\dots,n\}}\inf_{F \in \mathcal{P}_\lambda} \frac{\mathbb{E}\big[\Rev(F,V_{r:n} ))\big]}{\mathbb{E}[V]}\le \max_{r\in\{1,\dots,n\}}C(n,r,\lambda)=C(n,r^*,\lambda).
    $$
    We now derive a matching lower bound, meaning we need to show for all $F \in \mathcal{P}_{\lambda}$ and $r = r^*$ that
    \begin{align}
    \E[V_{r:n}(1-F(V_{r:n}))]-C(n,r,\lambda)\E[V] \geq 0.
        \label{eq:C-lower-bound}
    \end{align}
    In fact, we will show \eqref{eq:C-lower-bound} for every $r \leq r^*$, and because the maximum of $C(n,r,\lambda)$ is attained at $r = r^*$, $C(n,r^*,\lambda)$ it is the best lower bound we can obtain in this way.
    We first define 
     \begin{align*}
    g(x):=A(1-F(x))(1-F(x))^{-\lambda}f(x) \text{ with } A(t)=r\binom{n}{r}\frac{B(t;n-r+2,r)}{t}-C(n,r,\lambda),
     \end{align*}
    and $h(x) := 1/r_{\lambda}(x) = (1-F(x))^{\lambda + 1}/f(x)$ so that, using Lemma \ref{lemma:const-gen-hazard-order-stat} in the first equality,
    \begin{flalign*}
    & \E[V_{r:n}(1-F(V_{r:n}))]-C(n,r,\lambda)\E[V]\\
     & \ \  = r\binom{n}{r}\int_0^\infty B(1-F(x); n-r+2,r)\dd x-C(n,r,\lambda)\int_0^\infty(1-F(x))\dd x\\
      & \ \  =\int_0^\infty A(1-F(x))(1-F(x))\dd x\\
       & \ \   =\int_0^\infty A(1-F(x))\frac{(1-F(x))^{-\lambda}f(x)}{r_\lambda(x)}\dd x\\
        & \ \  = \int_0^\infty h(x)g(x)\dd x.
    \end{flalign*}
To show that $\int_0^\infty h(x)g(x)\mathrm{d}x \geq 0$, which establishes \eqref{eq:C-lower-bound} for a given $r$, we employ the following lemma, whose proof is included in Appendix \ref{appendix:order-statistic-policy} for completeness.
\begin{lemma}[Chapter 4, Lemma 7.1 in \cite{barlow1981statistical}]
    \label{lemma:barlow}
    Let $g$ be a function with $g(0)\ge 0$, $\int_0^\infty g(x)\dd x=0$ and that changes sign at most once in $[0,\infty)$ and let $h$ be a non-increasing function, then
    $$
    \int_0^\infty h(x)g(x)\dd x\ge 0.
    $$
\end{lemma}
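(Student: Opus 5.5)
The plan is to reduce the claim to a single pointwise comparison by splitting $g$ at the point where it changes sign. Let $x_0\in[0,\infty]$ be that point. Because $g(0)\ge 0$ and $g$ changes sign at most once, we have $g(x)\ge 0$ for $x<x_0$ and $g(x)\le 0$ for $x>x_0$. If $x_0=\infty$, then $g\ge 0$ everywhere, and $\int_0^\infty g=0$ forces $g=0$ almost everywhere. In that case $\int hg=0$ and we are done. So we assume $x_0<\infty$.

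The key step uses the fact that $h$ is non-increasing to compare $h(x)$ with the constant $h(x_0)$, with the comparison going in opposite directions on the two sides of $x_0$.
\begin{itemize}
\item For $x<x_0$ we have $h(x)\ge h(x_0)$ and $g(x)\ge 0$, so $h(x)g(x)\ge h(x_0)g(x)$.
\item For $x>x_0$ we have $h(x)\le h(x_0)$ and $g(x)\le 0$, so again $h(x)g(x)\ge h(x_0)g(x)$.
\end{itemize}
Thus $(h(x)-h(x_0))\,g(x)\ge 0$ for every $x\ge 0$. Integrating this inequality gives
\[
\int_0^\infty h(x)g(x)\,\dd x\;\ge\; h(x_0)\int_0^\infty g(x)\,\dd x \;=\; 0,
\]
which is the claim.

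The only real obstacle is integrability. We need $\int hg$ to be well defined, and we need to be allowed to split $\int hg = h(x_0)\int g + \int (h-h(x_0))g$. The second integral has a non-negative integrand, so it is well defined in $[0,\infty]$, and this suffices if $\int g$ converges absolutely. If $\int hg$ converges only conditionally, I would run the same argument on $[0,M]$. This yields $\int_0^M hg \ge h(x_0)\int_0^M g$, and letting $M\to\infty$ gives the result, since $\int_0^M g\to 0$.

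In the application we have $h=1/r_\lambda\ge 0$, which is non-increasing by $\lambda$-regularity. The function $g$ is integrable because $\int_0^\infty g\,\dd x=\int_0^1 A(t)t^{-\lambda}\,\dd t$ after the substitution $t=1-F(x)$. The sign conditions on $g$, namely $g(0)\ge0$, the single sign change, and zero integral, are not part of this lemma. They are to be verified separately for $r\le r^*$.
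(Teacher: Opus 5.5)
Your proposal is correct and is essentially the paper's proof: you fix the sign-change point $x_0$, observe that $h(x)g(x)\ge h(x_0)g(x)$ on both sides of it because $h$ is non-increasing, and integrate to get $\int_0^\infty hg\,\dd x\ge h(x_0)\int_0^\infty g\,\dd x=0$. Your separate treatment of the no-sign-change case and your truncation to $[0,M]$ for the integrability issue are careful additions that the paper leaves implicit.
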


\noindent The remainder of the proof is now devoted to showing that our $g$ and $h$ as defined above satisfy the conditions of Lemma \ref{lemma:barlow}. First, note that the function $r_{\lambda}$ is non-decreasing by definition, and so the function $h$ is therefore non-increasing. We next prove that the three properties required for $g$ hold true.

Note that $g(0) = A(1)f(0)$, because $F(0) = 0$, and so $g(0) \geq 0$ if and only if $A(1) \geq 0$ as $f(0) \geq 0$ is always true for a probability density function. Observe that
    \begin{align}
    \label{eq:A-1-condition}
        A(1)  = \; &r\binom{n}{r}B(n-r+2,r)-C(n,r,\lambda)>0\nonumber\\
        \iff \; &r\binom{n}{r}B(n-r+2,r)-\frac{1-\lambda}{\lambda}r\binom{n}{r}\left(B(n+2-r-\lambda, r)-B(n+2-r, r)\right)>0\nonumber\\
        \iff \; &B(n+2-r,r) > (1-\lambda)B(n+2-r-\lambda,r)\nonumber\\
        \iff \; &C(n,r,\lambda) > C(n,r-1,\lambda),
    \end{align}
where the first equivalence follows by plugging in the definition of $C(n,r,\lambda)$, the second by simplification, and the third by Lemma \ref{lemma:constant-beta-equivalence}. By the unimodality of $r \mapsto C(n,r,\lambda)$ established in Lemma \ref{lemma:C-unimodality} it follows \eqref{eq:A-1-condition} is satisfied for all $r \leq r^*$. To conclude, for all $r \leq r^*$ we have $g(0) > 0$.

Next, using the substitution $s = 1 - F(x)$, so that $\mathrm{d}s = -f(x)dx$, we obtain
    \begin{align*}
        \int_0^\infty g(x)\dd x&=\int_0^\infty A(1-F(x))(1-F(x))^{-\lambda}f(x) \dd x\\
        &= - \int_1^0 A(s)s^{-\lambda} \dd s
        = \int_0^1 A(s)s^{-\lambda} \dd s\\
        &= r\binom{n}{r}\int_0^1 B(s,n-r+2,r)s^{-\lambda-1} \dd s+\frac{C(n,r,\lambda)}{1-\lambda}\\
        &= r\binom{n}{r}\int_0^1 \int_0^s t^{n-r+1}(1-t)^{r-1}s^{-\lambda-1}\dd t \dd s+\frac{C(n,r,\lambda)}{1-\lambda}\\
        &= r\binom{n}{r}\int_0^1  t^{n-r+1}(1-t)^{r-1}\int_t^1s^{-\lambda-1}\dd s \dd t+\frac{C(n,r,\lambda)}{1-\lambda}\\
        &= \frac{r}{\lambda}\binom{n}{r}\int_0^1  t^{n-r+1}(1-t)^{r-1}(t^{-\lambda}-1)\dd t+\frac{C(n,r,\lambda)}{1-\lambda}\\
        &= \frac{r}{\lambda}\binom{n}{r}\left(B(n+2-r-\lambda, r)-B(n+2-r, r)\right)+\frac{C(n,r,\lambda)}{1-\lambda}=0.
    \end{align*}
where the last equality holds by definition of $C(n,r,\lambda)$.

It remains to show that $g$ changes sign at most once on $[0,\infty)$. Because $(1-F(x))^{-\lambda}f(x)\ge 0$, the sign of $g$ is uniquely determined by the sign of the function $A$.  To argue about the sign of $A$ we use the following lemma.

\begin{lemma}
    \label{lemma:unimodal-beta-average}
    The function $\alpha(t)=\frac{1}{t}B(t;a,b)$ with $a>1,b\ge 1$ is unimodal on the interval $[0,1]$ with a maximum attained at some $t^* \in (0,1)$. 
\end{lemma}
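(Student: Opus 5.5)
Note that the paper's Beta-function notation has exponents $1-a,1-b$, but its later use gives the integrand $t^{a-1}(1-t)^{b-1}$. I therefore work with $B(t;a,b)=\int_0^t s^{a-1}(1-s)^{b-1}\dd s$, where $a>1$ and $b\ge 1$.

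The plan is to study the sign of $\alpha'(t)$ directly. Write $\beta(t)=t^{a-1}(1-t)^{b-1}$ for the integrand, so that $B(t;a,b)=\int_0^t\beta(s)\dd s$. Differentiating gives
$$
\alpha'(t)=\frac{t\beta(t)-B(t;a,b)}{t^2}.
$$
So the sign of $\alpha'$ is the sign of
$$
\phi(t):=t\beta(t)-\int_0^t\beta(s)\dd s .
$$
Then $\phi(0)=0$ and
$$
\phi'(t)=\beta(t)+t\beta'(t)-\beta(t)=t\beta'(t).
$$

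Next I use the shape of $\beta$. Since $\beta'(t)=t^{a-2}(1-t)^{b-2}\bigl((a-1)(1-t)-(b-1)t\bigr)$, the bracket is a linear function. It is positive at $t=0$ (because $a>1$) and equals $-(b-1)\le 0$ at $t=1$. Hence $\beta$ increases on $(0,t_0)$ and is non-increasing on $(t_0,1)$, where $t_0=(a-1)/(a+b-2)$; in the case $b=1$ we have $t_0=1$ and $\beta$ is increasing throughout. Consequently $\phi$ is strictly increasing on $(0,t_0)$ and strictly decreasing on $(t_0,1)$. Since $\phi(0)=0$, this gives $\phi>0$ on $(0,t_0]$, and after $t_0$ $\phi$ decreases strictly, so it crosses zero at most once on $(t_0,1)$.

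It remains to check the sign at the endpoint $t=1$. If $b>1$, then $\beta(1)=0$, so $\phi(1)=-B(a,b)<0$. Thus $\phi$ has exactly one zero $t^*\in(t_0,1)$, and $\alpha$ increases on $(0,t^*)$ and decreases on $(t^*,1)$. This is the required unimodality, with interior maximiser $t^*$.

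If $b=1$, then $B(t;a,1)=t^a/a$, so $\alpha(t)=t^{a-1}/a$ is increasing and the maximum sits at $t=1$. The statement as written (with $t^*\in(0,1)$) therefore needs $b>1$, or should read $t^*\in(0,1]$. In the application $b=r$, so the case $r=1$ is degenerate. Even then $\alpha$ is monotone, which still gives that $A$ changes sign at most once, and that is all the main proof uses.

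The only delicate point is the boundary behaviour near $t=0$, namely that $\alpha$ and $\phi$ extend continuously there. This holds because $\beta(t)\sim t^{a-1}$ with $a>1$, so $B(t;a,b)=O(t^a)$ and $\alpha(t)\to 0$. Beyond this, the argument is a single-crossing argument and I expect no real obstacle.

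Finally, for the main proof, this result applies with $a=n-r+2>1$ and $b=r\ge 1$. Since $A(t)=r\binom{n}{r}\alpha(t)-C(n,r,\lambda)$ is a unimodal function minus a constant, $A$ takes the form negative–positive–negative, possibly with some phases missing. The condition $A(1)>0$ for $r\le r^*$ removes the last negative phase. So as $s=1-F(x)$ moves from $1$ down to $0$, $A$ goes from positive to (possibly) negative: it changes sign at most once, which is the remaining condition needed for Lemma \ref{lemma:barlow}.
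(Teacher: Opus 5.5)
Your proof is correct and follows essentially the same route as the paper's. Both arguments reduce the sign of $\alpha'$ to that of $D(t)=t\,t^{a-1}(1-t)^{b-1}-B(t;a,b)$, show that $D'(t)=t^{a-1}(1-t)^{b-2}\bigl((a-1)(1-t)-(b-1)t\bigr)$ changes sign once at $(a-1)/(a+b-2)$, and conclude from $D(0)=0$ and $D(1)=-B(a,b)<0$, with $b=1$ handled separately via $\alpha(t)=t^{a-1}/a$. Your remark that for $b=1$ the maximum is at $t=1$ rather than in $(0,1)$ is a valid correction to the statement, which the paper's proof glosses over; as you note, it does not affect the single-sign-change property used in Theorem~\ref{thm:order-statistic-policy}.
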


Because $1 - F(x)$ is non-increasing, the above lemma implies that also the mapping $x\mapsto A(1-F(x))$ is unimodal. This is equivalent to saying that the mapping $t \mapsto A(t)$ for $t \in (0,1]$ is unimodal, because $0 \leq 1 - F(x) \leq 1$. Together with the already established fact that $A(1) > 0$, this implies that the mapping $t \mapsto A(t)$ changes sign at most once, because in order for this mapping to change sign twice, it is necessary that $A(0) < 0$ and $A(1) < 0$.

To conclude, all conditions on $g$ in Lemma \ref{lemma:barlow} are satisfied as long as $r \leq r^*$, and so we have established \eqref{eq:C-lower-bound} for every such $r$, and therefore in particular for $r = r^*$.
\end{cleverproof}

\section{Conclusion}

In this paper we studied sample-based posted pricing relative to the personalised pricing benchmark given by the mean valuation. 
We fully resolve the problem for valuations with an increasing hazard rate by showing that posting the sample mean as the price is the optimal pricing rule among all (possibly randomised) pricing rules. 
For $\lambda$-regular valuation distributions we find the optimal deterministic pricing rule in the single-sample setting, and for the many-sample case we identify a simple order-statistic pricing rule and prove its asymptotic optimality.

Several interesting questions remain open. First, it would be interesting to identify the precise boundary between the regimes in which mean pricing and order-statistic pricing provide the better guarantee.
Determining (numerically) the optimal pricing policy for small $n$ and $\lambda>0$ remains open too. We do, however, suspect the improvements to be modest.
Finally, it would be interesting to extend our analysis to settings with multiple buyers, as well as to other models of partial information.

\newpage
\bibliographystyle{plainnat}
\bibliography{welfareReference}

%%%%%%%%%%%%%%%%%%%%%%%%%%%%%%%%%%%%%%%%%%%%%%%%%%%%%%%%%%%%
% Unlimited appendix

\newpage
\appendix

\section{Missing details Section \ref{sec:mhr}}
\label{appendix:eks-proof}

We start with a formal definition of pricing rules using Markov kernels.
In the main text, expressions of the form $\kappa((v_1,\dots,v_n))$ were intended to represent the distribution of a random posted-price conditional on $V_1=v_1,\dots,V_n=v_n$. However, since the event $\{V_1=v_1,\dots,V_n=v_n\}$ typically has zero measure, such conditional distributions are not defined in the classical sense. The appropriate framework in this setting is a Markov kernel, which allows us to define measures that depend measurably on a parameter.
\begin{definition}[Markov kernel; (Aliprantis and Border, 2006, Definition 19.11)] % Dit moet kennelijk hard coded anders een error. '\cite[Definition 19.11]{aliprantis2006infinite}' werkt niet
    Let $(\Omega_1,\mathcal{A}_1)$ and $(\Omega_2,\mathcal{A}_2)$ be measurable spaces. 
    A Markov kernel is a map $\kappa:\Omega_1\times\mathcal{A}_2\to[0,1]$ satisfying the following two properties.
    \begin{itemize}
        \item [(i)] For each $\omega_1\in\Omega_1$, the map $\kappa(\omega_1,\cdot):\mathcal{A}_2\to[0,1]$ is a probability measure.
        \item [(ii)] For each $A_2\in\mathcal{A}_2$, the map $\kappa(\cdot, A_2):\Omega_1\to[0,1]$ is $\mathcal{A}_1$-measurable.
    \end{itemize}
\end{definition}
Condition (i) guarantees that each parameter value induces a probability measure, while (ii) ensures measurability with respect to the parameter.
We always work with Borel $\sigma$-algebras, and denote by $\mathcal{K}(\Omega_1,\Omega_2)$ the set of all Markov kernels from $\Omega_1$ to $\Omega_2$.

With this framework, expectations are well-defined. Let $X$ be an $\Omega_1$-valued random variable with law $\nu$, let $\kappa\in \mathcal{K}(\Omega_1, \Omega_2)$, and denote by $Y_X$ an $\Omega_2$-valued random variable such that $\Prob(Y_{X}\in B|X=x)=\kappa(x, B)$, then for any integrable and measurable $f$
$$
\E[f(X,Y_X)]=\int_{\Omega_1}\int_{\Omega_2}f(x,y)\kappa(x,\mathrm{d}y)\mathrm{d}\nu(x).
$$
\medskip
\noindent We now continue with the upper bound proof of Theorem \ref{thm:mhr-pricing}.

\begin{proof}[Proof of Theorem \ref{thm:mhr-pricing} (upper bound of $(n/(n+1))^{n+1}$)]

Let $F_\theta(x)=1-e^{-\theta x}$ denote the distribution of the $\Expo(\theta)$ distribution. Note that it has a constant hazard rate, so $F_\theta\in\mathcal{P}_0$. Since $\{F_{\theta}:\theta>0\}\subset\mathcal{P}_0$ we have that
$$
Q_{n,0}\le\sup_{\kappa\in \mathcal{K}(\Rplus^n,\Rplus)}\inf_{F\in\{F_{\theta}:\theta>0\}}\frac{\E[\Rev(F,Y_{V_1,\dots,V_n})]}{\E[V]}.
$$
This sup-inf problem on the right hand side can be rewritten as the following infinite linear program.
    \begin{align*}
        &\sup && C\\
        &\text{subject to} &&\frac{\E[\Rev(F,Y_{Z_1\dots,Z_n})]}{\E[Z]}\ge C\text{ for all } \theta>0,
        \\
        &&& \kappa((z_1,\dots,z_n))\in\mathcal{P}(\mathbb{R}_+)\text{ for all }(z_1,\dots,z_n)\in\mathbb{R}_+^n\text{, } C\in\mathbb{R},
    \end{align*}
where $Z,Z_1,\dots,Z_n\sim \Expo(\theta)$.
    The pointwise constraint on $\kappa$ is simply the definition of $\kappa\in\mathcal{K}(\mathbb{R}_+^n,\mathbb{R}_+)$.
    The left hand side of the first expectation can be worked out
    \begin{align*}
        h(\theta,\kappa)&:=\frac{\E[\Rev(F,Y_{Z_1\dots,Z_n})]}{\E[Z]}=\theta \int_{\mathbb{R}_+^n}\prod_{i=1}^n\theta e^{-\theta z_i}\int_0^\infty se^{-\theta s} \kappa((z_1,\dots,z_n), \dd s)\dd \boldsymbol{z}\\
        &=\int_{\mathbb{R}^n_+}\int_0^\infty\theta^{n+1}se^{-\theta\left(s+\sum_{i=1}^nz_i\right)} \kappa(\boldsymbol{z},\dd s)\dd \boldsymbol{z}.
    \end{align*}
    The constraint then becomes $h(\theta,\kappa)\ge C$ for all $\theta>0$. The dual of the above LP can be written down directly, but as it is quite an intricate LP the derivation is provided.
    The Lagrangian of the problem is 
    $$\mathcal{L}(C,\kappa,\mu)=C+\int_0^\infty (h(\theta,\kappa)-C)\dd \mu(\theta)=C(1-\mu((0,\infty)))+\int_0^\infty h(\theta,\kappa)\dd \mu(\theta)$$
    with $\mu$ a non-negative measure on $(0,\infty)$. 
    The dual function is $$\phi(\mu)=\sup_{C\in\mathbb{R},\kappa\in\mathcal{K}(\mathbb{R}^n_+,\mathbb{R}_+)}\mathcal{L}(C,\kappa,\mu).$$
    By weak duality, $D :=\inf_{\mu\ge 0}\phi(\mu)$ is an upper bound on the primal problem. Observe that $\phi(\mu)=+\infty$ if $\mu((0,\infty))\ne1$, so the infimum can be taken over $\mu\in\mathcal{P}((0,\infty))$. Consequently,
    \begin{align*}
        D&=\inf_{\mu\in\mathcal{P}((0,\infty))}\sup_{\kappa\in\mathcal{K}(\mathbb{R}^n_+,\mathbb{R}_+)}\int_0^\infty 
        \int_{\mathbb{R}^n_+}\int_0^\infty\theta^{n+1}se^{-\theta\left(s+\sum_{i=1}^nz_i\right)} \kappa(\boldsymbol{z},\dd s)\dd \boldsymbol{z}
        \dd \mu(\theta)\\
        &=\inf_{\mu\in\mathcal{P}((0,\infty))} \sup_{\kappa\in\mathcal{K}(\mathbb{R}^n_+,\mathbb{R}_+)}\int_{\mathbb{R}^n_+}\int_0^\infty\int_0^\infty\theta^{n+1}se^{-\theta\left(s+\sum_{i=1}^nz_i\right)}\dd \mu(\theta) \kappa(\boldsymbol{z},\dd s)\dd \boldsymbol{z}&\text{(Tonelli)}\\
        &\le\inf_{\mu\in\mathcal{P}((0,\infty))} \int_{\mathbb{R}^n_+}\sup_{s\ge 0}\int_0^\infty\theta^{n+1}se^{-\theta\left(s+\sum_{i=1}^nz_i\right)}\dd \mu(\theta)\dd \boldsymbol{z}.
    \end{align*}
    We use pointwise maximisation and that $\kappa(\boldsymbol{z})$ is a probability measure in the second step.
    \footnote{Equality can be established, but is omitted here as only an upper bound on the primal is of interest.
    One would need to verify that the conditions of the Kuratowski and Ryll-Nardzewski measurable selection theorem hold
\cite[Theorem 18.13]{aliprantis2006infinite}.}
    With standard linearisation, the dual permits the following formulation:
    \begin{align*}
        &\inf && \int_{\mathbb{R}_+^n}\lambda(\boldsymbol{z})\dd \boldsymbol{z}\\
        &\text{subject to} &&
        \lambda(\boldsymbol{z})\ge\int_0^\infty \theta^{n+1}s\exp\left(-\theta\left(s+\sum_{i=1}^nz_i\right)\right)\dd \mu(\theta)\text{ for all }\boldsymbol{z}\in\mathbb{R}_+^n\text{ and }s\ge 0,
        \\
        &&& \mu\in\mathcal{P}((0,\infty))\text{, } \lambda:\mathbb{R}_+^n\to\mathbb{R}.
    \end{align*} 
    The infimum over all $\lambda$ is upper bounded by the infimum over all $\lambda$ that can be written as $\lambda(\boldsymbol{z})=\beta\left(\sum_{i=1}^nz_i\right)$. An upper bound on the dual is therefore
    \begin{align*}
        &\inf && \int_{\mathbb{R}_+^n}\beta\left(\sum_{i=1}^nz_i\right)\dd \boldsymbol{z}\\
        &\text{subject to} &&
        \beta(y)\ge\int_0^\infty \theta^{n+1}se^{-\theta(s+y)} \dd \mu(\theta)\text{ for all }y\ge 0\text{ and }s\ge 0,
        \\
        &&& \mu\in\mathcal{P}((0,\infty))\text{, }\beta:\mathbb{R}_+\to\mathbb{R}.
    \end{align*}
    The objective can be simplified.
    Define a measure $\nu$ on $\mathbb{R}_+$ by 
    $$
        \nu(A):=\mathrm{Leb}\left\{\boldsymbol{z}\in\mathbb{R}_+^n:\sum_{i=1}^nz_i\in A\right\}.
    $$
    A direct computation shows that for $y\ge 0$ that $\nu([0,y])=y^n/n!$ (see Lemma \ref{lemma:simplex-volume} below), so $\dd \nu(y)=\frac{y^{n-1}}{(n-1)!}\dd y$. The objective hence simplifies to
    \begin{equation}
        \label{eq:objective-expo-killing-dual}
        \int_{\mathbb{R}_+^n}\beta\left(\sum_{i=1}^nz_i\right)\dd \boldsymbol{z}=\int_0^\infty \beta(y)\dd \nu(y)=\frac{1}{(n-1)!}\int_0^\infty y^{n-1}\beta(y)\dd y.
    \end{equation}
    Now, consider the following dual feasible solution
    \begin{align}
        \label{eq:dual-solution-measure}
        \dd \mu_m(\theta)&=\frac{1}{\log m}\int_1^me^{-t\theta}\dd t\dd \theta\\
        \label{eq:dual-solution-constraint}
        \beta_m(y)&=\begin{cases}
            \frac{(n-1)!}{\log m},&\text{if }y\in[0,1),\\
            \frac{(n-1)!}{\log m}C_ny^{-n},&\text{if }y\in[1,m),\\
            \frac{n!}{\log m}\frac{m}{y^{n+1}},&\text{if }y\ge m.
        \end{cases}
    \end{align}
    First the feasibility of $(\mu_m,\beta_m)$ will be verified and then its associated objective will be computed. 
    To start, $\dd \mu_m(\theta)\ge 0$ and
    $$
        \mu_m((0,\infty))=\int_0^\infty\frac{1}{\log m}\int_1^me^{-t\theta}\dd t\dd \theta=\frac{1}{\log m}\int_1^m\int_0^\infty e^{-t\theta}\dd \theta\dd t=\frac{1}{\log m}\int_1^m\frac{1}{t}\dd t=1,
    $$
    so $\mu_m$ is a probability measure on $(0,\infty)$. Next the $\beta$ constraints. Define the right hand side of this constraint as
    \begin{align*}
        \xi_m(s,y)&:=\int_0^\infty \theta^{n+1}se^{-\theta(s+y)}\dd \mu_m(\theta)=\frac{1}{\log m}\int_0^\infty\theta^{n+1}se^{-\theta(s+y)} \int_1^me^{-t\theta}\dd t\dd \theta\\
        &=\frac{1}{\log m}\int_1^ms\int_0^\infty \theta^{n+1}e^{-\theta(s+y+t)}\dd \theta\dd t=\frac{(n+1)!}{\log m}\int_1^m\frac{s}{(s+y+t)^{n+2}}\dd t.
    \end{align*}
    Using definition of the Gamma function in the last step.
    Observe that $\beta_m(y)\ge \xi_m(s,y)$ for all $s, y\ge0$ ensures feasibility. This is established through three general bounds. 
    Observe that $\xi_m(s,y)$ is decreasing in $y$, with this observation we recover the bound
    \begin{align*}
        \xi_m(s,y)&\le\xi_m(s,0)=\frac{(n+1)!}{\log m}\int_1^m\frac{s}{(s+t)^{n+2}}\dd t\\
        &\le \frac{(n+1)!}{\log m}\int_1^\infty\frac{s}{(s+t)^{n+2}}\dd t=\frac{n!}{\log m}\frac{s}{(s+1)^{n+1}}\\
        &\le \frac{n!}{\log m}\frac{1/n}{(1/n+1)^{n+1}}&\text{(maximum at }s=1/n\text{)}\\
        &= \frac{(n-1)!}{\log m}\left(\frac{n}{n+1}\right)^{n+1}<\frac{(n-1)!}{\log m}.
    \end{align*}
    The next bound is for $y\in[1,m)$.
    \begin{align*}
        \xi_m(s,y)&\le \frac{(n+1)!}{\log m}\int_0^\infty\frac{s}{(s+y+t)^{n+2}}\dd t\\
        &=\frac{n!}{\log m}\frac{s}{(s+y)^{n+1}}\\
        &=\frac{n!}{\log m}\frac{y/n}{(y/n+y)^{n+1}}&\text{(maximum at }s=y/n\text{)}\\
        &=\frac{(n-1)!}{\log m}C_ny^{-n}.
    \end{align*}
    The last bound is for $y\ge m$.
    \begin{align*}
        \xi_m(s,y)&\le \frac{(n+1)!}{\log m}\int_1^m\sup_{w\ge0}\frac{w}{(w+y+t)^{n+2}}\dd t\\
        &=\frac{(n+1)!}{\log m}\int_1^m\frac{\frac{y+t }{n+1}}{(\frac{y+t }{n+1}+y+t)^{n+2}}\dd t&\text{(maximum at }w=\frac{y+t }{n+1}\text{)}\\
        &=\frac{n!}{\log m}\left(1+\frac{1}{n+1}\right)^{-(n+2)}\int_1^m(y+t)^{-(n+1)}\dd t\\
        &<\frac{n!}{\log m}\left(1+\frac{1}{n+1}\right)^{-(n+2)}\int_0^m\frac{1}{y^{n+1}}\dd t\\
        &=\frac{n!}{\log m}\left(1+\frac{1}{n+1}\right)^{-(n+2)}\frac{m}{y^{n+1}}<\frac{n!}{\log m}\frac{m}{y^{n+1}}.
    \end{align*}
    Using the first, second and third bound for the $y$-values $[0,1)$, $[1,m)$ and $[m,\infty)$
    respectively shows, indeed, that $(\mu_m,\beta_m)$ is feasible. Plugging $\beta_m$ into \eqref{eq:objective-expo-killing-dual} yields the objective
    \begin{align*}
        \frac{1}{(n-1)
        !}\int_0^\infty y^{n-1}\beta_m(y)\dd y&=\frac{1}{(n-1)!}\Bigg(\int_0^1y^{n-1}\frac{(n-1)!}{\log m}\dd y+\\& \int_1^my^{n-1}\frac{(n-1)!}{\log m}C_ny^{-n}\dd y+\int_m^\infty y^{n-1}\frac{n!}{\log m}\frac{m}{y^{n+1}}\dd y\Bigg)\\
        &=\frac{1}{\log m}\left(\int_0^1y^{n-1}\dd y+C_n\int_1^m\frac{1}{y}\dd y+nm\int_m^\infty\frac{1}{y^2}\dd y\right)\\
        &=C_n+\frac{n+1/n}{\log m}.    
    \end{align*}
    Lastly, taking $m\to\infty$ shows that the value of the dual LP is at most $C_n$. By weak duality the value of the primal LP is at most $C_n$, showing the optimality of $C_n$.
\end{proof}
%\subsection{Measure theory and Markov kernels}

Below we prove Lemma \ref{lemma:simplex-volume} used in the proof of Theorem \ref{thm:mhr-pricing-impossibility}.

\begin{lemma}\label{lemma:simplex-volume}
    Let $n\in\mathbb{N}$ and $y\ge 0$, then
    $$
        \mathrm{Leb}\left\{\boldsymbol{z}\in\mathbb{R}_+^n:\sum_{i=1}^nz_i\le y\right\}=\frac{y^n}{n!}.
    $$
\end{lemma}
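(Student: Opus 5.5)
I will prove the formula by induction on $n$. An alternative is to write the volume as an iterated integral and apply Fubini's theorem. Denote the set by $\Delta_n(y)=\{\boldsymbol{z}\in\mathbb{R}_+^n:\sum_{i=1}^n z_i\le y\}$ and its Lebesgue measure by $L_n(y)$.

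For the base case $n=1$ we have $\Delta_1(y)=[0,y]$, so $L_1(y)=y=y^1/1!$.

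For the inductive step, assume $L_{n-1}(x)=x^{n-1}/(n-1)!$ for all $x\ge 0$. By Tonelli's theorem (the integrand is a non-negative indicator), I integrate out the last coordinate $z_n\in[0,y]$. For fixed $z_n$, the section of $\Delta_n(y)$ is exactly $\Delta_{n-1}(y-z_n)$. Hence
$$
L_n(y)=\int_0^y L_{n-1}(y-z_n)\,\dd z_n=\int_0^y \frac{(y-z_n)^{n-1}}{(n-1)!}\,\dd z_n=\frac{y^n}{n!},
$$
which closes the induction.

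A quicker but less self-contained route uses scaling. The map $\boldsymbol{z}\mapsto y\boldsymbol{z}$ sends $\Delta_n(1)$ onto $\Delta_n(y)$, so $L_n(y)=y^nL_n(1)$. It then remains to compute $L_n(1)=1/n!$, for instance via $\int_{\mathbb{R}_+^n}e^{-\sum_i z_i}\dd\boldsymbol{z}=1$ together with a layer-cake argument. The inductive argument is simpler, so I would present that one.

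I do not expect a real obstacle. The only point needing care is justifying the slicing, i.e.\ measurability of the set and the use of Tonelli's theorem. Both are immediate because $\Delta_n(y)$ is closed and the indicator function is non-negative.
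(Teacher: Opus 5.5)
Your proof is correct and is essentially the paper's argument: induction on $n$ with base case $n=1$, slicing off $z_n\in[0,y]$ so that the section is the $(n-1)$-dimensional simplex of size $y-z_n$, then integrating $(y-z_n)^{n-1}/(n-1)!$ to get $y^n/n!$. The one addition is that you explicitly invoke Tonelli to justify the slicing, which the paper leaves implicit.
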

\begin{proof}
    This is a standard fact, but the proof is included here for completeness. The proof proceeds by induction. The base case is $n=1$ with $\mathrm{Leb}\left\{z_1\in\mathbb{R}_+:z_1\le y\right\}=y=y^1/1!$. Assume the statement holds for $n-1$, then
    \begin{align*}
        \mathrm{Leb}\left\{\boldsymbol{z}\in\mathbb{R}_+^n:\sum_{i=1}^nz_i\le y\right\}&=\int_0^y
        \mathrm{Leb}\left\{(z_1,\dots,z_{n-1})\in\mathbb{R}_+^{n-1}:\sum_{i=1}^{n-1}z_i\le y-z_n\right\}\dd z_n\\
        &=\int_0^y\frac{(y-z_n)^{n-1}}{(n-1)!}\dd z_n=\frac{1}{(n-1)!}\int_0^ys^{n-1}\dd s=\frac{y^n}{n!}.
    \end{align*}
    Using the induction hypothesis in the second equality and the substitution $s=y-z_n$.
\end{proof}

\newpage

\section{Missing details from Section \ref{sec:lambda-single}}
%\section{Proof of Lemma \ref{lemma:special-function-monotonicity}}

The missing details from Section \ref{sec:lambda-single} are given here.
\subsection{Proof of Lemma \ref{lemma:scaling-reduction}}\label{appendix:scaling-proof}

This result is essentially a consequence of Proposition 1 of \citet{allouah2022pricing}. However, since their result is stated for randomised pricing rules and uses a different normalisation, we provide a simplified argument here for completeness.

First of all, by taking $\Psi(x)=\gamma x$ we obtain that
$$
\sup_{\Psi:\mathbb{R}\to\mathbb{R}}\inf_{F\in\mathcal{P}_\lambda}\frac{\E[\Rev(F,\Psi(V_1))]}{\E[V]}\ge \sup_{\gamma > 0}\inf_{F\in\mathcal{P}_\lambda}\frac{\E[\Rev(F,\gamma V_1)]}{\E[V]}.
$$
We now show that the inequality holds in the opposite direction.
Fix $\epsilon>0$, then there exists a $\Psi_1$ such that
$$
\inf_{F\in\mathcal{P}_\lambda}\frac{\E[\Rev(F,\Psi_1(V_1))]}{\E[V]}\ge \sup_{\Psi}\inf_{F\in\mathcal{P}_\lambda}\frac{\E[\Rev(F,\Psi(V_1))]}{\E[V]}-\epsilon.
$$
Fix an arbitrary $F_1\in\mathcal{P}_\lambda$ and define $F_n(s)=F(ns)$. Note that $F_n\in\mathcal{P}_\lambda$ and therefore that 
$$
\frac{\E[\Rev(F_n,\Psi_1(V_{1,n}))]}{\E[V_n]}\ge \sup_{\Psi}\inf_{F\in\mathcal{P}_\lambda}\frac{\E[\Rev(F,\Psi(V_1))]}{\E[V]}-\epsilon,
$$
where $V_n,V_{1,n}\sim F_n$. Observe that $V_n\overset{d}{=}\frac{1}{n}V$ and that $V_{1,n}\overset{d}{=}\frac{1}{n}V_1$ for $V,V_1\sim F_1$. Using the identity $\Rev(F_n,p)=\frac{1}{n}\Rev(F_1,np)$ we recover that 
$$
\frac{\E[\Rev(F_n,\Psi_1(V_{1,n}))]}{\E[V_n]}=\frac{\E[\Rev(F_1,n\Psi_1(V_1/n))]}{\E[V]}.
$$
By the regularity assumption there is a constant $\gamma\in\mathbb{R}$ such that
$$
\Psi_\infty(p):=\lim_{n\to\infty}n\Psi_1\left(\frac{p}{n}\right)=p\lim_{n\to\infty}\frac{n}{p}\Psi_1\left(\frac{p}{n}\right)=p\lim_{t\to 0}\frac{\Psi_1(t)}{t}=p\gamma.
$$
Define $X_n=\Rev(F_1,n\Psi_1(V_1/n))$. 
The map $p\mapsto\Rev(F_1,p)$ is continuous, so $X_n\to\Rev(F_1,\gamma V_1)$.
From $\Rev(F_1,p)\le \E[V]$ it follows that $|X_n|\le \E[V]$, so by the dominated convergence theorem,
$$
\lim_{n\to\infty}\frac{\E[\Rev(F_1,n\Psi_1(V_1/n))]}{\E[V]}=\frac{\E[\Rev(F_1,\gamma V_1)]}{\E[V]}.
$$
As the choice of $F_1$ was arbitrary, we have that
$$
\inf_{F_1\in\mathcal{P}_\lambda}\frac{\E[\Rev(F_1,\gamma V_1)]}{\E[V]}\ge \sup_{\Psi}\inf_{F\in\mathcal{P}_\lambda}\frac{\E[\Rev(F,\Psi(V_1))]}{\E[V]}-\epsilon
$$
for some $\gamma$. The inequality is preserved when taking the supremum over $\gamma$ finishing the proof. 

The assumption that $\lim_{t\to0}\Psi(t)/t$ exists is the same assumption under which \citet{allouah2022pricing} establish their result. This assumption can be weakened: it suffices to assume that
$$\lim_{t\to0}\frac{1}{t}\int_0^t\frac{\Psi(s)}{s}\dd s$$
exists, or, more generally, that the oscillations of $\Psi(t)/t$ admit a well-defined asymptotic average near the origin. For clarity, however, we work under the stronger assumption.

\subsection{Proof of Lemma \ref{lemma:special-function-monotonicity}}
\label{app:hyper-geometric}
Before proving Lemma \ref{lemma:special-function-monotonicity}, we first give a primer on hypergeometric functions and the properties of these functions that we need.
We refer to Abramowitz and Stegun \cite[Chapter 15]{abramowitz1948handbook} for a treaty of hypergeometric functions, which we lay out here as well.

For $a,b,c, \in \Rplus$ and $|z| < 1$, we consider the hypergeometric function 
$$
F(a,b,c,z) :=\;_2F_1(a,b;c;z) = \sum_{n=0}^\infty \frac{(a)_n (b)_n}{(c)_n} \frac{z^n}{n!} = 1 + \frac{ab}{c}\frac{z}{1!} + \frac{a(a+1)b(b+1)}{c(c+1)}\frac{z^2}{2!} + \cdots
$$
with $(q)_n$ the so-called (rising) Pochhammer symbol defined by
$$
(q)_n = \begin{cases}  1  & n = 0 \\
  q(q+1) \cdots (q+n-1) & n > 0.
 \end{cases}
$$
Euler gave the following integral representation for hypergeometric functions, provided that $c > b > 0$, that we will rely on to transform our integral representation of $I(\gamma)$ in Lemma \ref{lemma:special-function-monotonicity} into a hypergeometric function \cite[Equation 15.3.1]{abramowitz1948handbook}:
\begin{align}
    F(a,b,c,z)=\frac{\Gamma(c)}{\Gamma(b)\Gamma(c-b)}\int_0^1t^{b-1}(1-t)^{c-b-1}(1-tz)^{-a}\dd t,
    \label{eq:hyper-euler-type}
\end{align}
where $\Gamma(\cdot)$ is the gamma function defined by $\Gamma(y) = \int_0^\infty t^{z-1}e^{-t} \mathrm{d}t$.

To show that $I(\gamma)$ is increasing in $\gamma$, we will show that its derivative is positive. To do this we need to understand the derivative of a hypergeometric function. A property that follows directly from the definition of $F$, see \cite[Equation 15.2.1]{abramowitz1948handbook}, is that its derivative with respect to $z$ is given by
 \begin{align}
\frac{\dd }{\dd z} F(a,b,c,z)=\frac{ab}{c}F(a+1,b+1,c+1,z).
\label{eq:hyper-derivative}
\end{align}

Finally, the functions
$
_2F_1 (a \pm 1,b, c, z), \; {}_2F_1 (a,b \pm 1, c ,z), \;{}_2F_1 (a,b,c \pm 1,z)
$
are called contiguous to $F(a,b,c,z)$. Gauss showed that $F(a,b,c,z)$ can be written as a linear combination of any two of its contiguous functions, where the coefficients in this linear combination are rational functions in $a,b,c$ and $z$. Considering the two contiguous relations 
   \begin{align*}
        F\left(a,b,c+1,z\right)&=\frac{cF\left(a,b,c,z\right)-bF\left(a,b+1,c+1,z\right)}{\left(c-b\right)},\\
        F\left(a,b+1,c+1,z\right)&=\frac{\left(c-b\right)F\left(a,b,c+1,z\right)-a\left(1-z\right)F\left(a+1,b+1,c+1,z\right)}{\left(c-a-b\right)}.
    \end{align*}
which are \cite[Equation 15.2.24]{abramowitz1948handbook} and \cite[Equation 15.2.15]{abramowitz1948handbook}, respectively, and eliminating the common term $F\left(a,b+1,c+1,z\right)$ from them, gives the new relation
\begin{align}
 F(a+1,b+1,c+1,z)=\frac{(c-a)(c-b)F(a,b,c+1,z)-c(c-a-b)F(a,b,c,z)}{ab(1-z)}
       \label{eq:hyper-contiguous}
\end{align}
that will be used to write the derivative \eqref{eq:hyper-derivative} in terms of the hypergeometric functions $F(a,b,c+1,z)$ and $F(a,b,c,z)$ which are easier to work with for our purposes. In particular, we will rely on the observation  
\begin{align}
F(a,b,c,z) > F(a,b,c+1,z)
       \label{eq:hyper-c-monotone}
\end{align}
that follows directly from the definition of $F$. We can now proof the lemma.
\begin{cleverproof}{lemma:special-function-monotonicity}
    Recall that we need to show that for fixed $\lambda\in(0,1)$, the function 
    $$
    I(\gamma) = \frac{1-\lambda}{\lambda^2}\gamma\int_0^\infty v(1+\gamma v)^{-1/\lambda}(1+v)^{-1/\lambda-1}\dd v
    $$ is increasing on $(0,1)$.
 First, we rewrite $I(\gamma)$ into a product of $\gamma$ and a hypergeometric function by applying the substitution $t=1-1/(1+v)$ and \eqref{eq:hyper-euler-type}. Note that under the given substitution, we have $v=t/(1-t)$ and $\dd v=\frac{1}{(1-t)^2}\dd t$, so that the integral  becomes
    \begin{align*}
        I(\gamma)
        &=\frac{1-\lambda}{\lambda^2}\gamma\int_0^1 \frac{t}{1-t}\left(1+\gamma \frac{t}{1-t}\right)^{-1/\lambda}\left(1+\frac{t}{1-t}\right)^{-1/\lambda-1}\frac{1}{(1-t)^2}\dd t\\
        &=\frac{1-\lambda}{\lambda^2}\gamma\int_0^1 \frac{t}{(1-t)^3}\left( \frac{1+(\gamma-1)t}{1-t}\right)^{-1/\lambda}\left(\frac{1}{1-t}\right)^{-1/\lambda-1}\dd t\\
        &=\frac{1-\lambda}{\lambda^2}\gamma\int_0^1 t(1-t)^{\frac{2}{\lambda}-2}(1+(\gamma-1)t)^{-1/\lambda} \dd t.
    \end{align*}
    %By Lemma \ref{lemma:special-function-monotonicity} the above is above function is increasing in $\gamma$. This shows that $\gamma < 1$ is not optimal completing the final case.

Using the integral representation of Euler in \eqref{eq:hyper-euler-type}, with $a = 1/\lambda, b = 2, c = 1 + 2/\lambda$ and the observation that then $c > b > 0$, yields
$$
I(\gamma) = \frac{1-\lambda}{\lambda^2} \frac{\Gamma(2)\Gamma(\frac{2}{\lambda}-1)}{\Gamma(1+\frac{2}{\lambda})}\cdot\gamma F\left(\frac{1}{\lambda},2,1+\frac{2}{\lambda},1-\gamma\right).
$$
The factors that depend exclusively on $\lambda$ can be seen to be positive constants (as in particular the Gamma function is always positive), so to prove that $I(\gamma)$ is increasing it suffices to show that the function
$$
H(\gamma) = \gamma F\left(\frac{1}{\lambda},2,1+\frac{2}{\lambda},1-\gamma\right)
$$
is increasing. Note that the hypergeometric function is well-defined as $\gamma \in (0,1)$ and so $|1-\gamma| < 1$.
    To show that $H'(\gamma)>0$, observe that
    \begin{align}
    H'(\gamma) &=         \frac{\dd }{\dd \gamma}\gamma F\left(\frac{1}{\lambda},2,1+\frac{2}{\lambda},1-\gamma\right) \nonumber \\
        &=F\left(\frac{1}{\lambda},2,1+\frac{2}{\lambda},1-\gamma\right)-\frac{2\gamma}{2+\lambda} \frac{\dd }{\dd \gamma}F\left(\frac{1}{\lambda},2,1+\frac{2}{\lambda},1-\gamma\right) \nonumber \\
        &=F\left(\frac{1}{\lambda},2,1+\frac{2}{\lambda},1-\gamma\right)-\frac{2\gamma}{2+\lambda} F\left(1+\frac{1}{\lambda},3,2+\frac{2}{\lambda},1-\gamma\right) 
        \label{eq:H-derivative}
    \end{align}
    using the product rule in the second equality and \eqref{eq:hyper-derivative} in the third equality.
    Using the relation \eqref{eq:hyper-contiguous} for the second term in \eqref{eq:H-derivative}, recalling that $a = 1/\lambda, b = 2$ and $c = 1 + 2/\lambda$, gives
    \begin{align*}
        \frac{2\gamma}{2+\lambda} F\left(1+\frac{1}{\lambda},3,2+\frac{2}{\lambda},1-\gamma\right)&=\frac{(\lambda+1)(2-\lambda)}{\left(2+\lambda\right)\lambda}F\left(\frac{1}{\lambda},2,2+\frac{2}{\lambda},1-\gamma\right)\\&-\frac{1-\lambda}{\lambda}F\left(\frac{1}{\lambda},2,1+\frac{2}{\lambda},1-\gamma\right).    
    \end{align*}
    Plugging this into \eqref{eq:H-derivative}, and simplifying the coefficient in front of $F(1/\lambda,2,1 + 2/\lambda,2-\gamma)$, yields
    $$
    H'(\gamma)=\frac{1}{\lambda}F\left(\frac{1}{\lambda},2,1+\frac{2}{\lambda},1-\gamma\right)-\frac{(\lambda+1)(2-\lambda)}{\left(2+\lambda\right)\lambda}F\left(\frac{1}{\lambda},2,2+\frac{2}{\lambda},1-\gamma\right).
    $$
    Using, for $\lambda \in (0,1)$, that
    $$
    \frac{1}{\lambda}-\frac{(\lambda+1)(2-\lambda)}{\left(2+\lambda\right)\lambda}>0
    $$
    gives
    $$
    H'(\gamma) > \frac{(\lambda+1)(2-\lambda)}{\left(2+\lambda\right)\lambda} \left[ F\left(\frac{1}{\lambda},2,1+\frac{2}{\lambda},1-\gamma\right) - F\left(\frac{1}{\lambda},2,2+\frac{2}{\lambda},1-\gamma\right)\right].
    $$
    Finally, inequality \eqref{eq:hyper-c-monotone} in combination with $(\lambda+1)(2-\lambda) > ((2+\lambda)\lambda) > 0$ for all $\lambda \in (0,1)$, then implies $H'(\gamma) > 0$. This completes the proof.
\end{cleverproof}

\subsection{Numerical upper bound on \texorpdfstring{$Q_{1,\lambda}$}{Q 1,lambda}}\label{appendix:numerical-bounds}
The guarantee from Theorem \ref{thm:general-pricing-impossibility} is quite weak as the theorem is only meant to establish the optimal asymptotic rate.
Here we give a refined bound for $n=1$, which is also visualised in Figure \ref{fig:Q-1-bounds}.
\begin{figure}[!h]
    \centering
    \hspace{-0.39cm}
    \begin{tikzpicture}
    \definecolor{cbblue}{rgb}{0.00,0.75,0.75}
    \definecolor{cbred}{rgb}{0.84,0.37,0.00}
    \definecolor{cborange}{rgb}{0.90,0.60,0.00}
    \definecolor{cbgreen}{rgb}{0.00,0.62,0.45}
    \definecolor{cbpurple}{rgb}{0.80,0.47,0.65}

    \begin{groupplot}[
        group style={
            group size=2 by 1,
            horizontal sep=2.3cm,
        },
        axis x line=bottom,
        axis y line=left,
        xlabel style={at={(axis description cs:0.5, -0.1)}, anchor=north},
        grid=none,
        width=7.5cm, height=7cm,
        legend style={
            at={(0.6,0.5)},
            anchor=west,
            cells={anchor=west}
        },
    ]
    
    % Plot 1 - log10(lambda)
    \nextgroupplot[
            xmin=0, xmax=1,
            ymin=0, ymax=0.4,
            axis lines=left,
            xlabel={$\lambda$},
            ylabel={Performance},
            xtick={0,0.25,0.5,0.75,1},
            ytick={0,0.1,0.2,0.3,0.4},
            enlargelimits=false,
            clip=true,
            legend style={
                at={(0.42,0.85)},
                anchor=west
            }
        ]
    \addplot[
                color=black, dotted,
                thick
            ] coordinates {
( 0 ,     0.3678794 )
( 0.01 ,  0.3660323 )
( 0.02 ,  0.3641697 )
( 0.03 ,  0.3622912 )
( 0.04 ,  0.3603967 )
( 0.05 ,  0.3584859 )
( 0.06 ,  0.3565586 )
( 0.07 ,  0.3546144 )
( 0.08 ,  0.3526532 )
( 0.09 ,  0.3506746 )
( 0.1 ,  0.3486784 )
( 0.11 ,  0.3466643 )
( 0.12 ,  0.344632 )
( 0.13 ,  0.3425812 )
( 0.14 ,  0.3405115 )
( 0.15 ,  0.3384227 )
( 0.16 ,  0.3363145 )
( 0.17 ,  0.3341864 )
( 0.18 ,  0.3320382 )
( 0.19 ,  0.3298695 )
( 0.2 ,  0.32768 )
( 0.21 ,  0.3254692 )
( 0.22 ,  0.3232369 )
( 0.23 ,  0.3209825 )
( 0.24 ,  0.3187058 )
( 0.25 ,  0.3164062 )
( 0.26 ,  0.3140835 )
( 0.27 ,  0.311737 )
( 0.28 ,  0.3093664 )
( 0.29 ,  0.3069713 )
( 0.3 ,  0.3045511 )
( 0.31 ,  0.3021053 )
( 0.32 ,  0.2996335 )
( 0.33 ,  0.2971351 )
( 0.34 ,  0.2946096 )
( 0.35 ,  0.2920564 )
( 0.36 ,  0.289475 )
( 0.37 ,  0.2868647 )
( 0.38 ,  0.284225 )
( 0.39 ,  0.2815553 )
( 0.4 ,  0.2788548 )
( 0.41 ,  0.2761229 )
( 0.42 ,  0.273359 )
( 0.43 ,  0.2705622 )
( 0.44 ,  0.2677319 )
( 0.45 ,  0.2648672 )
( 0.46 ,  0.2619674 )
( 0.47 ,  0.2590317 )
( 0.48 ,  0.2560591 )
( 0.49 ,  0.2530489 )
( 0.5 ,  0.25 )
( 0.51 ,  0.2469115 )
( 0.52 ,  0.2437824 )
( 0.53 ,  0.2406117 )
( 0.54 ,  0.2373982 )
( 0.55 ,  0.2341409 )
( 0.56 ,  0.2308386 )
( 0.57 ,  0.2274899 )
( 0.58 ,  0.2240938 )
( 0.59 ,  0.2206487 )
( 0.6 ,  0.2171534 )
( 0.61 ,  0.2136063 )
( 0.62 ,  0.210006 )
( 0.63 ,  0.2063507 )
( 0.64 ,  0.2026388 )
( 0.65 ,  0.1988686 )
( 0.66 ,  0.1950382 )
( 0.67 ,  0.1911455 )
( 0.68 ,  0.1871886 )
( 0.69 ,  0.1831652 )
( 0.7 ,  0.1790731 )
( 0.71 ,  0.1749098 )
( 0.72 ,  0.1706727 )
( 0.73 ,  0.166359 )
( 0.74 ,  0.1619658 )
( 0.75 ,  0.1574901 )
( 0.76 ,  0.1529285 )
( 0.77 ,  0.1482775 )
( 0.78 ,  0.1435331 )
( 0.79 ,  0.1386915 )
( 0.8 ,  0.1337481 )
( 0.81 ,  0.1286982 )
( 0.82 ,  0.1235366 )
( 0.83 ,  0.1182579 )
( 0.84 ,  0.1128558 )
( 0.85 ,  0.1073237 )
( 0.86 ,  0.1016543 )
( 0.87 ,  0.0958394 )
( 0.88 ,  0.08987006 )
( 0.89 ,  0.08373618 )
( 0.9 ,  0.07742637 )
( 0.91 ,  0.07092771 )
( 0.92 ,  0.06422535 )
( 0.93 ,  0.05730202 )
( 0.94 ,  0.0501373 )
( 0.95 ,  0.04270657 )
( 0.96 ,  0.03497941 )
( 0.97 ,  0.02691671 )
( 0.98 ,  0.01846533 )
( 0.99 ,  0.009545485 )
( 1 ,  0 )
            
            };
            \addlegendentry{Full-information}

            \addplot[
                color=cbblue, solid,
                thick
            ] coordinates {
( 0 ,  0.25 )
( 0.01 ,  0.2487484 )
( 0.02 ,  0.2474936 )
( 0.03 ,  0.2462353 )
( 0.04 ,  0.2449734 )
( 0.05 ,  0.2437078 )
( 0.06 ,  0.2424383 )
( 0.07 ,  0.2411648 )
( 0.08 ,  0.2398869 )
( 0.09 ,  0.2386046 )
( 0.1 ,  0.2373176 )
( 0.11 ,  0.2360258 )
( 0.12 ,  0.2347289 )
( 0.13 ,  0.2334267 )
( 0.14 ,  0.232119 )
( 0.15 ,  0.2308056 )
( 0.16 ,  0.2294861 )
( 0.17 ,  0.2281604 )
( 0.18 ,  0.2268282 )
( 0.19 ,  0.2254892 )
( 0.2 ,  0.2241431 )
( 0.21 ,  0.2227897 )
( 0.22 ,  0.2214287 )
( 0.23 ,  0.2200597 )
( 0.24 ,  0.2186824 )
( 0.25 ,  0.2172965 )
( 0.26 ,  0.2159016 )
( 0.27 ,  0.2144974 )
( 0.28 ,  0.2130835 )
( 0.29 ,  0.2116596 )
( 0.3 ,  0.2102252 )
( 0.31 ,  0.2087799 )
( 0.32 ,  0.2073232 )
( 0.33 ,  0.2058549 )
( 0.34 ,  0.2043743 )
( 0.35 ,  0.2028809 )
( 0.36 ,  0.2013744 )
( 0.37 ,  0.1998542 )
( 0.38 ,  0.1983196 )
( 0.39 ,  0.1967703 )
( 0.4 ,  0.1952055 )
( 0.41 ,  0.1936246 )
( 0.42 ,  0.1920271 )
( 0.43 ,  0.1904122 )
( 0.44 ,  0.1887793 )
( 0.45 ,  0.1871276 )
( 0.46 ,  0.1854564 )
( 0.47 ,  0.1837649 )
( 0.48 ,  0.1820522 )
( 0.49 ,  0.1803174 )
( 0.5 ,  0.1785597 )
( 0.51 ,  0.1767781 )
( 0.52 ,  0.1749716 )
( 0.53 ,  0.1731392 )
( 0.54 ,  0.1712796 )
( 0.55 ,  0.1693918 )
( 0.56 ,  0.1674745 )
( 0.57 ,  0.1655264 )
( 0.58 ,  0.1635462 )
( 0.59 ,  0.1615323 )
( 0.6 ,  0.1594834 )
( 0.61 ,  0.1573977 )
( 0.62 ,  0.1552736 )
( 0.63 ,  0.1531092 )
( 0.64 ,  0.1509027 )
( 0.65 ,  0.1486519 )
( 0.66 ,  0.1463547 )
( 0.67 ,  0.1440089 )
( 0.68 ,  0.1416118 )
( 0.69 ,  0.1391609 )
( 0.7 ,  0.1366534 )
( 0.71 ,  0.1340861 )
( 0.72 ,  0.131456 )
( 0.73 ,  0.1287595 )
( 0.74 ,  0.1259928 )
( 0.75 ,  0.1231519 )
( 0.76 ,  0.1202324 )
( 0.77 ,  0.1172296 )
( 0.78 ,  0.1141382 )
( 0.79 ,  0.1109528 )
( 0.8 ,  0.1076671 )
( 0.81 ,  0.1042744 )
( 0.82 ,  0.1007675 )
( 0.83 ,  0.09713815 )
( 0.84 ,  0.0933774 )
( 0.85 ,  0.08947531 )
( 0.86 ,  0.08542077 )
( 0.87 ,  0.08120136 )
( 0.88 ,  0.07680306 )
( 0.89 ,  0.07220999 )
( 0.9 ,  0.06740404 )
( 0.91 ,  0.06236435 )
( 0.92 ,  0.05706671 )
( 0.93 ,  0.05148265 )
( 0.94 ,  0.0455783 )
( 0.95 ,  0.03931257 )
( 0.96 ,  0.03263452 )
( 0.97 ,  0.02547891 )
( 0.98 ,  0.01775811 )
( 0.99 ,  0.009343859 )
( 1 ,  0 )
            };
            \addlegendentry{Upper bound}

                        \addplot[
                color=black, solid,
                thick
            ] coordinates {
( 0 ,  0.25 )
( 0.01 ,  0.2487437 )
( 0.02 ,  0.2474747 )
( 0.03 ,  0.2461929 )
( 0.04 ,  0.244898 )
( 0.05 ,  0.2435897 )
( 0.06 ,  0.242268 )
( 0.07 ,  0.2409326 )
( 0.08 ,  0.2395833 )
( 0.09 ,  0.2382199 )
( 0.1 ,  0.2368421 )
( 0.11 ,  0.2354497 )
( 0.12 ,  0.2340426 )
( 0.13 ,  0.2326203 )
( 0.14 ,  0.2311828 )
( 0.15 ,  0.2297297 )
( 0.16 ,  0.2282609 )
( 0.17 ,  0.226776 )
( 0.18 ,  0.2252747 )
( 0.19 ,  0.2237569 )
( 0.2 ,  0.2222222 )
( 0.21 ,  0.2206704 )
( 0.22 ,  0.2191011 )
( 0.23 ,  0.2175141 )
( 0.24 ,  0.2159091 )
( 0.25 ,  0.2142857 )
( 0.26 ,  0.2126437 )
( 0.27 ,  0.2109827 )
( 0.28 ,  0.2093023 )
( 0.29 ,  0.2076023 )
( 0.3 ,  0.2058824 )
( 0.31 ,  0.204142 )
( 0.32 ,  0.202381 )
( 0.33 ,  0.2005988 )
( 0.34 ,  0.1987952 )
( 0.35 ,  0.1969697 )
( 0.36 ,  0.195122 )
( 0.37 ,  0.1932515 )
( 0.38 ,  0.191358 )
( 0.39 ,  0.189441 )
( 0.4 ,  0.1875 )
( 0.41 ,  0.1855346 )
( 0.42 ,  0.1835443 )
( 0.43 ,  0.1815287 )
( 0.44 ,  0.1794872 )
( 0.45 ,  0.1774194 )
( 0.46 ,  0.1753247 )
( 0.47 ,  0.1732026 )
( 0.48 ,  0.1710526 )
( 0.49 ,  0.1688742 )
( 0.5 ,  0.1666667 )
( 0.51 ,  0.1644295 )
( 0.52 ,  0.1621622 )
( 0.53 ,  0.1598639 )
( 0.54 ,  0.1575342 )
( 0.55 ,  0.1551724 )
( 0.56 ,  0.1527778 )
( 0.57 ,  0.1503497 )
( 0.58 ,  0.1478873 )
( 0.59 ,  0.1453901 )
( 0.6 ,  0.1428571 )
( 0.61 ,  0.1402878 )
( 0.62 ,  0.1376812 )
( 0.63 ,  0.1350365 )
( 0.64 ,  0.1323529 )
( 0.65 ,  0.1296296 )
( 0.66 ,  0.1268657 )
( 0.67 ,  0.1240602 )
( 0.68 ,  0.1212121 )
( 0.69 ,  0.1183206 )
( 0.7 ,  0.1153846 )
( 0.71 ,  0.1124031 )
( 0.72 ,  0.109375 )
( 0.73 ,  0.1062992 )
( 0.74 ,  0.1031746 )
( 0.75 ,  0.1 )
( 0.76 ,  0.09677419 )
( 0.77 ,  0.09349593 )
( 0.78 ,  0.09016393 )
( 0.79 ,  0.08677686 )
( 0.8 ,  0.08333333 )
( 0.81 ,  0.07983193 )
( 0.82 ,  0.07627119 )
( 0.83 ,  0.07264957 )
( 0.84 ,  0.06896552 )
( 0.85 ,  0.06521739 )
( 0.86 ,  0.06140351 )
( 0.87 ,  0.05752212 )
( 0.88 ,  0.05357143 )
( 0.89 ,  0.04954955 )
( 0.9 ,  0.04545455 )
( 0.91 ,  0.0412844 )
( 0.92 ,  0.03703704 )
( 0.93 ,  0.03271028 )
( 0.94 ,  0.02830189 )
( 0.95 ,  0.02380952 )
( 0.96 ,  0.01923077 )
( 0.97 ,  0.01456311 )
( 0.98 ,  0.009803922 )
( 0.99 ,  0.004950495 )
( 1 ,  0 )
            
            };
            \addlegendentry{Lower bound}
    % \addlegendentry{$\lambda=0.95$}

    % Plot 2 - lambda
    \nextgroupplot[
            xmin=0, xmax=1,
            ymin=0, ymax=1,
            axis lines=left,
            xlabel={$\lambda$},
            ylabel={Relative performance},
            xtick={0,0.25,0.5,0.75,1},
            ytick={0,0.25,0.5,0.75,1},
            enlargelimits=false,
            clip=true,
            legend style={
                at={(0.1,0.885)},
                anchor=west
            }
        ]
        \addplot[
        color=cbblue,
        solid,
    ] coordinates {(0.01, 0.6796)
(0.00, 0.6796)
(0.01, 0.6796)
(0.02, 0.6796)
(0.03, 0.6797)
(0.04, 0.6797)
(0.05, 0.6798)
(0.06, 0.6799)
(0.07, 0.6801)
(0.08, 0.6802)
(0.09, 0.6804)
(0.1, 0.6806)
(0.11, 0.6808)
(0.12, 0.6811)
(0.13, 0.6814)
(0.14, 0.6817)
(0.15, 0.682)
(0.16, 0.6824)
(0.17, 0.6827)
(0.18, 0.6831)
(0.19, 0.6836)
(0.2, 0.684)
(0.21, 0.6845)
(0.22, 0.685)
(0.23, 0.6856)
(0.24, 0.6862)
(0.25, 0.6868)
(0.26, 0.6874)
(0.27, 0.6881)
(0.28, 0.6888)
(0.29, 0.6895)
(0.3, 0.6903)
(0.31, 0.6911)
(0.32, 0.6919)
(0.33, 0.6928)
(0.34, 0.6937)
(0.35, 0.6947)
(0.36, 0.6957)
(0.37, 0.6967)
(0.38, 0.6978)
(0.39, 0.6989)
(0.4, 0.7)
(0.41, 0.7012)
(0.42, 0.7025)
(0.43, 0.7038)
(0.44, 0.7051)
(0.45, 0.7065)
(0.46, 0.7079)
(0.47, 0.7094)
(0.48, 0.711)
(0.49, 0.7126)
(0.5, 0.7142)
(0.51, 0.716)
(0.52, 0.7177)
(0.53, 0.7196)
(0.54, 0.7215)
(0.55, 0.7235)
(0.56, 0.7255)
(0.57, 0.7276)
(0.58, 0.7298)
(0.59, 0.7321)
(0.6, 0.7344)
(0.61, 0.7369)
(0.62, 0.7394)
(0.63, 0.742)
(0.64, 0.7447)
(0.65, 0.7475)
(0.66, 0.7504)
(0.67, 0.7534)
(0.68, 0.7565)
(0.69, 0.7598)
(0.7, 0.7631)
(0.71, 0.7666)
(0.72, 0.7702)
(0.73, 0.774)
(0.74, 0.7779)
(0.75, 0.782)
(0.76, 0.7862)
(0.77, 0.7906)
(0.78, 0.7952)
(0.79, 0.8)
(0.8, 0.805)
(0.81, 0.8102)
(0.82, 0.8157)
(0.83, 0.8214)
(0.84, 0.8274)
(0.85, 0.8337)
(0.86, 0.8403)
(0.87, 0.8473)
(0.88, 0.8546)
(0.89, 0.8624)
(0.9, 0.8706)
(0.91, 0.8793)
(0.92, 0.8885)
(0.93, 0.8984)
(0.94, 0.9091)
(0.95, 0.9205)
(0.96, 0.933)
(0.97, 0.9466)
(0.98, 0.9617)
(0.99, 0.9789)
(1.00, 1.0000)
    };
    \addlegendentry{Upper bound}
    \addplot[
        color=black,
        solid,
    ] coordinates {
(0.00, 0.6796)
(0.01, 0.6796)
(0.02, 0.6796)
(0.03, 0.6795)
(0.04, 0.6795)
(0.05, 0.6795)
(0.06, 0.6795)
(0.07, 0.6794)
(0.08, 0.6794)
(0.09, 0.6793)
(0.1, 0.6793)
(0.11, 0.6792)
(0.12, 0.6791)
(0.13, 0.679)
(0.14, 0.6789)
(0.15, 0.6788)
(0.16, 0.6787)
(0.17, 0.6786)
(0.18, 0.6785)
(0.19, 0.6783)
(0.2, 0.6782)
(0.21, 0.678)
(0.22, 0.6778)
(0.23, 0.6777)
(0.24, 0.6775)
(0.25, 0.6772)
(0.26, 0.677)
(0.27, 0.6768)
(0.28, 0.6766)
(0.29, 0.6763)
(0.3, 0.676)
(0.31, 0.6757)
(0.32, 0.6754)
(0.33, 0.6751)
(0.34, 0.6748)
(0.35, 0.6744)
(0.36, 0.6741)
(0.37, 0.6737)
(0.38, 0.6733)
(0.39, 0.6728)
(0.4, 0.6724)
(0.41, 0.6719)
(0.42, 0.6714)
(0.43, 0.6709)
(0.44, 0.6704)
(0.45, 0.6698)
(0.46, 0.6693)
(0.47, 0.6687)
(0.48, 0.668)
(0.49, 0.6674)
(0.5, 0.6667)
(0.51, 0.6659)
(0.52, 0.6652)
(0.53, 0.6644)
(0.54, 0.6636)
(0.55, 0.6627)
(0.56, 0.6618)
(0.57, 0.6609)
(0.58, 0.6599)
(0.59, 0.6589)
(0.6, 0.6579)
(0.61, 0.6568)
(0.62, 0.6556)
(0.63, 0.6544)
(0.64, 0.6531)
(0.65, 0.6518)
(0.66, 0.6505)
(0.67, 0.649)
(0.68, 0.6475)
(0.69, 0.646)
(0.7, 0.6443)
(0.71, 0.6426)
(0.72, 0.6408)
(0.73, 0.639)
(0.74, 0.637)
(0.75, 0.635)
(0.76, 0.6328)
(0.77, 0.6305)
(0.78, 0.6282)
(0.79, 0.6257)
(0.8, 0.6231)
(0.81, 0.6203)
(0.82, 0.6174)
(0.83, 0.6143)
(0.84, 0.6111)
(0.85, 0.6077)
(0.86, 0.604)
(0.87, 0.6002)
(0.88, 0.5961)
(0.89, 0.5917)
(0.9, 0.5871)
(0.91, 0.5821)
(0.92, 0.5767)
(0.93, 0.5708)
(0.94, 0.5645)
(0.95, 0.5575)
(0.96, 0.5498)
(0.97, 0.541)
(0.98, 0.5309)
(0.99, 0.5186)
(1.00, 0.5000)
    };
    \addlegendentry{Lower bound}

    \end{groupplot}
    \end{tikzpicture}
    \caption{On the left absolute bounds for $Q_{1,\lambda}$ and the full-information performance $(1-\lambda)^{1/\lambda}$. On the right bounds relative to the full-information performance (right).}
    \label{fig:Q-1-bounds}
\end{figure}

\noindent{In Figure \ref{fig:Q-1-bounds} we can see that our sample-based pricing rule always attains at least half the performance when compared to the full-information setting. Moreover, the upper bound is particularly close to the lower bound for small values of $\lambda$.}

We now explain how the above upper bound is derived.
First, as shown in Lemma \ref{lemma:scaling-reduction}, we can restrict our focus to randomised scaling rules.\footnote{In Lemma \ref{lemma:scaling-reduction} we show this reduction for deterministic policies, but the argument extends to randomised policies too, as is done by \citet{allouah2022pricing} in Proposition 1.}
Consider the distributions functions $G_1$ and $G_2$ with $G_1(x)=1-\left(1+x\right)^{-1/\lambda}$ and $G_2(x)=\indicator{x\ge 1}$.
The distribution $G_2$ is strictly speaking not in $\mathcal{P}_\lambda$, but $G_{2,\alpha}(x)=1-(1+\lambda x^\alpha)^{-1/\lambda}$ is and $G_{2,\alpha}\to G_2$ pointwise, so any result for $G_2$ can be approximated within arbitrary precision.

Let the random scaling factor $S$ have distribution $\mu$. Under $G_2$, $V,V_1$ are $1$ with probability one, so
$$
\frac{\E[\Rev(G_2,SV_1)]}{\E[V]}=\E[S\indicator{S\le 1}].
$$
Let the function $I$ be as in the proof of Theorem \ref{thm:single-sample-det}, then under $G_1$ it is clear that
$$
\frac{\E[\Rev(G_1,SV_1)]}{\E[V]}=\E[I(S)].
$$
Consequently, we get the following upper bound
$$
Q_{1,\lambda}=\sup_{\mu\in\mathcal{P}(\Rplus)}\inf_{F\in\mathcal{P}_\lambda}\frac{\E[\Rev(F,SV_1)]}{\E[V]}\le \sup_{\mu\in\mathcal{P}(\Rplus)}\min\{\E[S\indicator{S\le 1}],\E[I(S)]\}.
$$
The above problem is very tractable. From Lemma \ref{lemma:special-function-monotonicity} we get that $I(1)\ge I(\gamma)$ for $\gamma\le 1$, so in optimality $\mu$ places no mass on $[0,1)$. Moreover, an optimal $\mu$ will only place mass at $1$ and $\gamma^*=\arg\max_{\gamma >0}I(\gamma)$. With this, the upper bound becomes 
$$
Q_{1,\lambda}\le \sup_{p\in[0,1]}\min\{p,I(1)p+(1-p)I(\gamma^*)\}=\frac{I(\gamma^*)}{1-I(1)+I(\gamma^*)}.
$$
The value of the above bound can be computed numerically.

\section{Non-optimality of the sample mean}\label{appendix:non-optimality-sample-mean}
For $n=2$, neither the sample mean policy nor the order statistic policy is optimal for all $\lambda \in (0,1)$ simultaneously. When $\lambda=0$, it follows from Theorem \ref{thm:mhr-pricing} that the sample mean policy achieves a performance of $(n/(n+1))^{n+1} = (2/3)^3 = 8/27$.
Theorem \ref{thm:order-statistic-policy} informs us that the best order statistic policy is to post the highest of the two samples ($r^*=2$), and that the associated performance is $C(2,2,0)=5/18<8/27$.

When $\lambda=1/2$, 
posting the highest price is still the best order statistic policy and it achieves a performance of $C(2,2,1/2)=1/5$. The performance ratio of the sample mean policy cannot be computed exactly in this case, however, any admissible distribution yields an upper bound. Below we show that the performance of the sample mean rule $\frac{1}{2}(V_1+V_2)$ is at most $0.19439$, which is strictly less than $1/5$ which is the performance guarantee of the order statistic $V_{2:2}$.

\begin{lemma}
Let $n = 2$, $\lambda = 1/2$, $F(x)=1-(1+x)^{-2}$ and $V,V_1,V_2\iid F$. Then
    $$
    \frac{\E\left[\Rev(F, \frac{1}{2}(V_1+V_2))\right]}{\E[V]}=\frac{136}{15}-\frac{64}{5}\log 2<0.19439.
    $$
    implying that 
    $$
    \inf_{F \in \mathcal{P}_{\lambda}} \frac{\E\left[\Rev(F, \frac{1}{2}(V_1+V_2))\right]}{\E[V]} \leq \frac{136}{15}-\frac{64}{5}\log 2 < \frac{1}{5} = \inf_{F \in \mathcal{P}_{\lambda}} \frac{\E\left[\Rev(F, V_{2:2})\right]}{\E[V]}.
    $$

\end{lemma}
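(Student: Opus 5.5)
The plan is a direct computation for this particular $F$, followed by two quick applications of results already in the paper.

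\textbf{Setting up the integral.} First, $F(x)=1-(1+x)^{-2}$ is $\lambda$-regular for $\lambda=1/2$: writing it as $1-(1+\lambda R_\lambda(x))^{-1/\lambda}$ gives $R_\lambda(x)=x/\lambda$, which is linear, so $F\in\mathcal{P}_{1/2}$. Its mean is
$$\E[V]=\int_0^\infty(1+x)^{-2}\,\dd x=1,$$
so the ratio in the lemma is just the expected revenue. Shift variables to $X=1+V_1$ and $Y=1+V_2$, which are i.i.d.\ with density $2x^{-3}$ on $[1,\infty)$. The posted price is $p=\tfrac{X+Y}{2}-1$, so $1+p=\tfrac{X+Y}{2}$. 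The revenue is therefore
$$\Rev(F,p)=\Bigl(\tfrac{X+Y}{2}-1\Bigr)\frac{4}{(X+Y)^2}=\frac{2}{X+Y}-\frac{4}{(X+Y)^2}.$$
Hence the target quantity is
$$\int_1^\infty\!\!\int_1^\infty 4x^{-3}y^{-3}\Bigl(\frac{2}{x+y}-\frac{4}{(x+y)^2}\Bigr)\dd y\,\dd x .$$

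\textbf{Evaluating it.} I would integrate in $y$ first, using partial fractions for $y^{-3}(x+y)^{-1}$ and $y^{-3}(x+y)^{-2}$ in the variable $y$, with $x$ treated as a parameter. This produces rational functions of $x$ together with terms of the form $x^{-k}\log(1+x)$ and $x^{-k}\log(1+1/x)$, coming from the evaluation at $y=1$. The outer integrals $\int_1^\infty x^{-k}\log(1+1/x)\,\dd x$ reduce, via $u=1/x$, to $\int_0^1 u^{k-2}\log(1+u)\,\dd u$. Each of these is a rational number plus a rational multiple of $\log 2$. Collecting terms should give $\tfrac{136}{15}-\tfrac{64}{5}\log 2$.

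An alternative route, useful as a cross-check, is the substitution $s=x+y$, $u=x/s$. Over the region $[1,\infty)^2$ this is less clean, so I would use it only to verify the result.

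\textbf{Main obstacle.} The bookkeeping of the logarithmic terms is the hard part. In particular, the $\log x$ pieces arising from the partial fractions must cancel, because the inner integral is finite. I would double-check the final constant numerically before relying on it. Numerically, $\tfrac{136}{15}\approx 9.066667$ and $\tfrac{64}{5}\log 2\approx 8.872284$, so the value is about $0.194383<0.19439$.

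\textbf{The final comparison.} The inequality for the infimum is immediate, since $F\in\mathcal{P}_{1/2}$. For the right-hand side I would apply Theorem~\ref{thm:order-statistic-policy} with $n=2$ and $\lambda=1/2$, where the prefactor is $(1-\lambda)/\lambda=1$:
$$C(2,2,\tfrac12)=\tfrac{2}{3}\cdot\tfrac45-\tfrac13=\tfrac15,\qquad C(2,1,\tfrac12)=\tfrac45-\tfrac23=\tfrac{2}{15}.$$
Since $C(2,2,\tfrac12)>C(2,1,\tfrac12)$, we get $r^*=2$. Because the inner infimum for $r=2$ is then exactly $C(2,2,\tfrac12)=1/5$, this gives the claimed equality.
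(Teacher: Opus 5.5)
Your proposal is correct and follows essentially the paper's route: a direct evaluation of the double integral by partial fractions in the inner variable, followed by elementary logarithmic integrals. The only difference is cosmetic: the paper uses symmetry to replace the averaged price by $v_1$, whereas you shift to $X=1+V_1$, $Y=1+V_2$ and expand the revenue as $2/(X+Y)-4/(X+Y)^2$. The constant you left as ``should give'' does check out. For instance, substituting $u=1/x$, $v=1/y$ turns your two pieces into homogeneous integrals over $[0,1]^2$, which gives $8\cdot\frac{2}{5}\left(\log 2-\frac{1}{2}\right)-16\cdot\frac{1}{3}\left(3\log 2-2\right)=\frac{136}{15}-\frac{64}{5}\log 2$. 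Your use of Theorem~\ref{thm:order-statistic-policy} for the right-hand side is also fine, with one step made explicit: by Lemma~\ref{lemma:const-gen-hazard-order-stat} the $r=1$ infimum is at most $C(2,1,\frac12)=\frac{2}{15}<\frac15$, so the maximum $C(2,2,\frac12)=\frac15$ must be the $r=2$ infimum.
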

\begin{proof}
    First, $\E[V]=\int_0^\infty(1-F(x))\dd x=1$. Next, 
    \begin{align*}
        I&:=\E\left[\Rev(F, \frac{1}{2}(V_1+V_2))\right]\\&=\int_0^\infty \int_0^\infty\frac{v_1+v_2}{2}\left(1-F\left(\frac{v_1+v_2}{2}\right)\right)\dd F(v_2)\dd F(v_1)\\
        &=\int_0^\infty \int_0^\infty v_1\left(1-F\left(\frac{v_1+v_2}{2}\right)\right)\dd F(v_2)\dd F(v_1)\\
        &=\int_0^\infty \int_0^\infty v_1\left(1+\frac{v_1+v_2}{2}\right)^{-2}\cdot2(1+v_2)^{-3}\cdot2(1+v_1)^{-3}\dd v_2\dd v_1\\
        &=16\int_0^\infty \frac{v_1}{(1+v_1)^3}\int_0^\infty \frac{1}{(2+v_1+v_2)^2(1+v_2)^3}\dd v_2\dd v_1.
    \end{align*}
    Using partial fractions we find that
    \begin{align*}
        &\;\;\;\;\;\int_0^\infty \frac{1}{(2+v_1+v_2)^2(1+v_2)^3}\dd v_2\\&=\int_0^\infty\Bigg(\frac{3}{\left(1+v_{1}\right)^{4}\left(1+v_{2}\right)}-\frac{2}{\left(1+v_{1}\right)^{3}\left(1+v_{2}\right)^{2}}+\frac{1}{\left(1+v_{1}\right)^{2}\left(1+v_{2}\right)^{3}}\\&\qquad\qquad-\frac{3}{\left(1+v_{1}\right)^{4}\left(2+v_{1}+v_{2}\right)}-\frac{1}{\left(1+v_{1}\right)^{3}\left(2+v_{1}+v_{2}\right)^{2}}\Bigg)\dd v_2\\
        &=\frac{3}{(1+v_1)^4}\int_0^\infty\left(\frac{1}{1+v_2}-\frac{1}{2+v_1+v_2}\right)\dd v_2
        -\frac{2}{\left(1+v_{1}\right)^{3}}\int_{0}^{\infty}\frac{1}{\left(1+v_{2}\right)^{2}}dv_{2}\\&+\frac{1}{\left(1+v_{1}\right)^{2}}\int_{0}^{\infty}\frac{1}{\left(1+v_{2}\right)^{3}}dv_{2}-\frac{1}{\left(1+v_{1}\right)^{3}}\int_{0}^{\infty}\frac{1}{\left(2+v_{1}+v_{2}\right)^{2}}\dd v_{2}\\
        &=\frac{3\log\left(2+v_{1}\right)}{\left(1+v_{1}\right)^{4}}-\frac{2}{\left(1+v_{1}\right)^{3}}+\frac{1}{2\left(1+v_{1}\right)^{2}}-\frac{1}{\left(1+v_{1}\right)^{3}}\frac{1}{\left(2+v_{1}\right)}.
    \end{align*}
    Substituting this result yields the expression
    \begin{align*}
        I=16\int_{0}^{\infty}\left(\frac{3v_{1}\log\left(2+v_{1}\right)}{\left(1+v_{1}\right)^{7}}-\frac{2v_{1}}{\left(1+v_{1}\right)^{6}}+\frac{v_{1}}{2\left(1+v_{1}\right)^{5}}-\frac{v_{1}}{\left(1+v_{1}\right)^{6}\left(2+v_{1}\right)}\right)dv_{1}.
    \end{align*}
    Using partial fractions, the last three terms can be computed
    \begin{align*}
        \int_{0}^{\infty}\frac{v_{1}}{\left(1+v_{1}\right)^{6}}dv_{1}&=\int_{0}^{\infty}\frac{1}{\left(1+v_{1}\right)^{5}}dv_{1}-\int_{0}^{\infty}\frac{1}{\left(1+v_{1}\right)^{6}}dv_{1}=\frac{1}{4}-\frac{1}{5}=\frac{1}{20}\\
        \int_{0}^{\infty}\frac{v_{1}}{\left(1+v_{1}\right)^{5}}dv_{1}&=\int_{0}^{\infty}\frac{1}{\left(1+v_{1}\right)^{4}}dv_{1}-\int_{0}^{\infty}\frac{1}{\left(1+v_{1}\right)^{5}}dv_{1}=\frac{1}{3}-\frac{1}{4}=\frac{1}{12}\\
        \int_{0}^{\infty}\frac{v_{1}}{\left(1+v_{1}\right)^{6}\left(2+v_{1}\right)}dv_{1}&=\int_1^\infty\frac{t-1}{t^6(1+t)}\dd t\\
        &=\int_1^\infty\left(-\frac{2}{1+t}+\frac{2}{t}-\frac{2}{t^{2}}+\frac{2}{t^{3}}-\frac{2}{t^{4}}+\frac{2}{t^{5}}-\frac{1}{t^{6}}\right)\dd t\\&=\int_1^\infty\left(\frac{2}{t}-\frac{2}{1+t}\right)\dd t-\frac{41}{30}=2\log 2-\frac{41}{30}
    \end{align*}
    For the first term use the substitutions $t=v_1/(1+v_1)$, $w=2-t$ and $u=1-t$ to obtain
    \begin{align*}
        \int_{0}^{\infty}\frac{v_1\log\left(2+v_1\right)}{\left(1+v_1\right)^{7}}\dd v_1&=\int_{0}^{1}t\left(1-t\right)^{4}\log\left(\frac{2-t}{1-t}\right)\dd t\\
        &=\int_{1}^{2}\left(2-w\right)\left(1-w\right)^{4}\log\left(w\right)dw-\int_{0}^{1}\left(1-u\right)u^{4}\log\left(u\right)du.
    \end{align*}
    Integration by parts informs us that
    $$
    \int t^k\log (t)\dd t=\frac{t^{k+1}}{k+1}\log(t)-\frac{1}{(k+1)^2} t^{k+1}+C.
    $$
    While cumbersome, with the above identity the exact value of the first term can be worked out to be $\frac{2}{5}\log2-\frac{89}{360}$. Finally,
    \begin{equation*}
        I=16\left(3\left(\frac{2}{5}\log2-\frac{89}{360}\right)-\frac{2}{20}+\frac{1}{2}\frac{1}{12}-\left(2\log 2-\frac{41}{30}\right)\right)=\frac{136}{15}-\frac{64}{5}\log2. \qedhere
    \end{equation*}
\end{proof}

\section{Proofs of lemmas in Section \ref{sec:lambda-multiple}}
\label{appendix:order-statistic-policy}

% \begin{lemma}
%     \label{lemma:constant-beta-equivalence}
%     It holds that
%     $$
%     B(n+2-r,r) > (1-\lambda)B(n+2-r-\lambda,r) \iff C(n,r,\lambda)>C(n,r-1,\lambda).
%     $$
% \end{lemma}
\begin{proof}[Proof of Lemma \ref{lemma:constant-beta-equivalence}]
    Note the identities 
    $$
    \binom{n}{r-1}=\frac{r}{n-r+1}\binom{n}{r}\text{ and }B(x+1,y-1)=\frac{x}{y-1}B(x,y).
    $$
    The result follows from rewriting
    \begin{align*}
        &\equalSpace C(n,r,\lambda)-C(n,r-1,\lambda)\\&=\frac{1-\lambda}{\lambda}r\binom{n}{r}(B(n+2-r-\lambda,r)-B(n+2-r,r))\\
        &-\frac{1-\lambda}{\lambda}(r-1)\binom{n}{r-1}(B(n+2-(r-1)-\lambda,r-1)-B(n+2-(r-1),r-1))\\
        &=\frac{1-\lambda}{\lambda}r\binom{n}{r}\Big(B(n+2-r-\lambda,r)-B(n+2-r,r)\\
        &-\frac{r-1}{n-r+1}(B(n+2-(r-1)-\lambda,r-1)-B(n+2-(r-1),r-1))\Big)\\
        &=\frac{1-\lambda}{\lambda}r\binom{n}{r}\Big(B(n+2-r-\lambda,r)-B(n+2-r,r)\\
        &-\frac{n+2-r-\lambda}{n-r+1}B(n+2-r-\lambda,r)-\frac{n+2-r}{n-r+1}B(n+2-r,r)\Big)\\
        &=\frac{1-\lambda}{\lambda}\frac{r}{n-r+1}\binom{n}{r}(B(n+2-r,r)-(1-\lambda)B(n+2-r-\lambda,r)).
    \end{align*}
    Now, $\frac{1-\lambda}{\lambda}\frac{r}{n-r+1}\binom{n}{r}\ge 0$ and the result follows. 
\end{proof}
\bigskip

% \begin{lemma}
%     \label{lemma:C-unimodality}
%     Let $n\ge 2$. The function $g:\{1,\dots,n\}\to \mathbb{R}$ given by $g(r)=C(n,r,\lambda)$ is unimodal.
% \end{lemma}
\begin{proof}[Proof of Lemma \ref{lemma:C-unimodality}]
    $g$ is unimodal if $s(r)=q(r)-q(r-1)$ has at most one sign change and if $s(2)>0$. First,
    \begin{align*}
        s(2)&=C(n,2,\lambda)-C(n,1,\lambda)\\
        &=\frac{1-\lambda}{\lambda}2\binom{n}{2}(B(n-\lambda,2)-B(n,2))-\frac{1-\lambda}{\lambda}\binom{n}{1}(B(n+1-\lambda,1)-B(n+1,1))\\
        &=\frac{1-\lambda}{\lambda}n\left(n-1\right)\left(\frac{1}{\left(n-\lambda\right)\left(n+1-\lambda\right)}-\frac{1}{n\left(n+1\right)}\right)-\frac{1-\lambda}{\lambda}n\left(\frac{1}{n+1-\lambda}-\frac{1}{n+1}\right)\\
        &=\frac{\left(1-\lambda\right)\left(n^{2}-n-1+\lambda\right)}{\left(n+1\right)\left(n-\lambda\right)\left(n+1-\lambda\right)}.
    \end{align*}
    For $n\ge 2$, all terms are positive, so $s(2)>0$. From Lemma \ref{lemma:constant-beta-equivalence} we obtain that
    \begin{align*}
        s(r)&=\frac{1-\lambda}{\lambda}\frac{r}{n-r+1}\binom{n}{r}(B(n+2-r,r)-(1-\lambda)B(n+2-r-\lambda,r))\\
        &=\frac{1-\lambda}{\lambda}\frac{r}{n-r+1}\binom{n}{r}\frac{\Gamma(r)\Gamma(n+2-r)}{\Gamma(n+2-\lambda)}\left(\frac{\Gamma(n+2-\lambda)}{\Gamma(n+2)}-(1-\lambda)\frac{\Gamma(n+2-\lambda-r)}{\Gamma(n+2-r)}\right).
    \end{align*}
    The last equality follows from the identity $B(a,b)=\Gamma(a)\Gamma(b)/\Gamma(a+b)$. Note that the sign of $s$ is the same as the sign of 
    $$
    \frac{\Gamma(n+2-\lambda)}{\Gamma(n+2)}-(1-\lambda)\frac{\Gamma(n+2-\lambda-r)}{\Gamma(n+2-r)}.
    $$
    The first term is a constant w.r.t. $r$ and the second term is decreasing in $r$. Therefore the sign of $s$ flips at most once, and the function $q$ is unimodal.
\end{proof}
\bigskip

\begin{cleverproof}{lemma:order-statistic-beta}
    First, let $A(x)=B(1-F(x); n-r+2,r)$. From the chain rule and $B(x;a;b)=\int_0^xt^{a-1}(1-t)^{b-1}\dd t$ we obtain
    $$
    A'(x)=-F(x)^{r-1}(1-F(x))^{n-r+1}f(x).
    $$
    Next, using the order statistic density formula 
    $$
    f_{V_{r:n}}(x) = \frac{n!}{(r-1)!(n-r)!} f_{V}(x) \left[ F_{V}(x) \right]^{r-1} \left[ 1 - F_{V}(x) \right]^{n-r},
    $$ 
    and integration by parts, we find that 
    \begin{align*}
        \E[\Rev(F,X_{r:n})]&=\E[X_{r:n}(1-F(X_{r:n}))]\\&=\int_0^\infty x(1-F(x))f_{X_{r:n}}(x)\dd x\\
        &=r\binom{n}{r}\int_0^\infty xF(x)^{r-1}(1-F(x))^{n-r+1}f(x)\dd x\\
        &=-r\binom{n}{r}\int_0^\infty xA'(x)\dd x\\
        &=\left[-r\binom{n}{r}xA(x)\right]_0^\infty+r\binom{n}{r}\int_0^\infty A(x)\dd x\\
        &=r\binom{n}{r}\int_0^\infty A(x)\dd x-r\binom{n}{r}\lim_{x\to\infty}xA(x).
    \end{align*}
    Showing that $xA(x)\to 0$ as $x\to\infty$ completes the proof. We have, using $(1-t)^{r} \leq 1$ in the second inequality, the following bounds on $A$:
    $$
    0\le A(x)=B(1-F(x),n-r+2,r)\le \frac{1}{n-r+2}(1-F(x))^{n-r+2}\le 1-F(x).
    $$
    We will be able to conclude the desired statement based on these bounds in combination with the following lemma.

    \begin{lemma}
    \label{lemma:tail-bound}
    Let $X$ be a non-negative random variable with distribution $F$ and finite mean, then $\lim_{x\to\infty}x(1-F(x))=0$.
\end{lemma}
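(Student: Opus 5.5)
The plan is to express the tail $x(1-F(x))$ in terms of the mean and then use integrability. Since $X\ge 0$ has finite mean, we have $\E[X]=\int_0^\infty(1-F(t))\,\mathrm{d}t<\infty$, or equivalently $\E[X]=\E[X\indicator{X\le x}]+\E[X\indicator{X>x}]$ for every $x$. The key pointwise inequality is
\[
x(1-F(x)) = x\,\Prob(X>x) = \E\big[x\indicator{X>x}\big] \le \E\big[X\indicator{X>x}\big],
\]
because on the event $\{X>x\}$ we have $x< X$. So it suffices to show that $\E[X\indicator{X>x}]\to 0$ as $x\to\infty$.

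For that step I would apply dominated convergence. The random variables $X\indicator{X>x}$ are dominated by $X$, which is integrable. Moreover, they converge to $0$ pointwise as $x\to\infty$, since $X<\infty$ almost surely (a consequence of the finite mean). Hence $\E[X\indicator{X>x}]\to 0$. Combining this with $x(1-F(x))\ge 0$ gives $\lim_{x\to\infty}x(1-F(x))=0$.

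An alternative, purely analytic route avoids expectations of indicators. Because $1-F$ is non-increasing, for $x>0$
\[
\tfrac{x}{2}\big(1-F(x)\big)\le \int_{x/2}^{x}(1-F(t))\,\mathrm{d}t \le \int_{x/2}^{\infty}(1-F(t))\,\mathrm{d}t,
\]
and the right-hand side tends to $0$ as the tail of a convergent integral. I expect no real obstacle here. The only point needing care is justifying the exchange of limit and expectation (or the convergence of the tail integral), and this follows directly from the finite-mean assumption.

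Finally, in the proof of Lemma \ref{lemma:order-statistic-beta} this lemma gives $0\le xA(x)\le x(1-F(x))\to 0$. The boundary term therefore vanishes, which completes that proof.
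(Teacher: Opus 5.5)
Your proof is correct, and your ``alternative, purely analytic route'' is exactly the paper's argument. The paper bounds $0\le x(1-F(x))\le 2\int_{x/2}^{x}(1-F(t))\,\mathrm{d}t$ using that $1-F$ is non-increasing. Since $\int_0^\infty(1-F(t))\,\mathrm{d}t=\E[X]<\infty$, this integral over $[x/2,x]$ tends to zero, and the squeeze theorem finishes.

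Your primary route is a genuinely different, measure-theoretic argument. You bound $x\Prob(X>x)\le\E[X\indicator{X>x}]$ and send the right-hand side to zero by dominated convergence. It proves the slightly stronger fact that the truncated tail mean $\E[X\indicator{X>x}]$ vanishes, and it does not use the survival-function representation of the mean at all. The continuous limit $x\to\infty$ causes no trouble, since $x\mapsto\E[X\indicator{X>x}]$ is non-increasing and limits along sequences suffice.

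The paper's route uses only monotonicity of $1-F$ and the identity $\E[X]=\int_0^\infty(1-F)$. That keeps it within the survival-function calculus used throughout the surrounding proofs, such as Lemma~\ref{lemma:order-statistic-beta}. Either argument fully establishes the lemma and its use there.
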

\begin{proof}
    Note the following inequalities
    $$
    0\le x(1-F(x))=2\int_{x/2}^x(1-F(x))\dd t\le 2\int_{x/2}^x(1-F(t))\dd t,
    $$
    by the monotonicity of $F$.
    Because the mean is finite $\lim_{x\to\infty} \int_0^x(1-F(t))\dd t$ exists\\ and $\lim_{x\to \infty}\int_{x/2}^x(1-F(t))\dd t=0$. By the squeeze theorem, $\lim_{x\to\infty}x(1-F(x))=0$.
\end{proof}
    To  conclude, note that $\lim_{x\to\infty}xA(x)\le \lim_{x\to\infty}x(1-F(x))=0$ by Lemma \ref{lemma:tail-bound} so that then the claim  $xA(x)\to 0$ as $x\to\infty$ follows from the Squeeze theorem.
\end{cleverproof}
\bigskip

\begin{cleverproof}{lemma:const-gen-hazard-order-stat}
    First, because $F(x) = 1- (1+x)^{-1/\lambda}$ we have $\E[V]=\int_0^\infty(1-F(x))\dd x=\int_0^\infty(1+x)^{-1/
    \lambda}\dd x=\lambda/(1-\lambda)$. 
     We next carry out the substitution $s=1-F(x) = (1+x)^{-1/\lambda}$, i.e., $x = s^{-\lambda} - 1$ so that $\mathrm{d}x = -\lambda s^{-\lambda-1}\mathrm{d}s$. Together with Lemma \ref{lemma:order-statistic-beta}  the result follows by observing that
    \begin{align*}
        \E[\Rev(F,V_{r:n})]&=r\binom{n}{r}\int_0^\infty B(1-F(x); n-r+2,r)\dd x\\
        &=r\binom{n}{r}\int_1^{0} B(s; n-r+2,r)(-\lambda s^{-\lambda -1})\dd s\\
        &=\lambda r\binom{n}{r}\int_0^{1} \int_0^{s}t^{n-r+1}(1-t)^{r-1}s^{-\lambda -1}\dd t \dd s \\
        &=\lambda r\binom{n}{r}\int_0^{1} t^{n-r+1}(1-t)^{r-1}\int_t^{1}s^{-\lambda -1}\dd s \dd t \\
        &= r\binom{n}{r}\int_0^{1} t^{n-r+1}(1-t)^{r-1}(t^{-\lambda}-1) \dd t \\
        &=r\binom{n}{r}\left(B(n+2-r-\lambda, r)-B(n+2-r, r)\right).\qedhere
    \end{align*}
\end{cleverproof}

\bigskip

\begin{proof}[Proof of Lemma \ref{lemma:barlow}]
    Let $c$ denote the sign change location. For $x\in[0,c]$, $h(x)g(x)\ge h(c)g(x)$ since $h(x)\ge h(c)$ and $g(x)\ge 0$. For $x\in(c,\infty)$, $h(x)g(x)\ge h(c)g(x)$ since $h(c)\ge h(x)$ and $g(x)\le 0$. Therefore
    $$
    \int_0^\infty h(x)g(x)\dd x\ge \int_0^\infty h(c)g(x)\dd x=h(c)\int_0^\infty g(x)\dd x=0,
    $$
    which completes the proof.
\end{proof}

\bigskip

\begin{proof}[Proof of Lemma \ref{lemma:unimodal-beta-average}]
    If $b=1$, then $\alpha(t)=t^{a-1}/a$ which is increasing and therefore unimodal. If $b>1$ we have that
    $$
    \alpha'(t)=\frac{t\cdot t^{a-1}(1-t)^{b-1}-B(t;a,b)}{t^2}.
    $$
    Let $D(t)=tt^{a-1}(1-t)^{b-1}-B(t;a,b)$, then
    \begin{align*}
        D'(t)&=at^{a-1}(1-t)^{b-1}-(b-1)t^a(1-t)^{b-2}-t^{a-1}(1-t)^{b-1}\\&=t^{a-1}(1-t)^{b-2}\left(\left(a-1\right)\left(1-t\right)-\left(b-1\right)t\right).
    \end{align*}
    Note that $\left(a-1\right)\left(1-t\right)-\left(b-1\right)t=0$ implies $t=\frac{a-1}{b-1+a-1}=:t^*\in(0,1)$. Therefore, for $t\in(0,t^*)$ the sign of $D'$ is positive and for $t\in(t^*,1)$ the sign is negative. 
    Since $D'$ changes sign once, $D$ is increasing on $(0,t^*)$ and decreasing on $(t^*,1)$ and therefore has a unique maximum at $t^*$. Together with $D(0)=0$ and $D(1)<0$ this implies that $D$ has one zero in $(t^*,1)$ and a corresponding sign change from positive to negative. The sign of $\alpha'$ is the same as the sign of $D$, so $\alpha'$ has one sign change form positive to negative and $\alpha$ is therefore unimodal on $(0,1)$. Note that unimodality is preserved under shifting and positive scaling.
\end{proof}

\newpage
\section{Numerical and asymptotic results of Section \ref{sec:lambda-multiple}}\label{appendix:lambda-many-samples-asymptotic}
To start, we establish the equivalence of the two representations of the function $C$ in \eqref{eq:C-n-r-lambda}.
\begin{align*}
    C(n,r,\lambda)&=\frac{1-\lambda}{\lambda}r\binom{n}{r}\left(B(n+2-r-\lambda, r)-B(n+2-r, r)\right)\\
    &=\frac{1-\lambda}{\lambda}r\frac{\Gamma(n+1)}{\Gamma(r+1)\Gamma(n-r+1)}\left(\frac{\Gamma(n+2-r-\lambda)\Gamma(r)}{\Gamma(n+2+\lambda)}-\frac{\Gamma(n+2-r)\Gamma(r)}{\Gamma(n+2)}\right)\\
    &=\frac{1-\lambda}{\lambda}\frac{\Gamma(n+1)}{\Gamma(n-r+1)}\left(\frac{\Gamma(n+2-r-\lambda)}{\Gamma(n+2+\lambda)}-\frac{\Gamma(n+2-r)}{\Gamma(n+2)}\right)\\
    &=\frac{1-\lambda}{\lambda}\left(\frac{\Gamma(n+1)}{\Gamma(n-r+1)}\frac{\Gamma(n+2-r-\lambda)}{\Gamma(n+2+\lambda)}-\frac{n-r+1}{n+1}\right)\\
    &=\frac{1-\lambda}{\lambda}\left(\prod_{k=n-r+1}^n\frac{k}{k+1-\lambda}-\frac{n-r+1}{n+1}\right).
\end{align*}
With this form, $C(n,r^*,\lambda)$ can be calculated quite easily, and Corollary \ref{corol:C-n-r-lambda-asymp} can be proven.
\begin{proof}[Proof of Corollary \ref{corol:C-n-r-lambda-asymp}]
    \citet[Equation 6.1.47]{abramowitz1948handbook} inform us that 
    $$
    \frac{\Gamma(z+a)}{\Gamma(z+b)}=z^{a-b}\left(1+\frac{(a-b)(a+b-1)}{2z}+O\left(z^{-2}\right)\right),
    $$
    as $z\to\infty$. Let $s=n-r+1$ and $y=s/n$. Applying the above result twice yields
    \begin{align*}
        \prod_{k=yn}^n\frac{k}{k+1-\lambda}&=\frac{\Gamma(yn+1-\lambda)}{\Gamma(yn)}\frac{\Gamma(n+1)}{\Gamma(n+2-\lambda)}\\&=(yn)^{1-\lambda}\left(1-\frac{(1-\lambda)\lambda}{2yn}+O\left(n^{-2}\right)\right)n^{\lambda-1}\left(1-\frac{(1-\lambda)(2-\lambda)}{2n}+O\left(n^{-2}\right)\right)\\
        &=y^{1-\lambda}\left(1-\frac{(1-\lambda)\lambda}{2yn}-\frac{(1-\lambda)(2-\lambda)}{2n}+O\left(n^{-2}\right)\right)\\
        &=y^{1-\lambda}-\frac{1-\lambda}{2n}\left(\lambda y^{-\lambda}+(2-\lambda)y^{1-\lambda}\right)+O\left(n^{-2}\right).
    \end{align*}
    Moreover, $s/(n+1)=y-y/n+O(n^{-2})$. Let $g_n(y)=C(n,1+(1-y)n,\lambda)$, using the above we recover that
    $$
    g_n(y)=\frac{1-\lambda}{\lambda}(y^{1-\lambda}-y)+\frac{1-\lambda}{\lambda}\left(y-\frac{1-\lambda}{2}(2-\lambda)y^{1-\lambda}-\frac{1-\lambda}{2}\lambda y^{-\lambda}\right)\frac{1}{n}+O\left(n^{-2}\right).
    $$
    We have that $g_n(y)\to g(y)=\frac{1-\lambda}{\lambda}(y^{1-\lambda}-y)$ as $n\to\infty$. The function $g$ has a unique maximiser at $y^*=(1-\lambda)^{1/\lambda}$, so the maximiser of $g_n$ converges to $y^*$. For the original problem we can therefore conclude that $r^*/n\to 1-y^*=1-(1-\lambda)^{1/\lambda}$. Evaluating $g_n$ at $y^*$ yields 
    $$
    g_n(y^*)=(1-\lambda)^{1/\lambda}-\frac{1}{2}(1-\lambda)(1-(1-\lambda)^{1/\lambda})\frac{1}{n}+O\left(n^{-2}\right).
    $$
    As $g_n(y^*)$ is a surrogate of $C(n,r^*,\lambda)$ we can conclude that
    \begin{equation*}
        C(n,r^*,\lambda)=(1-\lambda)^{\frac{1}{\lambda}}-\frac{1}{2}(1-\lambda)(1-(1-\lambda)^{\frac{1}{\lambda}})\frac{1}{n}+O\left(n^{-2}\right)=(1-\lambda)^{\frac{1}{\lambda}}-\Theta\left(\frac{1}{n}\right).\qedhere
    \end{equation*}
\end{proof}
In Figure \ref{fig:performance} the functions $C(n,r^*,\lambda)$ and 
$$
R(n,\lambda)=n\frac{(1-\lambda)^{1/\lambda}-C(n,r^*,\lambda)}{\frac{1}{2}(1-\lambda)(1-(1-\lambda)^{\frac{1}{\lambda}})}
$$
are plotted for several $\lambda$ values. The function $R$ represents how accurate the asymptotic first order constant is. The wavy of these graphs is due to $r^*$ having to be an integer.

\begin{figure}[!h]
    \centering
    \hspace{-0.39cm}
    \begin{tikzpicture}
    \definecolor{cbblue}{rgb}{0.00,0.75,0.75}
    \definecolor{cbred}{rgb}{0.84,0.37,0.00}
    \definecolor{cborange}{rgb}{0.90,0.60,0.00}
    \definecolor{cbgreen}{rgb}{0.00,0.62,0.45}
    \definecolor{cbpurple}{rgb}{0.80,0.47,0.65}
    \begin{groupplot}[
        group style={
            group size=2 by 1,
            horizontal sep=2.3cm,
        },
        axis x line=bottom,
        axis y line=left,
        xlabel style={at={(axis description cs:0.5, -0.1)}, anchor=north},
        grid=none,
        width=7.5cm, height=7cm,
        legend style={
            at={(0.6,0.5)},
            anchor=west,
            cells={anchor=west}
        },
    ]
    
    % Plot 1 - log10(lambda)
    \nextgroupplot[
        ylabel={$C(n,r^*,\lambda)$},
        ylabel style={at={(axis description cs:-0.15,0.5)}},
        xlabel={$n$},
        xlabel style={
            at={(current axis.right of origin)},
            anchor=west
        },
        % Small work around, the limits are quirky
        ymin=0, ymax=0.33,
        ytick={0.001, 0.1, 0.2, 0.3},
        yticklabels={0, 0.1, 0.2, 0.3},
        xmin=0, xmax=100,
        xtick={25, 50, 75, 99.995},
        xticklabels={25, 50, 75, 100}
    ]
    \addplot[
        color=cbblue,
        solid,
    ] coordinates {
    (1, 0.2143)
(2, 0.2468)
(3, 0.2474)
(4, 0.2699)
(5, 0.2781)
(6, 0.279)
(7, 0.286)
(8, 0.29)
(9, 0.2908)
(10, 0.2938)
(11, 0.2963)
(12, 0.297)
(13, 0.2983)
(14, 0.3001)
(15, 0.3007)
(16, 0.3014)
(17, 0.3027)
(18, 0.3032)
(19, 0.3035)
(20, 0.3046)
(21, 0.305)
(22, 0.3051)
(23, 0.306)
(24, 0.3064)
(25, 0.3064)
(26, 0.3071)
(27, 0.3075)
(28, 0.3075)
(29, 0.308)
(30, 0.3083)
(31, 0.3084)
(32, 0.3087)
(33, 0.309)
(34, 0.3091)
(35, 0.3093)
(36, 0.3096)
(37, 0.3097)
(38, 0.3098)
(39, 0.3101)
(40, 0.3102)
(41, 0.3103)
(42, 0.3105)
(43, 0.3107)
(44, 0.3107)
(45, 0.3109)
(46, 0.311)
(47, 0.3111)
(48, 0.3112)
(49, 0.3114)
(50, 0.3114)
(51, 0.3115)
(52, 0.3116)
(53, 0.3117)
(54, 0.3118)
(55, 0.3119)
(56, 0.3119)
(57, 0.312)
(58, 0.3121)
(59, 0.3122)
(60, 0.3122)
(61, 0.3123)
(62, 0.3124)
(63, 0.3124)
(64, 0.3125)
(65, 0.3126)
(66, 0.3126)
(67, 0.3127)
(68, 0.3127)
(69, 0.3127)
(70, 0.3128)
(71, 0.3129)
(72, 0.3129)
(73, 0.313)
(74, 0.313)
(75, 0.313)
(76, 0.3131)
(77, 0.3131)
(78, 0.3132)
(79, 0.3132)
(80, 0.3133)
(81, 0.3133)
(82, 0.3133)
(83, 0.3134)
(84, 0.3134)
(85, 0.3134)
(86, 0.3135)
(87, 0.3135)
(88, 0.3135)
(89, 0.3136)
(90, 0.3136)
(91, 0.3136)
(92, 0.3137)
(93, 0.3137)
(94, 0.3137)
(95, 0.3137)
(96, 0.3138)
(97, 0.3138)
(98, 0.3138)
(99, 0.3139)
(100, 0.3139)
    };
    % \addlegendentry{$\lambda=0.25$}
    
    \addplot[
        color=black,
        solid,
    ] coordinates {
    (1, 0.1667)
(2, 0.2)
(3, 0.2071)
(4, 0.2095)
(5, 0.2208)
(6, 0.2258)
(7, 0.2274)
(8, 0.2283)
(9, 0.2321)
(10, 0.234)
(11, 0.2347)
(12, 0.2352)
(13, 0.237)
(14, 0.2381)
(15, 0.2385)
(16, 0.2387)
(17, 0.2399)
(18, 0.2405)
(19, 0.2407)
(20, 0.2409)
(21, 0.2417)
(22, 0.2421)
(23, 0.2423)
(24, 0.2424)
(25, 0.2429)
(26, 0.2432)
(27, 0.2434)
(28, 0.2435)
(29, 0.2439)
(30, 0.2441)
(31, 0.2442)
(32, 0.2443)
(33, 0.2446)
(34, 0.2447)
(35, 0.2448)
(36, 0.2449)
(37, 0.2451)
(38, 0.2453)
(39, 0.2453)
(40, 0.2454)
(41, 0.2456)
(42, 0.2457)
(43, 0.2458)
(44, 0.2458)
(45, 0.246)
(46, 0.2461)
(47, 0.2461)
(48, 0.2461)
(49, 0.2463)
(50, 0.2464)
(51, 0.2464)
(52, 0.2464)
(53, 0.2466)
(54, 0.2466)
(55, 0.2467)
(56, 0.2467)
(57, 0.2468)
(58, 0.2469)
(59, 0.2469)
(60, 0.2469)
(61, 0.247)
(62, 0.2471)
(63, 0.2471)
(64, 0.2471)
(65, 0.2472)
(66, 0.2472)
(67, 0.2473)
(68, 0.2473)
(69, 0.2473)
(70, 0.2474)
(71, 0.2474)
(72, 0.2474)
(73, 0.2475)
(74, 0.2475)
(75, 0.2475)
(76, 0.2476)
(77, 0.2476)
(78, 0.2476)
(79, 0.2477)
(80, 0.2477)
(81, 0.2477)
(82, 0.2478)
(83, 0.2478)
(84, 0.2478)
(85, 0.2478)
(86, 0.2479)
(87, 0.2479)
(88, 0.2479)
(89, 0.2479)
(90, 0.248)
(91, 0.248)
(92, 0.248)
(93, 0.248)
(94, 0.248)
(95, 0.2481)
(96, 0.2481)
(97, 0.2481)
(98, 0.2481)
(99, 0.2481)
(100, 0.2481)
    };
    % \addlegendentry{$\lambda=0.50$}

    \addplot[
        color=cbblue,
        dashed,
    ] coordinates {
    (1, 0.1)
(2, 0.1259)
(3, 0.1355)
(4, 0.1393)
(5, 0.1406)
(6, 0.1407)
(7, 0.1439)
(8, 0.1463)
(9, 0.1477)
(10, 0.1486)
(11, 0.149)
(12, 0.1491)
(13, 0.1497)
(14, 0.1506)
(15, 0.1512)
(16, 0.1516)
(17, 0.1518)
(18, 0.1519)
(19, 0.152)
(20, 0.1525)
(21, 0.1528)
(22, 0.1531)
(23, 0.1532)
(24, 0.1533)
(25, 0.1533)
(26, 0.1536)
(27, 0.1538)
(28, 0.154)
(29, 0.1541)
(30, 0.1541)
(31, 0.1541)
(32, 0.1543)
(33, 0.1544)
(34, 0.1545)
(35, 0.1546)
(36, 0.1547)
(37, 0.1547)
(38, 0.1547)
(39, 0.1549)
(40, 0.155)
(41, 0.155)
(42, 0.1551)
(43, 0.1551)
(44, 0.1551)
(45, 0.1552)
(46, 0.1553)
(47, 0.1553)
(48, 0.1554)
(49, 0.1554)
(50, 0.1554)
(51, 0.1554)
(52, 0.1555)
(53, 0.1556)
(54, 0.1556)
(55, 0.1556)
(56, 0.1556)
(57, 0.1556)
(58, 0.1557)
(59, 0.1558)
(60, 0.1558)
(61, 0.1558)
(62, 0.1558)
(63, 0.1558)
(64, 0.1559)
(65, 0.1559)
(66, 0.1559)
(67, 0.156)
(68, 0.156)
(69, 0.156)
(70, 0.156)
(71, 0.156)
(72, 0.1561)
(73, 0.1561)
(74, 0.1561)
(75, 0.1561)
(76, 0.1561)
(77, 0.1561)
(78, 0.1562)
(79, 0.1562)
(80, 0.1562)
(81, 0.1562)
(82, 0.1562)
(83, 0.1562)
(84, 0.1563)
(85, 0.1563)
(86, 0.1563)
(87, 0.1563)
(88, 0.1563)
(89, 0.1563)
(90, 0.1563)
(91, 0.1564)
(92, 0.1564)
(93, 0.1564)
(94, 0.1564)
(95, 0.1564)
(96, 0.1564)
(97, 0.1564)
(98, 0.1564)
(99, 0.1564)
(100, 0.1564)
    };
    % \addlegendentry{$\lambda=0.75$}
    
    \addplot[
        color=black,
        dashed,
    ] coordinates {
    (1, 0.0238)
(2, 0.0314)
(3, 0.0349)
(4, 0.037)
(5, 0.0383)
(6, 0.0391)
(7, 0.0397)
(8, 0.0402)
(9, 0.0405)
(10, 0.0408)
(11, 0.041)
(12, 0.0411)
(13, 0.0412)
(14, 0.0413)
(15, 0.0414)
(16, 0.0414)
(17, 0.0415)
(18, 0.0415)
(19, 0.0415)
(20, 0.0415)
(21, 0.0416)
(22, 0.0416)
(23, 0.0416)
(24, 0.0416)
(25, 0.0417)
(26, 0.0418)
(27, 0.0418)
(28, 0.0419)
(29, 0.0419)
(30, 0.042)
(31, 0.042)
(32, 0.042)
(33, 0.042)
(34, 0.0421)
(35, 0.0421)
(36, 0.0421)
(37, 0.0421)
(38, 0.0421)
(39, 0.0421)
(40, 0.0421)
(41, 0.0421)
(42, 0.0422)
(43, 0.0422)
(44, 0.0422)
(45, 0.0422)
(46, 0.0422)
(47, 0.0422)
(48, 0.0422)
(49, 0.0422)
(50, 0.0422)
(51, 0.0422)
(52, 0.0423)
(53, 0.0423)
(54, 0.0423)
(55, 0.0423)
(56, 0.0423)
(57, 0.0423)
(58, 0.0423)
(59, 0.0423)
(60, 0.0423)
(61, 0.0423)
(62, 0.0423)
(63, 0.0423)
(64, 0.0423)
(65, 0.0423)
(66, 0.0423)
(67, 0.0423)
(68, 0.0423)
(69, 0.0424)
(70, 0.0424)
(71, 0.0424)
(72, 0.0424)
(73, 0.0424)
(74, 0.0424)
(75, 0.0424)
(76, 0.0424)
(77, 0.0424)
(78, 0.0424)
(79, 0.0424)
(80, 0.0424)
(81, 0.0424)
(82, 0.0424)
(83, 0.0424)
(84, 0.0424)
(85, 0.0424)
(86, 0.0424)
(87, 0.0424)
(88, 0.0424)
(89, 0.0424)
(90, 0.0424)
(91, 0.0424)
(92, 0.0424)
(93, 0.0424)
(94, 0.0424)
(95, 0.0425)
(96, 0.0425)
(97, 0.0425)
(98, 0.0425)
(99, 0.0425)
(100, 0.0425)
    };
    % \addlegendentry{$\lambda=0.95$}

    % Plot 2 - lambda
    \nextgroupplot[
        ylabel={$R(n,\lambda)$},
        ylabel style={at={(axis description cs:-0.15,0.5)}},
        xlabel={$n$},
        xlabel style={
            at={(current axis.right of origin)},
            anchor=west
        },
        % Small work around, the limits are quirky
        ymin=0, ymax=1.2,
        ytick={0.001, 0.25, 0.5, 0.75, 1},
        yticklabels={0, 0.25, 0.5, 0.75, 1},
        xmin=0, xmax=100,
        xtick={25, 50, 75, 99.995},
        xticklabels={25, 50, 75, 100}
    ]
    \addplot[
        color=cbblue,
        solid,
    ] coordinates {
(1, 0.3984)
(2, 0.5434)
(3, 0.8075)
(4, 0.7264)
(5, 0.7465)
(6, 0.876)
(7, 0.8306)
(8, 0.8226)
(9, 0.8986)
(10, 0.8835)
(11, 0.8635)
(12, 0.9106)
(13, 0.9165)
(14, 0.8898)
(15, 0.9189)
(16, 0.9396)
(17, 0.9087)
(18, 0.9254)
(19, 0.9572)
(20, 0.9234)
(21, 0.9311)
(22, 0.9713)
(23, 0.9354)
(24, 0.9363)
(25, 0.9717)
(26, 0.9456)
(27, 0.9411)
(28, 0.9678)
(29, 0.9545)
(30, 0.9457)
(31, 0.9653)
(32, 0.9626)
(33, 0.9502)
(34, 0.9639)
(35, 0.9699)
(36, 0.9546)
(37, 0.9634)
(38, 0.9767)
(39, 0.9588)
(40, 0.9634)
(41, 0.9831)
(42, 0.963)
(43, 0.964)
(44, 0.9851)
(45, 0.9672)
(46, 0.9649)
(47, 0.9817)
(48, 0.9713)
(49, 0.9663)
(50, 0.9791)
(51, 0.9754)
(52, 0.9679)
(53, 0.9773)
(54, 0.9795)
(55, 0.9697)
(56, 0.9761)
(57, 0.9835)
(58, 0.9718)
(59, 0.9753)
(60, 0.9876)
(61, 0.974)
(62, 0.9751)
(63, 0.9904)
(64, 0.9764)
(65, 0.9752)
(66, 0.9876)
(67, 0.9789)
(68, 0.9756)
(69, 0.9854)
(70, 0.9815)
(71, 0.9763)
(72, 0.9837)
(73, 0.9842)
(74, 0.9773)
(75, 0.9824)
(76, 0.987)
(77, 0.9785)
(78, 0.9816)
(79, 0.9899)
(80, 0.9798)
(81, 0.9811)
(82, 0.9929)
(83, 0.9814)
(84, 0.9809)
(85, 0.991)
(86, 0.9831)
(87, 0.9809)
(88, 0.989)
(89, 0.985)
(90, 0.9812)
(91, 0.9875)
(92, 0.987)
(93, 0.9818)
(94, 0.9863)
(95, 0.9891)
(96, 0.9826)
(97, 0.9854)
(98, 0.9914)
(99, 0.9835)
(100, 0.9848)
    };
    \addlegendentry{$\lambda=0.25$}
    
    \addplot[
        color=black,
        solid,
    ] coordinates {
(1, 0.4444)
(2, 0.5333)
(3, 0.6857)
(4, 0.8635)
(5, 0.7792)
(6, 0.7752)
(7, 0.8441)
(8, 0.9258)
(9, 0.8604)
(10, 0.8528)
(11, 0.8973)
(12, 0.9486)
(13, 0.8979)
(14, 0.8907)
(15, 0.9235)
(16, 0.9606)
(17, 0.9195)
(18, 0.9131)
(19, 0.9391)
(20, 0.968)
(21, 0.9335)
(22, 0.9279)
(23, 0.9495)
(24, 0.973)
(25, 0.9434)
(26, 0.9384)
(27, 0.9568)
(28, 0.9767)
(29, 0.9508)
(30, 0.9463)
(31, 0.9623)
(32, 0.9795)
(33, 0.9564)
(34, 0.9523)
(35, 0.9665)
(36, 0.9817)
(37, 0.9609)
(38, 0.9572)
(39, 0.9699)
(40, 0.9835)
(41, 0.9645)
(42, 0.9611)
(43, 0.9727)
(44, 0.9849)
(45, 0.9676)
(46, 0.9644)
(47, 0.975)
(48, 0.9861)
(49, 0.9701)
(50, 0.9672)
(51, 0.9769)
(52, 0.9872)
(53, 0.9723)
(54, 0.9695)
(55, 0.9786)
(56, 0.9881)
(57, 0.9742)
(58, 0.9716)
(59, 0.98)
(60, 0.9888)
(61, 0.9758)
(62, 0.9734)
(63, 0.9813)
(64, 0.9895)
(65, 0.9773)
(66, 0.975)
(67, 0.9824)
(68, 0.9901)
(69, 0.9785)
(70, 0.9764)
(71, 0.9834)
(72, 0.9907)
(73, 0.9797)
(74, 0.9776)
(75, 0.9843)
(76, 0.9911)
(77, 0.9807)
(78, 0.9787)
(79, 0.9851)
(80, 0.9916)
(81, 0.9816)
(82, 0.9798)
(83, 0.9858)
(84, 0.992)
(85, 0.9825)
(86, 0.9807)
(87, 0.9864)
(88, 0.9923)
(89, 0.9833)
(90, 0.9815)
(91, 0.987)
(92, 0.9927)
(93, 0.984)
(94, 0.9823)
(95, 0.9876)
(96, 0.993)
(97, 0.9846)
(98, 0.983)
(99, 0.9881)
(100, 0.9932)
    };
    \addlegendentry{$\lambda=0.50$}

    \addplot[
        color=cbblue,
        dashed,
    ] coordinates {
(1, 0.5459)
(2, 0.5994)
(3, 0.6273)
(4, 0.6922)
(5, 0.8033)
(6, 0.9587)
(7, 0.9031)
(8, 0.8519)
(9, 0.8343)
(10, 0.848)
(11, 0.8908)
(12, 0.9604)
(13, 0.961)
(14, 0.9176)
(15, 0.8969)
(16, 0.8976)
(17, 0.9185)
(18, 0.9582)
(19, 0.9879)
(20, 0.9497)
(21, 0.9284)
(22, 0.9231)
(23, 0.9332)
(24, 0.9578)
(25, 0.9964)
(26, 0.9697)
(27, 0.9482)
(28, 0.9395)
(29, 0.9432)
(30, 0.9587)
(31, 0.9856)
(32, 0.984)
(33, 0.9624)
(34, 0.9515)
(35, 0.9509)
(36, 0.9603)
(37, 0.9794)
(38, 0.9951)
(39, 0.9735)
(40, 0.961)
(41, 0.9574)
(42, 0.9624)
(43, 0.9759)
(44, 0.9975)
(45, 0.9826)
(46, 0.969)
(47, 0.9632)
(48, 0.9649)
(49, 0.9741)
(50, 0.9905)
(51, 0.9906)
(52, 0.976)
(53, 0.9685)
(54, 0.9677)
(55, 0.9735)
(56, 0.9858)
(57, 0.9976)
(58, 0.9824)
(59, 0.9734)
(60, 0.9706)
(61, 0.9737)
(62, 0.9828)
(63, 0.9976)
(64, 0.9882)
(65, 0.9781)
(66, 0.9736)
(67, 0.9746)
(68, 0.9809)
(69, 0.9926)
(70, 0.9937)
(71, 0.9826)
(72, 0.9767)
(73, 0.9759)
(74, 0.98)
(75, 0.9889)
(76, 0.9989)
(77, 0.987)
(78, 0.9799)
(79, 0.9775)
(80, 0.9797)
(81, 0.9864)
(82, 0.9976)
(83, 0.9913)
(84, 0.9832)
(85, 0.9795)
(86, 0.98)
(87, 0.9847)
(88, 0.9937)
(89, 0.9955)
(90, 0.9865)
(91, 0.9816)
(92, 0.9807)
(93, 0.9838)
(94, 0.9908)
(95, 0.9997)
(96, 0.9899)
(97, 0.984)
(98, 0.9818)
(99, 0.9834)
(100, 0.9887)
    };
    \addlegendentry{$\lambda=0.75$}
    
    \addplot[
        color=black,
        dashed,
    ] coordinates {
(1, 0.7896)
(2, 0.9483)
(3, 0.9732)
(4, 0.957)
(5, 0.928)
(6, 0.8968)
(7, 0.8682)
(8, 0.844)
(9, 0.8252)
(10, 0.8122)
(11, 0.8049)
(12, 0.8032)
(13, 0.807)
(14, 0.8161)
(15, 0.8303)
(16, 0.8493)
(17, 0.873)
(18, 0.9011)
(19, 0.9335)
(20, 0.9699)
(21, 1.0103)
(22, 1.0544)
(23, 1.1021)
(24, 1.0752)
(25, 1.0464)
(26, 1.0208)
(27, 0.9982)
(28, 0.9785)
(29, 0.9617)
(30, 0.9478)
(31, 0.9365)
(32, 0.9279)
(33, 0.9218)
(34, 0.9182)
(35, 0.9171)
(36, 0.9183)
(37, 0.9218)
(38, 0.9276)
(39, 0.9356)
(40, 0.9457)
(41, 0.9579)
(42, 0.9721)
(43, 0.9884)
(44, 1.0066)
(45, 1.0267)
(46, 1.0487)
(47, 1.0483)
(48, 1.0309)
(49, 1.0152)
(50, 1.0012)
(51, 0.989)
(52, 0.9784)
(53, 0.9694)
(54, 0.9621)
(55, 0.9563)
(56, 0.9521)
(57, 0.9495)
(58, 0.9483)
(59, 0.9486)
(60, 0.9504)
(61, 0.9536)
(62, 0.9582)
(63, 0.9641)
(64, 0.9715)
(65, 0.9802)
(66, 0.9902)
(67, 1.0015)
(68, 1.0141)
(69, 1.0279)
(70, 1.0383)
(71, 1.0256)
(72, 1.0141)
(73, 1.0038)
(74, 0.9946)
(75, 0.9867)
(76, 0.9799)
(77, 0.9742)
(78, 0.9697)
(79, 0.9663)
(80, 0.964)
(81, 0.9627)
(82, 0.9626)
(83, 0.9635)
(84, 0.9654)
(85, 0.9684)
(86, 0.9724)
(87, 0.9774)
(88, 0.9834)
(89, 0.9904)
(90, 0.9983)
(91, 1.0073)
(92, 1.0171)
(93, 1.028)
(94, 1.0232)
(95, 1.014)
(96, 1.0056)
(97, 0.9982)
(98, 0.9917)
(99, 0.9861)
(100, 0.9814)
    };
    \addlegendentry{$\lambda=0.95$}

    \end{groupplot}
    \end{tikzpicture}
    \caption{The performance of the optimal order statistic policy (left). Agreement with the first-order convergence constant (right).}
    \label{fig:performance}
\end{figure}

\section{Proof of Theorem \ref{thm:general-pricing-impossibility}}
\label{appendix:proof-general-pricing-impossibility}
Proving Theorem \ref{thm:general-pricing-impossibility} entails showing that no (possibly randomised) sample-based pricing strategy can get closer than $O(1/n)$ to the full-information setting. To prove this we use the same principle as Huang et al. \cite{huang2018making}: 
reduce the problem to a classification problem and use known impossibility results.
Specifically, our results are an application of Le Cam’s two-point method \cite{tsybakov2008lower}.
The proofs of ancillary Lemmas \ref{lemma:ell-quad-bound} and \ref{lemma:kl-div-bound} are deferred to the end of the section.

\begin{proof}[Proof of Theorem \ref{thm:general-pricing-impossibility}]
    
To start, fix $\lambda\in(0,1)$ and let $F_\alpha(x)=1-\left(1+\frac{\lambda}{1-\lambda}\frac{x}{\alpha}\right)^{-1/\lambda}$. Note that $F_\alpha\in \mathcal{P}_\lambda$ and that $\{F_\alpha:\alpha>0\}\subset \mathcal{P}_\lambda$, so
%\pk{I might be messing up the sup's and inf's, but do you not get $$(1-\lambda)^{1/\lambda}-\sup_{\kappa\in \mathcal{K}(\Rplus^n,\Rplus)}\sup_{\alpha>0}\left((1-\lambda)^{1/\lambda}-T(\kappa,F_\alpha)\right),$$ if you rewrite below, because the adversary want to make the gap with $(1-\lambda)^{-1/\lambda}$ as large as possible? }\dn{typo it is inf-sup yes.} \pk{Ah,yes, typo on my end as well. Thanks}
$$
Q_{n,\lambda}\le\sup_{\kappa\in \mathcal{K}(\Rplus^n,\Rplus)}\inf_{F\in\{F_\alpha:\alpha>0\}}T(\kappa,F)=(1-\lambda)^{1/\lambda}-\inf_{\kappa\in \mathcal{K}(\Rplus^n,\Rplus)}\sup_{\alpha>0}\left((1-\lambda)^{1/\lambda}-T(\kappa,F_\alpha)\right).
$$
Using that the mean of $V \sim F_{\alpha}$ is $\alpha$, which can be checked by computing $\E[V] = \int_0^\infty (1-F_\alpha(x))\mathrm{d}x$, we find that
$$
(1-\lambda)^{1/\lambda}-T(\kappa,F_\alpha) = (1-\lambda)^{1/\lambda} - \frac{\E[\Rev(F,Y_{V_1,\dots,V_n})]}{\E[V]} =\E[\ell_\alpha(Y_{\boldsymbol{V}})],
$$
where 
$$\ell_\alpha(t)=(1-\lambda)^{1/\lambda}-\frac{1}{\alpha}\Rev(F_\alpha,t)=(1-\lambda)^{1/\lambda}-\frac{t}{\alpha}\left(1+\frac{\lambda}{1-\lambda}\frac{t}{\alpha}\right)^{-1/\lambda}.$$
and $Y_{\boldsymbol{V}} = Y_{V_1,\dots,V_n}$ a (possibly randomised) pricing rule. Define $\ell(t)=\ell_\alpha(\alpha t)$

In what comes next, we will sometimes use the following result on the loss function $\ell$, that yields $\ell_\alpha(\alpha+t)\ge c_0\cdot(t/\alpha)^2$ for $t\in[-\alpha,\alpha/4]$ with $c_0$ a constant that depends only on $\lambda$. 
% \pk{Not completely sure if I understand the application of the Lemma, because if I would do the proof with $\alpha$ instead of $1$, do I not get a constant depending on $\alpha$? I might be missing something simple, but I do not directly see how substitutions get you from $\ell$ to $\ell_\alpha$ for general $\alpha$.}
\begin{lemma}
    \label{lemma:ell-quad-bound}
    Let $\lambda\in(0,1)$ and define
    $$\ell(t):=(1-\lambda)^{1/\lambda}-t\left(1+\frac{\lambda}{1-\lambda}t\right)^{-1/\lambda}.$$
    Then for all $t\in[-1,1/4]$, we have that $\ell(1+t)\ge \frac{1}{2}\ell''(5/4)t^2$.
\end{lemma}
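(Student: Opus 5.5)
The approach is a second-order Taylor expansion of $\ell$ around its minimiser $t=1$, combined with a monotonicity property of $\ell''$.

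\textbf{Step 1: $t=1$ is a double zero.} Write $\beta=\lambda/(1-\lambda)$ and $\phi(t)=t(1+\beta t)^{-1/\lambda}$, so that $\ell=(1-\lambda)^{1/\lambda}-\phi$. At $t=1$ we have $1+\beta=1/(1-\lambda)$, hence $\phi(1)=(1-\lambda)^{1/\lambda}$ and $\ell(1)=0$. Differentiating,
\[
\phi'(t)=(1+\beta t)^{-1/\lambda-1}\Bigl(1+\beta t-\tfrac{\beta}{\lambda}t\Bigr)=(1+\beta t)^{-1/\lambda-1}\bigl(1-(1-\beta)t\bigr)\cdot\text{(after simplification)}.
\]
Since $\beta/\lambda=1/(1-\lambda)=1+\beta$, the bracket equals $1-t$. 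Thus $\phi'(1)=0$ and so $\ell'(1)=0$.

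\textbf{Step 2: Taylor with Lagrange remainder.} For each $t\in[-1,1/4]$ there is a $\xi$ between $1$ and $1+t$ with $\ell(1+t)=\tfrac12\ell''(\xi)t^2$. Here $\xi\in[0,5/4]$. It therefore suffices to show $\ell''(\xi)\ge\ell''(5/4)$ for all $\xi\in[0,5/4]$, i.e.\ that $\ell''=-\phi''$ is non-increasing on $[0,5/4]$, or at least that $\ell''$ attains its minimum over $[0,5/4]$ at the right endpoint.

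\textbf{Step 3: compute $\ell''$.} From $\phi'(t)=(1-t)(1+\beta t)^{-1/\lambda-1}$,
\[
-\phi''(t)=(1+\beta t)^{-1/\lambda-2}\Bigl[(1+\beta t)+(1-t)\beta\bigl(\tfrac1\lambda+1\bigr)\Bigr].
\]
The bracket is affine in $t$, equal to $1+\beta(1+1/\lambda)>0$ at $t=0$. At $t=5/4$ it equals $1+\tfrac54\beta-\tfrac14\beta(1+1/\lambda)$, which is positive after simplification using $\beta/\lambda=1+\beta$. So $\ell''>0$ on $[0,5/4]$.

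\textbf{Step 4: monotonicity (the main obstacle).} Showing that $\ell''$ decreases on $[0,5/4]$ is where the real work lies. The plan is to write $\ell''(t)=(1+\beta t)^{-1/\lambda-2}L(t)$ with $L$ affine and decreasing: its slope is $\beta-\beta(1+1/\lambda)=-\beta/\lambda<0$. Both factors are then positive and decreasing on $[0,5/4]$, so their product is decreasing, giving $\ell''(\xi)\ge\ell''(5/4)$.

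If the sign check on $L(5/4)$ fails for some $\lambda$, the fallback is a cruder argument: split $[0,5/4]$ into $[0,1]$ and $[1,5/4]$ and bound $\ell''$ directly on each piece. Once the decreasing property holds, the lemma follows from the Lagrange remainder in Step 2.
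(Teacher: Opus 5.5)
Your proposal is correct and follows the paper's proof. Both arguments use $\ell(1)=\ell'(1)=0$, a Lagrange-remainder Taylor expansion about $t=1$, and the fact that $\ell''(t)=\frac{2-t}{1-\lambda}(1+\frac{\lambda}{1-\lambda}t)^{-1/\lambda-2}$ is decreasing on $[0,5/4]$: the paper shows this via $\ell'''<0$ on $(0,3)$, you via a product of two positive decreasing factors, so your Step 4 fallback is never needed. The only slip is the intermediate expression $1-(1-\beta)t$ in Step 1, which is wrong as written and should read $1+\beta t-(1+\beta)t=1-t$.
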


Let $\alpha_0=1$, $\alpha_n=1+t_n$, $t_n=h/\sqrt{n}$, and $h=1/2$. Let $m_n=(\alpha_0+\alpha_n)/2$ and define $\phi_n=\indicator{Y_{\boldsymbol{V}}\ge m_n}$. We use the so-called test $\phi_n$ to derive lower bounds on $\E[\ell_{a_j}(Y_{\boldsymbol{V}})]$ for $j = 0,n$. There are two cases to consider.
First, consider the event $\{\phi_n=1\}$. In that case, 
$$Y_{\boldsymbol{V}}\indicator{\phi_n = 1}\ge m_n\indicator{\phi_n = 1}=(\alpha_0+t_n/2)\indicator{\phi_n = 1}.$$ 
Because the loss function $\ell_{\alpha_0}$ is increasing after $\alpha_0$, we have that 
$$\ell_{\alpha_0}(Y_{\boldsymbol{V}})\indicator{\phi_n = 1}\ge\ell_{\alpha_0}(\alpha_0+t_n/2)\indicator{\phi_n = 1}.$$ 
Note that $t_n/2=1/(4\sqrt{n})\le 1/4= \alpha_0/4$, so $\ell_{\alpha_0}(\alpha_0+t_n/2)\ge \frac{1}{4}c_0(t_n/\alpha_0)^2=\frac{1}{4}c_0t_n^2$. This yields
\begin{align*}
    \E[\ell_{\alpha_0}(Y_{\boldsymbol{V}})]&\ge \E[\ell_{\alpha_0}(Y_{\boldsymbol{V}})\indicator{\phi_n=1}]
    \\
    &\ge \E\left[\frac{c_0}{4}t_n^2\indicator{\phi_n=1}\right]=\frac{c_0}{4}t_n^2\Prob_{\alpha_0}(\phi_n=1).
\end{align*}
Second, consider the event $\{\phi_n=0\}$. In that case, $Y_{\boldsymbol{V}}<m_n=\alpha_n-t_n/2$. Because the loss function $\ell_{\alpha_n}$ is decreasing before $\alpha_n$, we have that 
$$
\ell_{\alpha_n}(Y_{\boldsymbol{V}})\indicator{\phi_n = 0}\ge \ell_{\alpha_n}(\alpha_n-t_n/2)\indicator{\phi_n = 0}\ge\frac{1}{4}c_0\frac{t_n^2}{a_n^2}\indicator{\phi_n = 0}\ge \frac{1}{9}c_0t_n^2\indicator{\phi_n = 0}.
$$
Using $\alpha_n\le3/2$ for the last inequality. We then analogously find
$$
\E[\ell_{\alpha_n}(Y_{\boldsymbol{V}})]\ge \frac{1}{9}c_0t_n^2\Prob_{\alpha_n}(\phi_n=0).
$$
With these steps we can do the main step.
\begin{align*}
    \inf_{\kappa\in \mathcal{K}(\Rplus^n,\Rplus)}\sup_{\alpha>0}\E[\ell_{\alpha}(Y_{\boldsymbol{V}})]&\ge \inf_{\kappa\in \mathcal{K}(\Rplus^n,\Rplus)}\max\{\E[\ell_{\alpha_0}(Y_{\boldsymbol{V}})],\E[\ell_{\alpha_n}(Y_{\boldsymbol{V}})]\}\\
    &\ge \frac{1}{2}\inf_{\kappa\in \mathcal{K}(\Rplus^n,\Rplus)}(\E[\ell_{\alpha_0}(Y_{\boldsymbol{V}})]+\E[\ell_{\alpha_n}(Y_{\boldsymbol{V}})])\\
    &\ge \frac{1}{2}\inf_{\kappa\in \mathcal{K}(\Rplus^n,\Rplus)}\left(\frac{c_0}{4}t_n^2\Prob_{\alpha_0}(\phi_n=1)+\frac{c_0}{9}t_n^2\Prob_{\alpha_n}(\phi_n=0)\right)\\
    &\ge \frac{c_0}{18}t_n^2\inf_{\kappa\in \mathcal{K}(\Rplus^n,\Rplus)}\left(\Prob_{\alpha_0}(\phi_n=1)+\Prob_{\alpha_n}(\phi_n=0)\right)\\
    &= \frac{c_0}{18}t_n^2\inf_{\kappa\in \mathcal{K}(\Rplus^n,\Rplus)}\left(1+\Prob_{\alpha_0}(\phi_n=1)-\Prob_{\alpha_n}(\phi_n=1)\right)\\
    &= \frac{c_0}{18}t_n^2\left(1-\sup_{\kappa\in \mathcal{K}(\Rplus^n,\Rplus)}\left(\Prob_{\alpha_n}(\phi_n=1)-\Prob_{\alpha_0}(\phi_n=1)\right)\right)\\
    &\ge  \frac{c_0}{18}t_n^2\left(1-\vert\vert P_{\alpha_n}-P_{\alpha_0} \vert\vert_{\mathrm{TV}}\right).
\end{align*}
The last step follows from the fact that the total variation distance provides an upper bound on the difference between the probabilities of any given event under two distributions. Using Pinsker's inequality and the additivity of the Kullback-Leibler divergence yields
$$
\vert\vert P_{\alpha_n}-P_{\alpha_0}\vert\vert_{\mathrm{TV}}\le\sqrt{\frac{1}{2}D_{\mathrm{KL}}(P_{\alpha_0}||P_{\alpha_n})}=\sqrt{\frac{1}{2}nD_{\mathrm{KL}}(F_{\alpha_0}||F_{\alpha_n})}.
$$
We next use the following lemma to upper bound the Kullback-Leibler divergence.

\begin{lemma}
    \label{lemma:kl-div-bound}
    Let $\lambda\in(0,1)$, $F_\alpha(x)=1-\left(1+\frac{\lambda}{1-\lambda}\frac{x}{\alpha}\right)^{-1/\lambda}$ and let $\epsilon \in (0, 1)$, then
    $$
    D_{\mathrm{KL}}(F_{1}||F_{1+\epsilon})\le \frac{\epsilon^2}{1-\epsilon}.
    $$
\end{lemma}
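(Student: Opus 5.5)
The plan is to compute the divergence essentially exactly. The family $F_\alpha$ is a scale family, and under $F_1$ the transformed variable $S=(1+cV)^{-1/\lambda}$ is uniform on $(0,1)$, where $c:=\lambda/(1-\lambda)$. That turns the expectation into an explicit power series.

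\textbf{Step 1: the log-likelihood ratio.} Write $a:=1/\lambda+1$. The density is
$$f_\alpha(x)=\frac{1}{(1-\lambda)\alpha}\Bigl(1+\frac{cx}{\alpha}\Bigr)^{-a}.$$
Put $\alpha=1+\epsilon$ and use $(1+cx)/(1+cx/\alpha)=\alpha(1+cx)/(\alpha+cx)$. This gives
$$\log\frac{f_1(x)}{f_{1+\epsilon}(x)}=-\frac{1}{\lambda}\log(1+\epsilon)+a\log\Bigl(1+\frac{\epsilon}{1+cx}\Bigr).$$

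\textbf{Step 2: reduce the expectation to moments of $W$.} Let $V\sim F_1$ and set $W:=(1+cV)^{-1}=S^{\lambda}\in(0,1]$. Since $1-F_1(V)=S$ is uniform, we get $\E[W^k]=\int_0^1 s^{k\lambda}\,\dd s=1/(1+k\lambda)$. Therefore
$$D_{\mathrm{KL}}(F_1\|F_{1+\epsilon})=-\frac1\lambda\log(1+\epsilon)+a\,\E[\log(1+\epsilon W)].$$

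\textbf{Step 3: expand in powers of $\epsilon$.} Because $0<\epsilon W<1$, both logarithms have convergent Taylor series. Interchanging sum and expectation is justified by dominated convergence, since $\sum_k\epsilon^k/k<\infty$. This yields
$$D_{\mathrm{KL}}=\sum_{k\ge1}\frac{(-1)^{k+1}\epsilon^k}{k}\Bigl(\frac{1+\lambda}{\lambda(1+k\lambda)}-\frac1\lambda\Bigr)=\sum_{k\ge1}\frac{(-1)^{k+1}\epsilon^k}{k}\cdot\frac{1-k}{1+k\lambda}=\sum_{k\ge2}(-1)^k\epsilon^k\frac{k-1}{k(1+k\lambda)}.$$
The $k=1$ term vanishes, as it must for a divergence at its minimum.

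\textbf{Step 4: bound the series.} Each coefficient satisfies $0\le \frac{k-1}{k(1+k\lambda)}\le1$. Bounding the series term by term in absolute value gives
$$D_{\mathrm{KL}}\le\sum_{k\ge2}\epsilon^k=\frac{\epsilon^2}{1-\epsilon}.$$

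\textbf{Main obstacle.} The only delicate point is Step 3: the algebra must be organised so that the $1/\lambda$ singularities cancel. A cruder bound such as $\log(1+u)\le u$ leaves a factor $1/\lambda$ in front of $\epsilon-\log(1+\epsilon)$, which is not uniform in $\lambda$. The exact moment formula $\E[W^k]=1/(1+k\lambda)$ is what makes the cancellation visible coefficient by coefficient.
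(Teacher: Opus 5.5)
Your proposal is correct and follows essentially the same route as the paper: you pass to the uniform variable $1-F_1(V)$, reduce the divergence to $-\frac{1}{\lambda}\log(1+\epsilon)+\frac{1+\lambda}{\lambda}\int_0^1\log(1+\epsilon x^{\lambda})\,\mathrm{d}x$ (the paper writes this as an integral of $\log(\epsilon+x^{-\lambda})$), obtain the same series $\sum_{k\ge 2}\frac{(-1)^k(k-1)}{k(1+k\lambda)}\epsilon^k$, and bound each coefficient by one. Your dominated-convergence justification for exchanging sum and expectation is a detail the paper leaves implicit.
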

Applying Lemma \ref{lemma:kl-div-bound} with $\epsilon = t_n$, we obtain
$$
\vert\vert P_{\alpha_n}-P_{\alpha_0}\vert\vert_{\mathrm{TV}}\le\sqrt{\frac{n}{2}\frac{t_n^2}{1-t_n}}=\sqrt{\frac{h^2}{2\left(1-h/\sqrt{n}\right)}}\le \sqrt{\frac{h^2}{2(1-h)}}=\frac{1}{2}.
$$
Using $h=1/2$ in the last step. With this bound on the total variation distance we find that
$$
    \inf_{\kappa\in \mathcal{K}(\Rplus^n,\Rplus)}\sup_{\alpha>0}\E[\ell_{\alpha}(Y_{\boldsymbol{V}})]\ge  \frac{c_0}{36}t_n^2=\frac{c_0}{144}\frac{1}{n}.
$$
This shows that
\begin{align*}
    Q_{n,\lambda}&\le (1-\lambda)^{1/\lambda}-\frac{c_0}{144}\frac{1}{n}=(1-\lambda)^{1/\lambda}-\frac{\ell''(5/4)}{288}\frac{1}{n}\\
    &=(1-\lambda)^{1/\lambda}-\frac{1}{384\left(1-\lambda\right)}\left(1+\frac{\lambda}{1-\lambda}\frac{5}{4}\right)^{-\frac{1}{\lambda}-2}\frac{1}{n}.  \qedhere
\end{align*}
\end{proof}

We conclude this section with the proofs of Lemmas \ref{lemma:ell-quad-bound} and \ref{lemma:kl-div-bound}.

\begin{proof}[Proof of \ref{lemma:ell-quad-bound}]
    The derivatives of $\ell$ are
    \begin{align*}
        \ell'(t)&=(1-t)\left(1+\frac{\lambda}{1-\lambda}t\right)^{-\frac{1}{\lambda}-1},\\
        \ell''(t)&=\frac{2-t}{1-\lambda}\left(1+\frac{\lambda}{1-\lambda}t\right)^{-\frac{1}{\lambda}-2},\\
        \ell'''(t)&=\frac{\left(1+\lambda\right)\left(t-3\right)}{\left(1-\lambda\right)^{2}}\left(1+\frac{\lambda}{1-\lambda}t\right)^{-\frac{1}{\lambda}-3}.
    \end{align*}
    Note that $\ell(1)=0$ and that $\ell'(1)=0$. By Taylor's theorem, for each $t$ there exists a
    $\xi_t\in(1,1+t)$ (or $\xi_t\in(1+t,1)$ if $t<0$) such that
    $$
    \ell(1+t)=\ell(1)+\ell'(1)t+\frac{1}{2}\ell''(\xi_t)t^2=\frac{1}{2}\ell''(\xi_t)t^2.
    $$
    Now, $\ell'''(s)<0$ for $s\in(0,3)$, so $\ell''$ is decreasing on that interval. Consequently, for $t\in[-1,1/4]$ it holds that $\ell''(\xi_t)\ge \ell''(1+t)\ge \ell''(5/4)$. So for $t\in[-1,1/4]$,
    \begin{equation*}
        \ell(1+t)\ge \frac{1}{2}\ell''(5/4)t^2.\qedhere
    \end{equation*}
\end{proof}

\medskip

\begin{proof}[Proof of Lemma \ref{lemma:kl-div-bound}]
    Let $A(x)=\frac{1-\lambda}{\lambda}\left(x^{-\lambda}-1\right)$
denote the inverse of $1-F_{1}(x)$. A direct computation informs us that
\begin{align*}
    D_{\mathrm{KL}}(F_{1}||F_{1+\epsilon})&:=\int_0^\infty f_{1}(t)\log\frac{f_{1}(t)}{f_{1+\epsilon}(t)}\dd t
    =\int_0^1 \log\frac{f_{1}(A(x))}{f_{1+\epsilon}(A(x))}\dd x\\
    &=\int_0^1 \log
    \frac{\frac{1}{1-\lambda}\left(1+\frac{\lambda}{1-\lambda}A(x)\right)^{-1/\lambda-1}}{\frac{1}{(1-\lambda)(1+\epsilon)}\left(1+\frac{\lambda}{1-\lambda}\frac{A(x)}{1+\epsilon}\right)^{-1/\lambda-1}}\dd x\\
    &=\int_0^1 \log
    \frac{x^{1+\lambda}}{\frac{1}{1+\epsilon}\left(1+\frac{x^{-\lambda}-1}{1+\epsilon}\right)^{-1/\lambda-1}}\dd x\\
    &=(1+\lambda)\int_0^1\log x\;\dd x-\int_0^1 \log
    \frac{1}{1+\epsilon}\left(1+\frac{x^{-\lambda}-1}{1+\epsilon}\right)^{-1/\lambda-1}\dd x\\
    &=\log(1+\epsilon)-(1+\lambda)+\frac{1+\lambda}{\lambda}\int_0^1 \log
    \left(1+\frac{x^{-\lambda}-1}{1+\epsilon}\right)\dd x\\
    &=-\frac{1}{\lambda}\log(1+\epsilon)-(1+\lambda)+\frac{1+\lambda}{\lambda}\int_0^1 \log
    \left(\epsilon+x^{-\lambda}\right)\dd x\\
    &=\sum_{k=2}^{\infty}\frac{\left(-1\right)^{k}\left(k-1\right)}{k\left(\lambda k+1\right)}\epsilon^{k}\le \sum_{k=2}^{\infty}\epsilon^{k}=\frac{\epsilon^2}{1-\epsilon}.\qedhere
\end{align*}
\end{proof}

\end{document}